\title{Ordered $k$-Median with Outliers and Fault-Tolerance}
\author{Shichuan Deng\thanks{IIIS, Tsinghua University, Email: \texttt{dsc15@mails.tsinghua.edu.cn}} \and Qianfan Zhang\thanks{IIIS, Tsinghua University, Email: \texttt{zqf18@mails.tsinghua.edu.cn}}}
\date{}
\DeclareMathOperator*{\argmin}{arg\,min}
\newcommand*{\E}{\mathbb{E}}
\newcommand{\Ball}[2]{\mathsf{Ball}_{#1}\left(#2\right)}
\newcommand{\opt}{\mathsf{OPT}}
\newcommand{\fopt}{F^\star}
\newcommand{\copt}{C^\star}
\newcommand{\cpt}{C'^\star}
\newcommand{\da}{\downarrow}
\newcommand{\calI}{\mathcal{I}}
\newcommand{\calC}{\mathcal{C}}
\newcommand{\calF}{\mathcal{F}}
\newcommand{\calB}{\mathcal{B}}
\newcommand{\calL}{\mathcal{L}}
\newcommand{\calU}{\mathcal{U}}
\newcommand{\calM}{\mathcal{M}}
\newcommand{\scrC}{\mathscr{C}}
\newcommand{\scrF}{\mathscr{F}}
\newcommand{\scrJ}{\mathscr{J}}
\newcommand{\scrI}{\mathscr{I}}
\newcommand{\ball}{\mathsf{Ball}}
\newcommand{\ww}{\widetilde{w}}
\newcommand{\hw}{\hat{w}}
\newcommand{\ow}{\overline{w}}
\newcommand{\wt}{\mathsf{wt}}
\newcommand{\nextp}[1]{\mathsf{next}(#1)}
\newcommand{\que}{\mathcal{Q}}
\newcommand{\cfull}{C_{\text{full}}}
\newcommand{\cpart}{C_{\text{part}}}
\newcommand{\ccore}{C_{\text{core}}}
\newcommand{\pos}{\mathsf{POS}}
\newcommand{\topl}[2]{\mathrm{Top}_{#1}(#2)}
\newcommand{\vcc}{\vec{c}}
\newcommand{\vco}{\vec{o}}
\newcommand{\vce}{\vec{\xi}}
\newcommand{\dav}{d_\mathsf{av}}
\newcommand{\davt}[1]{{\calL_{#1}}_\mathsf{av}}
\newcommand{\dmax}{d_\mathsf{max}}
\newcommand{\dmin}{d_\mathsf{min}}
\newcommand{\etal}{\textit{et~al.}\xspace}
\newcommand{\etalcite}[1]{\textit{et~al.}~\cite{#1}}
\newcommand{\orderedcluster}{{\small\textsc{OrdClst}}\xspace}
\newcommand{\ftkc}{{\small\textsc{FT$k$Cen}}\xspace}
\newcommand{\ftkm}{{\small\textsc{FT$k$Med}}\xspace}
\newcommand{\rkc}{{\small\textsc{R$k$Cen}}\xspace}
\newcommand{\rkm}{{\small\textsc{R$k$Med}}\xspace}
\newcommand{\om}{{\small\textsc{O$k$Med}}\xspace}
\newcommand{\oftm}{\mathsf{FTO\mathit{k}Med}}
\newcommand{\oko}{\mathsf{RO\mathit{k}Med}}
\newcommand{\omm}{\mathsf{OMatMed}}
\newcommand{\okm}{\mathsf{OKnapMed}}
\newtheorem{theorem}{Theorem}
\newtheorem{lemma}[theorem]{Lemma}
\newtheorem{fact}[theorem]{Fact}
\newtheorem{property}[theorem]{Property}
\newtheorem*{informal}{Informal Theorem}
\theoremstyle{definition}
\newtheorem{definition}[theorem]{Definition}
\theoremstyle{remark}
\newtheorem*{remark*}{Remark}
\newtheorem*{note*}{Note}
\begin{document}
\maketitle
\begin{abstract}
	In this paper, we study two natural generalizations of the \emph{ordered $k$-median} problem, named \emph{robust ordered $k$-median} and \emph{fault-tolerant ordered $k$-median}. In ordered $k$-median, given a finite metric space $(X,d)$, we seek to open a set of $k$ facilities $S\subseteq X$ which induces a \emph{service cost vector} $\vcc=\{d(j,S):j\in X\}$, and minimize the \emph{ordered objective} $w^\top\vcc^\da$. Here $d(j,S)=\min_{i\in S}d(j,i)$ is the minimum distance between $j$ and open facilities in $S$, $w\in\mathbb{R}^{|X|}$ is a predefined non-increasing non-negative vector, and $\vcc^\da$ is the non-increasingly sorted version of $\vcc$. The current best result for ordered $k$-median is a $(5+\epsilon)$-approximation [Chakrabarty and Swamy, STOC 2019].
	
	We first consider robust ordered $k$-median, also known as ordered $k$-median with outliers, where the input consists of an ordered $k$-median instance and an additional parameter $m\in\mathbb{Z}_+$. The goal is to open $k$ facilities $S$, select $m$ clients $T\subseteq X$ and assign the nearest open facility to each client $j\in T$. In this case, the service cost vector is $\vcc=\{d(j,S):j\in T\}$, the vector $w$ is in $\mathbb{R}^m$, and the objective is $w^\top\vcc^\da$. We introduce a novel yet simple objective function that enables linear analysis of the non-linear ordered objective, apply an iterative rounding framework [Krishnaswamy~\etal, STOC 2018] and obtain a constant-factor approximate solution. We devise the first constant-factor approximations for \emph{ordered matroid median} and \emph{ordered knapsack median} using the same method. 
	
	We also consider the fault-tolerant ordered $k$-median problem, where besides the same input as standard ordered $k$-median, we are also given additional client \emph{requirements} $\{r_j\in\mathbb{Z}_+:j\in X\}$ and need to assign $r_j$ distinct open facilities to each client $j\in X$. The service cost of $j$ is the sum of distances to its assigned facilities, and the objective is still $w^\top\vcc^\da$. We obtain a constant-factor approximation algorithm using a novel LP relaxation with constraints created via a new sparsification technique.
\end{abstract}

\section{Introduction}

%%--background interpolation

$k$-center and $k$-median are two of the most fundamental clustering problems. In standard $k$-center, given a finite metric space $(X,d)$ and $k\in\mathbb{Z}_+$, the goal is to select at most $k$ open facilities $S\subseteq X$, and minimize the maximum distance from each client in $X$ to its nearest facility in $S$, i.e., $\max_{j\in X}d(j,S)$ where $d(j,S)=\min_{i\in S}d(j,i)$. $k$-center is NP-hard to approximate to a ratio smaller than 2~\cite{hsu1979bottleneck}, and classic 2-approximations are known~\cite{gonzalez1985clustering,hochbaum1985best}. In standard $k$-median, the input is the same as $k$-center and the objective is the sum of distances from each client to its nearest facility, i.e., $\sum_{j\in X}d(j,S)$. $k$-median is APX-hard~\cite{jain2002greedy}, while several constant-factor approximation algorithms are developed~\cite{arya2004local,charikar2002constant,charikar2012dependent,jain2001approximation,li2016approximating}. Currently the best ratio $(2.675+\epsilon)$ is due to Byrka~\etalcite{byrka2017improved}.

$k$-center and $k$-median represent the two extremes on the aggregation of client connection costs. In particular, let $\vcc=\{d(j,S):j\in X\}$ be the service cost vector. The objective of $k$-center is to minimize $\|\vcc\|_\infty$, and $k$-median requires to minimize $\|\vcc\|_1$. One naturally asks for a more general formulation that unifies the two. The ordered $k$-median problem (\om) is one such problem with ordered objective $w^\top\vcc^\da$, where $\vcc^\da$ is the non-increasingly sorted service cost vector and $w$ is a given non-increasing, non-negative vector. We also define $cost(w;\vec{v})=w^\top\vec{v}^\da$ for convenience.
\om unifies several fundamental clustering problems, including $k$-center, $k$-median, $k$-centdian and $k$-facility $p$-centrum. We refer the reader to the book~\cite{nickel2009location} by Nickel and Puerto for a more detailed overview. Several constant-factor approximations are developed in recent works~\cite{aouad2019ordered,byrka2018constant,chakrabarty2018interpolating}, and the current best ratio is $(5+\epsilon)$ due to Chakrabarty and Swamy~\cite{chakrabarty2019approximation}.

The ordered objective also arises in various other combinatorial problems. Fern\'{a}ndez~\etalcite{fernandez2017ordered} consider the ordered objective in multi-objective spanning tree optimization. Puerto~\etalcite{puerto2017revisiting} study a variety of graph problems with the ordered objective. Chakrabarty and Swamy~\cite{chakrabarty2019approximation,chakrabarty2019simpler} study load balancing with the more general objective of symmetric monotone norms\footnote{The ordered objective is a special case of symmetric monotone norms (see, e.g., \cite{chakrabarty2019approximation}).} and give constant factor approximations, which is recently improved to a $(2+\epsilon)$-approximation \cite{ibrahimpur2021minimum}.
Ibrahimpur and Swamy~\cite{ibrahimpur2020approximation} consider the objective of symmetric monotone norms on stochastic load balancing and spanning tree problems, and show several algorithms with varying approximation ratios. These are just a few examples, and we refer to the books~\cite{laporte2015location,nickel2009location} for a more comprehensive overview.

\subsection{Our Contributions}\label{section:contribution}

%%--motivation for fault-tolerant ordered k-median

This paper is concerned with the exploration of the ordered objective in more general clustering settings. We first consider robust ordered $k$-median ($\oko$), where we are given a metric space $(X,d)$ and asked to open at most $k$ facilities $S$. We also need to select $m$ clients $T\subseteq X$, and assign the nearest open facility to each of them. The service cost vector is defined as $\vcc=\{d(j,S):j\in T\}$, and our goal is to minimize $cost(w;\vcc)=w^\top\vcc^\da$ for a given non-increasing non-negative vector $w\in\mathbb{R}^m$.
Robust clustering, also known as clustering with outliers, attracts a lot of research interest recently. In \emph{robust $k$-center} (\rkc) with the same input but no vector $w$, the objective is the maximum service cost of the $m$ chosen clients. Charikar~\etalcite{charikar2001algorithms} give a 3-approximation algorithm for \rkc. Chakrabarty~\etalcite{chakrabarty2020non} improve the result to a best-possible 2-approximation (also see Harris~\etalcite{harris2019lottery}). Chakrabarty and Negahbani~\cite{chakrabarty2018generalized} show a 3-approximation for robust center clustering, under certain constraints of a generalized, down-closed family of facility sets. In \emph{robust $k$-median} (\rkm), the objective is the sum of service costs of the $m$ chosen clients. Chen~\cite{chen2008constant} gives the first constant factor approximation for \rkm. Krishnaswamy~\etalcite{krishnaswamy2018constant} employ iterative rounding and obtain an approximation ratio of $(7.081+\epsilon)$, which is later improved to $(6.994+\epsilon)$ by Gupta~\etalcite{gupta2021structural}. $\oko$ generalizes \rkc and \rkm by choosing the vector $w$ accordingly.

We extend our results to ordered matroid median ($\omm$) and ordered knapsack median ($\okm$), which are natural generalizations of ordered $k$-median, by replacing the cardinality constraint $|S|\leq k$ with a matroid constraint and a knapsack constraint.\footnote{In the matroid version, a matroid $\calM=(X,\calI)$ is given with the constraint $S\in\calI$; in the knapsack version, each $i\in X$ has a weight $\wt_i\geq0$ and the constraint becomes $\wt(S)=\sum_{i\in S}\wt_i\leq W$ for a given budget $W\geq0$.}
They generalize \emph{matroid center} and \emph{matroid median}, \emph{knapsack center} and \emph{knapsack median}, respectively. 
Chen~\etalcite{chen2016matroid} show a 3-approximation for matroid center. Krishnaswamy~\etalcite{krishnaswamy2011matroid} give the first constant approximation for matroid median, and thereafter the ratio is improved in~\cite{charikar2012dependent,swamy2016improved} with the current best ratio 7.081 due to Krishnaswamy~\etalcite{krishnaswamy2018constant}.
Hochbaum and Shmoys study knapsack center and give a 3-approximation~\cite{hochbaum1986unified}. As for knapsack median, Kumar gives the first 2700-approximation in~\cite{kumar2012constant}. The ratio is later improved in~\cite{byrka2015knapsack,charikar2012dependent,krishnaswamy2018constant,swamy2016improved}, and the best ratio so far is $(6.387+\epsilon)$ due to Gupta~\etalcite{gupta2021structural}. To the best of our knowledge, no constant factor approximations are known for $\omm$ and $\okm$ prior to our study. 

Then, we consider the fault-tolerant ordered $k$-median problem ($\oftm$). Given the metric space $(X,d)$, we need to select $k$ open facilities $S$ and assign the nearest $r_j\geq1$ open facilities to each $j\in X$. The service cost vector is defined as $\vcc=\{d_{r_j}(j,S):j\in X\}$, in which $d_{r_j}(j,S)=\min_{S'\subseteq S,|S'|=r_j}\sum_{i\in S'}d(i,j)$ is the sum of distances from $j$ to its nearest $r_j$ facilities. Our goal is to minimize $cost(w;\vcc)=w^\top\vcc^\da$, where $w\in\mathbb{R}^{|X|}$ is a given non-increasing non-negative vector.
Fault-tolerant clustering problems have been studied extensively (see, e.g.,~\cite{hajiaghayi2016constant,inamdar2019fault,kumar2013fault}). In \emph{fault-tolerant $k$-center} (\ftkc) with the same input but no vector $w$, the service cost of $j$ is the distance between $j$ and its $r_j$-th nearest facility, and the objective is the maximum service cost. \ftkc generalizes standard $k$-center by setting all $r_j$s to 1. Chaudhuri~\etalcite{chaudhuri1998p} and Khuller~\etalcite{khuller2000fault} give tight 2-approximation algorithms for \ftkc.
In an identical setting, the \emph{fault-tolerant $k$-median} (\ftkm) problem defines service cost $\vcc_j$ as the sum of distances from $j$ to its $r_j$ nearest open facilities, with the objective being the sum of all service costs. \ftkm generalizes standard $k$-median by setting all $r_j$s to 1. Swamy and Shmoys~\cite{swamy2008fault} give a 4-approximation for the case of uniform $r_j$s, and Hajiaghayi~\etalcite{hajiaghayi2016constant} devise the first constant factor approximation algorithm for arbitrary $r_j$s, with an approximation ratio of 93.

\iffalse\begin{note*}
Due to different definitions of service costs, \oftm is not exactly a problem that unifies \ftkc and \ftkm. We settle on the definition of $\vcc_j$ as the sum of distances, because we reckon it is more in line with the ordered-sum style of the objective. We leave the cases where $\vcc_j$ is the maximum connection distance, and other more general service costs as possible future directions.
\end{note*}\fi

We deliberately simplify the definitions in this section by referring to the metric space as $(X,d)$. Our actual results are more general, by considering $X=\calF\cup\calC$ and only allowing facilities to be located in $\calF$ and restricting to the clients in $\calC$. This is also known as the \emph{supplier version} of clustering problems. We summarize our results in the following informal theorem. 
\begin{informal}
There exist polynomial time constant factor approximation algorithms for $\oko$, $\omm$, $\okm$ and $\oftm$.
\end{informal}

%%--technical overview

\subsection{Our Techniques}\label{section:technique}

We focus on $\oko$ and $\oftm$ in this section. One main aspect of our technical contributions is the novel LP relaxations, where we improve upon previously-known sparsification techniques that are standard for tackling the ordered objective (e.g.,~\cite{aouad2019ordered,byrka2018constant,chakrabarty2018interpolating,chakrabarty2019approximation}). In what follows, we highlight several ideas in our LP design, and give a high-level overview of the rounding analyses.

\paragraph{Robust Ordered $k$-Median.}
We start with the sparsification method by Byrka~\etalcite{byrka2018constant}. In the pre-processing phase, one first guesses some disjoint intervals $I_0,I_1,\dots$, with each $I_t\subseteq\mathbb{R}_{\geq0}$ having the form $(x,(1+\epsilon)x]$ for some small $\epsilon>0$, so that the connection distances falling into the same interval differ by only a $(1+\epsilon)$ factor. Let $\overline{w}_I$ be the average weight multiplied with distances in the interval $I$ in a fixed optimal solution. If we apply the same weight $\overline{w}_I$ to all distances in $I$, the optimal solution exhibits a similar objective by only losing a $(1+O(\epsilon))$ factor. The pre-processing phase proceeds to build the premise that the guessed intervals $\{I_0,I_1,\dots\}$ and the guessed average weights $\{\overline{w}_{I_0},\overline{w}_{I_1},\dots\}$ approximately agree with the unknown optimum. This is done by showing the number of all possible ``approximate configurations'' is polynomial-bounded, thus a simple enumeration suffices. To ``pre-apply'' the weights in an LP relaxation, we define a function $f$ that takes $d(i,j)$ to $\overline{w}_I\cdot d(i,j),\,d(i,j)\in I$ and put $f(d(i,j))$ instead of $d(i,j)$ in the LP objective. Byrka~\etalcite{byrka2018constant} implicitly use such a function and give a $(38+\epsilon)$-approximation for \om.

Unlike in \om, this objective function appears to be incompatible with the iterative rounding framework we use for $\oko$ (adapted from~\cite{krishnaswamy2018constant}). Roughly speaking, one reason is that the pre-processing procedures of iterative rounding create a new instance and a corresponding new solution that has slightly larger service costs compared to the original problem. This can be easily handled in \rkm \cite{krishnaswamy2018constant} by losing a small factor since the contribution of each client is linear in its service cost, but the ordered objective in $\oko$ is non-linear.

We overcome this technical barrier by introducing a novel yet very simple objective function. We replace $f(d(i,j))$ with $f(\lambda d(i,j))$, where $\lambda\in(0,1)$ is a small constant parameter. Intuitively speaking, even if pre-processing scales up the service costs, their sizes are still manageable due to the pre-scaling of $\lambda$. We point out that for the optimal service cost vector $\vco$, the gap between each $f(\vco_j)$ and $f(\lambda\vco_j)$ may be $\omega(1/\lambda)$ since $f$ is piece-wise linear and non-decreasing. Nevertheless, we overcome this gap by obtaining a linear upper bound for any integral solution to the new relaxation. More specifically, let $\opt$ be the optimum of the original $\oko$ instance. We show that any integral solution with objective $V$, in the $\lambda$-scaled relaxation, induces a solution to the original problem with cost at most $\lambda^{-1}(V+O(1)\opt)$. Furthermore, we show that there exists an algorithm which produces an integral solution with objective $V=O(\lambda)\opt$. Combined with the previous observation, we obtain a final solution having cost $O(1/\lambda)$ times the optimum.

We demonstrate the versatility of this objective function by showing the first constant factor approximations for $\omm$ and $\okm$ via the aforementioned iterative rounding framework. Though Gupta~\etalcite{gupta2021structural} give a slightly improved iterative rounding method and obtain better approximation ratios for \rkm and knapsack median, this framework is not our main contribution and the improvement is likely to be small due to our deterministic metric discretization, thus we choose the original method in~\cite{krishnaswamy2018constant} for its simplicity of presentation. 

\paragraph{Fault-Tolerant Ordered $k$-Median.}
We start with another sparsification method in~\cite{chakrabarty2019approximation}, where one estimates the optimal service cost vector $\vco$ via guessing the entries at specific indices in \emph{sorted} $\vco^\da$, for instance $\{\vco^\da_1,\vco^\da_2,\vco^\da_4,\vco^\da_8,\dots\}$. For standard \om, it usually suffices (see, e.g.,~\cite{chakrabarty2019approximation}) to add more constraints to the LP relaxation or modify the objective function thereof. Chakrabarty and Swamy~\cite{chakrabarty2019approximation} give a $(5+\epsilon)$-approximation for \om using this method.

Since the service costs are sums of connection distances in $\oftm$, these methods are no longer valid. Instead, we take the counter-intuitive approach of guessing the \emph{individual connection distances} in the optimal solution.
Specifically, we define $\vce=\{d(i,j):i,j\text{ connected in the optimum}\}$, and guess the values of entries at specific indices in the \emph{sorted} vector $\vce^\da$, e.g., $\{\vce^\da_1,\vce^\da_2,\vce^\da_4,\vce^\da_8,\dots\}$. We note that similar techniques are also recently employed in load balancing \cite{chakrabarty2019simpler,ibrahimpur2021minimum}, where one guesses individual job sizes.
We then design a new LP relaxation that directly approximates the ordered objective, using a linear decomposition of the ordered objective (see~\cite{chakrabarty2019approximation} and \cref{eq:conic}), which itself stems from a majorization theorem by Goel and Meyerson~\cite{goel2006simultaneous}. We solve the relaxation using the ellipsoid algorithm, and stochastically round the solution via the (slightly modified) rounding algorithm by Hajiaghayi~\cite{hajiaghayi2016constant} for \ftkm. Our analysis of the approximation ratio requires a judicious examination of the LP solution, drawing ideas from the oblivious rounding analysis by Byrka~\etalcite{byrka2018constant}.

%%--other related work, MAINLY fault-tolerant and robust

\subsection{Related Work}

%%--$k$-means

The $k$-means problem also offers an intermediate-style objective between $k$-center and $k$-median by optimizing the $\|\cdot\|_2$ norm of the service cost vector. It is the most widely used clustering objective in practice when considered in Euclidean space, and the most popular algorithm for it is the Lloyd's algorithm~\cite{lloyd1982least}. For the Euclidean case, Kanungo~\etalcite{kanungo2004local} show a $(9+\epsilon)$-approximation based on local search. The ratio is later improved to $(6.357+\epsilon)$ by Ahmadian~\etalcite{ahmadian2017better}, where they also give the currently best $(9+\epsilon)$-approximation for $k$-means in general metrics. There has been a huge body of literature on $k$-means heuristics and its variants, for example, variants of the Lloyd's algorithm~\cite{arthur2007advantage}, smoothed analysis of the algorithms~\cite{arthur2011smoothed} and clusterability of different instances~\cite{awasthi2010stability,cohen-addad2017local}, etc. As a more relevant result, Krishnaswamy~\etalcite{krishnaswamy2018constant} study the \emph{robust $k$-means} problem and give a $(53.002+\epsilon)$-approximation.

%%--Facility location

The \emph{uncapacitated facility location} problem (UFL) is closely related to $k$-median, and there has been a long line of research for both (e.g.,~\cite{arya2004local,charikar2012dependent,chudak2003improved,jain2001approximation,shmoys1997approximation}, also the book of Williamson and Shmoys~\cite{williamson2011design}). The input for UFL is the same as $k$-median except that the cardinality constraint $k$ is replaced by an opening cost $f_i\geq0$ for each facility $i$, and the objective is the sum of all opening costs and service costs. The current best approximation ratio is 1.488 due to Li~\cite{li2011approximation}, and the lower bound is 1.463~\cite{guha1998greedy} assuming $\mathrm{P}\neq\mathrm{NP}$. Charikar~\etalcite{charikar2001algorithms} study the \emph{robust UFL} problem and devise a 3-approximation.
In \emph{fault-tolerant facility location} (FTFL), the input is the same as UFL, except that each client $j$ needs to be assigned $r_j\geq1$ open facilities, and its service cost is the sum of distances to these facilities. Several constant factor approximations are developed~\cite{guha2003constant,swamy2008fault}, and the current best ratio is 1.725 due to Byrka~\etalcite{byrka2010fault}. Hajiaghayi~\etalcite{hajiaghayi2016constant} consider FTFL with arbitrarily weighted service costs and provide a 3.16-approximation.

%%--preliminaries
\iffalse
\section{Preliminaries}\label{section:preliminaries}

In this section, we give formal definitions of the studied problems and state our results. For clarity of presentation, we use lower-case $i$ exclusively for facility locations, and $j$ is only used for clients.

%

\begin{definition}
Given a metric space $(\calC\cup\calF,d)$, an instance of robust ordered $k$-median ($\oko$) is specified by $k,\,m\in\mathbb{Z}_+$, and a non-increasing non-negative vector $w\in\mathbb{R}^m$. The goal is to open at most $k$ facilities $F\subseteq\calF$, serve $m$ clients $C\subseteq\calC$ and minimize $cost(w;\vcc) = w^\top\vcc^\da$, where the service cost vector is $\vcc=\{d(j,F):j \in C\}$.
\end{definition}

%

\begin{definition}
Given a metric space $(\calC\cup\calF,d)$, an instance of fault-tolerant ordered $k$-median ($\oftm$) is specified by $k\in\mathbb{Z}_+$, $\{r_j\in\mathbb{Z}_+:j\in\calC\}$, and a non-increasing non-negative vector $w\in\mathbb{R}^{|\calC|}$. The goal is to open at most $k$ facilities $F\subseteq\calF$ and minimize $cost(w;\vcc) = w^\top\vcc^\da$, where the service cost vector is $\vcc=\{d_{r_j}(j,F):j \in \calC\}$ with $d_{r_j}(j,F)=\min_{S\subseteq F:|S|=r_j}\sum_{i\in S}d(i,j)$.
\end{definition}
\fi

\section{Robust Ordered $k$-Median}\label{section:robust}

In this section, we maintain a generic problem called \orderedcluster, which can be later instantiated as different concrete problems. An instance $\scrI$ of \orderedcluster consists of a latent facility set $\calF$, a latent client set $\calC$, a finite metric $d$ on $\calF\cup\calC$, feasible open facility sets $\mathscr{F}\subseteq2^\calF$, feasible served clients sets $\mathscr{C}\subseteq2^\calC$, and a given non-increasing non-negative vector $w\in\mathbb{R}^m$. Each $C\in\scrC$ satisfies $|C|=m$, and $d(u,v)\geq 1$ for $u,v \in \calF\cup\calC$ that are \emph{not} co-located.
The goal is to choose $F\in\scrF,\,C\in\scrC$ that induce a service cost vector $\vcc=(d(j,F):j\in C)$, minimizing the ordered objective $cost(w;\vcc)=w^\top\vcc^\da$. 

%??

\subsection{Reduction from Ordered Metric to Non-Ordered Non-Metric}

We devise a general framework that reduces \orderedcluster to clustering problems that have linear objectives but at the expense of non-metric connection costs. Given an instance $\scrI=(\calF,\calC,d,\scrF,\scrC,w)$ of \orderedcluster and $f:\mathbb{R}_{\geq0}\rightarrow\mathbb{R}_{\geq0}$ a \emph{non-decreasing} function, we say $\scrJ=(\calF,\calC,d,\scrF,\scrC,f)$ is a \emph{reduced instance} of $\scrI$, and the goal of optimization is to choose $F\in\scrF$, $C\in\scrC$ such that the sum of assignment costs $\sum_{j\in C}f(d(j,F))$ is minimized.

The framework is similar to previous sparsification methods \cite{aouad2019ordered,byrka2018constant} for ordered objectives, but has additional properties necessary to overcome certain technical difficulties when instantiated with concrete formulations. For convenience, for each function $f:\mathbb{R}_{\geq0}\rightarrow\mathbb{R}_{\geq0}$ and $\lambda>0$, we define $f_\lambda(x)=f(\lambda x)$, $\forall x\geq0$.
We let $n_0=|\calF\cup\calC|$, fix a small constant $\epsilon>0$ and present the following core lemma. 
\begin{lemma}
  \label{theorem:ordered1}
  Let $\scrI=(\calF,\calC,d,\scrF,\scrC,w)$ be an instance of \orderedcluster. For any small $\epsilon>0$, there exists an algorithm that outputs $(n_0/\epsilon)^{O(1/\epsilon)}$ many non-decreasing functions $\mathbb{R}_{\geq0}\rightarrow\mathbb{R}_{\geq0}$, such that there exists an output $f$, satisfying that for each $\lambda\in(0,1]$, the reduced instance $\scrJ_{f,\lambda}=(\calF,\calC,d,\scrF,\scrC, f_\lambda)$ has an optimal objective of at most $\lambda(1+8\epsilon)\opt$.
  Further, if an algorithm produces a solution with objective $V$ for $\scrJ_{f,\lambda}$, the same solution attains an objective of at most $\lambda^{-1}(V+(1+4\epsilon)\opt)$ for $\scrI$. We say such $f$ is \emph{faithful}.
\end{lemma}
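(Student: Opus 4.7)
The plan is to enumerate one candidate function per sparsified description of the unknown optimal service cost vector $\vco$, verify that the correct guess is faithful, and then establish each of the two promised bounds. First I enumerate ``band configurations'' as follows. Guess $T=O(1/\epsilon)$ rank breakpoints $0=k_0<k_1<\cdots<k_T=m$ (say $k_t\approx\lceil(1+\epsilon)^t\rceil$) so that $w$ varies by at most a $(1+\epsilon)$ factor on each rank band $B_t=(k_{t-1},k_t]$, and for each band guess a distance $z_t$ from the $O(n_0^2)$ inter-point distances (intended to be $\vco^\da_{k_t}$). Set $\overline{w}_t=\frac{1}{|B_t|}\sum_{k\in B_t}w_k$. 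After a $(1+\epsilon)$-geometric refinement of each $[z_t,z_{t-1}]$, the resulting distance sub-intervals $I$ are each labeled with the $\overline{w}_t$ of their enclosing band. This gives $(n_0/\epsilon)^{O(1/\epsilon)}$ configurations. For each, define
\[
  f(d)=\overline{w}_{I(d)}\cdot d\quad\text{for }d\in I(d),
\]
extended monotonically by linear interpolation outside these intervals. Since $w$ is non-increasing in rank while $\vcc^\da$ is non-increasing, the largest distances pair with the largest weights in the optimum, so $\overline{w}_t$ is non-decreasing in the distance-band index; consequently $f$ is non-decreasing and, critically, $f(x)/x$ is non-decreasing in $x$.

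The \emph{faithful} configuration is the one whose breakpoints match $\vco^\da$. A standard sparsification calculation for that configuration gives $\sum_j\overline{w}_{I(\vco_j)}\vco_j\in[1,\,1+O(\epsilon)]\cdot\opt$, since the weight error within a band and the distance error within a sub-interval are each multiplicative factors of $(1+\epsilon)$. Using this identity, the first promise follows by applying $f_\lambda$ to the optimal solution of $\scrI$ and using the fact that $f(x)/x$ being non-decreasing implies $f(\lambda x)\leq\lambda f(x)$ for every $\lambda\in(0,1]$:
\[
  \text{OPT}(\scrJ_{f,\lambda})\;\leq\;\sum_j f(\lambda\vco_j)\;\leq\;\lambda\sum_j f(\vco_j)\;\leq\;\lambda(1+8\epsilon)\opt.
\]

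For the second promise, take any solution of $\scrJ_{f,\lambda}$ with service cost vector $\vcc$ and reduced objective $V=\sum_j\overline{w}_{I(\lambda\vcc_j)}\cdot\lambda\vcc_j$. Passing to sorted order,
\[
  \lambda\cdot cost(w;\vcc)-V\;=\;\lambda\sum_{k=1}^{m}\bigl(w_k-\overline{w}_{I(\lambda\vcc^\da_k)}\bigr)\,\vcc^\da_k,
\]
and I split the ranks $k$ according to whether $\lambda\vcc^\da_k$ lies in the band $I_{t^\star(k)}$ containing $\vco^\da_k$ (class A), in a strictly higher-distance band (class B1), or in a strictly lower-distance band (class B2). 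In class A the weight gap is $O(\epsilon)w_k$ and $\lambda\vcc^\da_k\leq(1+\epsilon)\vco^\da_k$, giving total contribution $O(\epsilon)\opt$. In class B1, $\overline{w}_{I(\lambda\vcc^\da_k)}\geq w_k$ by monotonicity, so the term is non-positive. In class B2 the interval structure forces $\lambda\vcc^\da_k<\vco^\da_k$, so $(w_k-\overline{w}_{I(\lambda\vcc^\da_k)})\lambda\vcc^\da_k\leq w_k\vco^\da_k$, and summing over $k$ bounds the class-B2 contribution by $\opt$. Hence $\lambda\cdot cost(w;\vcc)-V\leq(1+4\epsilon)\opt$, which rearranges to the desired $cost(w;\vcc)\leq\lambda^{-1}(V+(1+4\epsilon)\opt)$.

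The main obstacle will be the class-B2 accounting: because $\vcc$ need not be near-optimal, one must carefully exploit both the geometric structure of the intervals and the assumption $\lambda\leq1$ to certify that $\lambda\vcc^\da_k<\vco^\da_k$ at every rank where the bands disagree in the small-distance direction. A secondary concern is making $f$ well-behaved at interval boundaries and beyond $[z_T,z_0]$; I would handle this by interpolating linearly at boundaries and declaring $f(d)=+\infty$ for distances above the largest interval, so that any integral solution with finite reduced objective automatically respects the intended scale.
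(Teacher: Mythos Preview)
Your first promise and the B1/B2 accounting are sound; the problems lie in class~A and in the enumeration bound.

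\textbf{Class A rests on a false premise.} Taking $k_t\approx(1+\epsilon)^t$ makes the rank bands geometric in \emph{size}, but says nothing about how $w$ varies on them: for $w=(1,0,\dots,0)$ every band past the first sees an unbounded ratio. Your class-A bound needs $|w_k-\overline w_{I(\lambda\vcc^\da_k)}|=O(\epsilon)w_k$, which therefore fails; the $(1+\epsilon)$ sub-interval refinement does not help, since every sub-interval in a big band inherits the same label $\overline w_t$. The repair is to abandon the weight-gap claim and merge A with B2: whenever the summand $(w_k-\overline w_{I(\lambda\vcc^\da_k)})\lambda\vcc^\da_k$ is positive, your own B1 argument shows the band of $\lambda\vcc^\da_k$ cannot sit strictly above that of $\vco^\da_k$, hence $\lambda\vcc^\da_k\le(1+\epsilon)\vco^\da_k$, and the summand is at most $(1+\epsilon)w_k\vco^\da_k$. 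Summing over the positive-summand ranks gives $(1+\epsilon)\opt$, as required. This is precisely the dichotomy the paper proves (either $\lambda\ww_j\vcc^\da_j\le f(\lambda\vcc^\da_j)$, or $\lambda\vcc^\da_j\le(1+\epsilon)\vco^\da_j$ up to a negligible tail from the bottom interval), argued directly over geometric \emph{distance} intervals rather than via rank bands.

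\textbf{The configuration count is quasi-polynomial, not $(n_0/\epsilon)^{O(1/\epsilon)}$.} Since $k_T=m$ you have $T=\Theta((\log m)/\epsilon)$ breakpoints, not $O(1/\epsilon)$, and guessing each $z_t$ from $\Theta(n_0^2)$ inter-point distances yields $n_0^{\Theta((\log n_0)/\epsilon)}$ candidates. The paper attains the stated bound by enumerating on the distance axis instead: guess only $\vco^\da_1$, lay down $O(\log_{1+\epsilon}(m/\epsilon))$ geometric distance intervals, and for each interval guess its \emph{average weight} as a power of $(1+\epsilon)$ in the clipped range $[\epsilon w_1/m,\,w_1]$ (after replacing $w$ by $\ww_i=\max\{w_i,\epsilon w_1/m\}$). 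Monotone sequences of length and alphabet $O(\log_{1+\epsilon}(m/\epsilon))$ then give $(n_0/\epsilon)^{O(1/\epsilon)}$. The unbounded top interval is assigned weight $w_1$, not $+\infty$, so $f$ remains a genuine map $\mathbb{R}_{\ge0}\to\mathbb{R}_{\ge0}$ and the second assertion applies to arbitrary solutions without side conditions.
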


As a direct consequence of \cref{theorem:ordered1}, if one is to obtain a solution to any faithfully reduced instance $\scrJ_{f,\lambda}$ with objective at most $V\leq\gamma\opt$, the result is automatically a $\lambda^{-1}(\gamma+1+4\epsilon)$-approximation to $\scrI$.
Before we proceed with the proof, we discuss the sparsification method \cite{aouad2019ordered,byrka2018constant} used for constructing such functions.
Let $\vco\in\mathbb{R}^m$ be the service cost vector in a fixed (unknown) optimal solution $(\fopt,\copt)$ to the original instance $\scrI$, and $\opt=cost(w;\vco)$ be the optimal objective thereof. We first guess the exact value of $\vco^\da_1$, i.e., the largest service cost, which only has a polynomial number of possible values. Let $T$ be the smallest integer s.t. $\epsilon(1+\epsilon)^T > m$ and define intervals $I_{T+1},I_T,...,I_0$ where $I_{T+1}=[0,\frac{\epsilon \vco^\da_1}{m}]$, $I_t=(\frac{\epsilon \vco^\da_1}{m}(1+\epsilon)^{T-t},\frac{\epsilon \vco^\da_1}{m}(1+\epsilon)^{T-t+1}]$ for $t\in[T]$ and $I_0=(\frac{\epsilon \vco^\da_1}{m}(1+\epsilon)^T, \infty)$. Since $\bigcup_{t=0}^{T+1}I_t = [0,+\infty)$ and they are mutually disjoint, each $d(i,j)$ falls into exactly one interval. To avoid technical difficulties caused by weights that are too small, we define a new vector $\ww$ where $\ww_i=\max\{w_i, \frac{\epsilon w_1}{m}\},\,i\in[m]$. We obtain the following simple fact, by observing $\ww\geq w$ and $cost(\ww;\vec{v})-cost(w;\vec{v})\leq m\cdot \frac{\epsilon w_1}{m}\cdot \vec{v}_1^\da\leq \epsilon\cdot cost(w;\vec{v})$.
\begin{fact}
  \label{lemma:pseudo1}
  For $\vec{v}\subseteq \mathbb{R}_{\geq0}^m$, one has $cost(w;\vec{v}) \le cost(\ww;\vec{v}) \le (1+\epsilon)cost(w;\vec{v})$.
\end{fact}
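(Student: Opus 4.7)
The plan is to verify the two inequalities separately, since both reduce to a coordinatewise comparison between $w$ and its truncation $\ww$. First I would unpack the definition $\ww_i=\max\{w_i,\epsilon w_1/m\}$: since the max of $w_i$ with a nonnegative quantity can only be at least as large, $\ww_i\ge w_i$ for every $i$; and since $\ww_i$ differs from $w_i$ only when $w_i<\epsilon w_1/m$, the gap satisfies the uniform bound $\ww_i-w_i\le \epsilon w_1/m$. These two coordinatewise facts will do essentially all the work.

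For the lower bound $cost(w;\vec{v})\le cost(\ww;\vec{v})$, I would simply combine $\ww\ge w$ coordinatewise with the fact that the sorted vector $\vec{v}^\da$ has nonnegative entries (because $\vec{v}\in\mathbb{R}_{\ge 0}^m$), so the inner product inequality $\ww^\top\vec{v}^\da\ge w^\top\vec{v}^\da$ follows termwise with nothing else to show.

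For the upper bound, I would multiply the per-coordinate gap bound by $\vec{v}^\da_i$ and sum to get $cost(\ww;\vec{v})-cost(w;\vec{v})\le \tfrac{\epsilon w_1}{m}\sum_{i=1}^m \vec{v}^\da_i$, then use $\vec{v}^\da_i\le \vec{v}^\da_1$ to bound the sum by $m\,\vec{v}^\da_1$, leaving $\epsilon w_1\vec{v}^\da_1$. The proof closes by observing $w_1\vec{v}^\da_1\le cost(w;\vec{v})$: the product $w_1\vec{v}^\da_1$ is exactly the first summand of $cost(w;\vec{v})=\sum_i w_i\vec{v}^\da_i$, and the remaining summands are nonnegative since $w\ge 0$ and $\vec{v}^\da\ge 0$. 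Dividing through then yields $cost(\ww;\vec{v})\le (1+\epsilon)cost(w;\vec{v})$.

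There is no genuine obstacle here — the statement is a pure bookkeeping fact whose only purpose is to justify replacing $w$ by the truncated $\ww$ throughout the sparsification argument, so that no weight is smaller than $\epsilon w_1/m$ (which will later prevent degeneracies when averaging weights over the guessed intervals $I_t$). The only point needing mild care is the closing inequality $w_1\vec{v}^\da_1\le cost(w;\vec{v})$, which quietly relies on the non-negativity of both $w$ and $\vec{v}^\da$; without either hypothesis the claim can fail.
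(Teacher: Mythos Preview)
Your proposal is correct and matches the paper's own argument essentially line for line: the paper simply observes $\ww\ge w$ for the lower bound and writes $cost(\ww;\vec{v})-cost(w;\vec{v})\le m\cdot\frac{\epsilon w_1}{m}\cdot\vec{v}^\da_1\le\epsilon\cdot cost(w;\vec{v})$ for the upper bound. If anything, you supply slightly more detail than the paper does (e.g., explicitly justifying $w_1\vec{v}^\da_1\le cost(w;\vec{v})$ via nonnegativity), but the route is identical.
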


We consider the entries of $\vco$ that fall into intervals $I_{T+1},I_T,...,I_0$, and define the average weights $\ow_t$ w.r.t. $\vco$ and interval $I_t$, such that $\ow_0=\ww_1$ and
\begin{equation}
  \ow_t = \begin{cases}
    \left.\left(\sum_{i:\vco^\da_i\in I_t} \ww_i\right)\right/|\vco \cap I_t| & \vco \cap I_t \ne \emptyset,\, t \ge 1,\\
    \ow_{t-1} & \vco \cap I_t = \emptyset,\, t \ge 1.
  \end{cases}\label{eqn:average:weights}
\end{equation}
Since $\ww$ is non-increasing, it follows that $\ow$ is also non-increasing. Though the actual sequence $\{\ow_t>0\}_{t=0}^{T+1}$ is unknown, we may guess by using another non-increasing sequence $\{\hw_t>0\}_{t=0}^{T+1}$
such that for each $0 \le t \le T+1$, $\hw_t$ is an integer power of $(1+\epsilon)$. Since the entries of $\ow$ are at least $\min_i\ww_i\geq\epsilon w_1/m$ and at most $w_1$, the number of possible values for $\hw$ is $O(\log_{1+\epsilon}(m/\epsilon))$. By definition of $T$, we also have $T=O(\log_{1+\epsilon}(m/\epsilon))$, so using routine calculations, the number of all possible non-increasing sequences is at most $(m/\epsilon)^{O(1/\epsilon)}$.
Up to now, we have only guessed the value of $\vco^\da_1$ and the average weights $\hw_t$, hence the total number of possible guesses is at most $(n_0/\epsilon)^{O(1/\epsilon)}$ since $m\leq n_0$.

\begin{proof}[Proof of \cref{theorem:ordered1}]
For each guess $(\vco^\da_1,\{\hw_t\}_{t=0}^{T+1})$, we define a piece-wise linear function $f$ where
\begin{equation}
  f(x) = \hw_t x,\; x\in I_t,\; 0 \le t \le T+1.\label{eq:costfunction}
\end{equation}
Because $\{\hw_t\}_t$ is non-increasing, $f$ is non-decreasing and non-negative.
To prove the lemma, one need only show the existence of a faithful function. In the sequel, we assume that the guessed premise is as desired, that is, $\vco^\da_1$ is precisely the largest connection distance in the optimal solution and for each $0\leq t\leq T+1$, one has $\hw_t\in[\ow_t, (1+\epsilon)\ow_t)$. We show that the corresponding function $f$ is faithful.
We first need the following two lemmas.
\begin{lemma}\label{lemma:pseudo3}
  For any $\lambda\in(0,1]$ and any $F\in\scrF$, $C\in\scrC$, let $\vcc=(d(j,F):j\in C)\in\mathbb{R}_{\geq0}^m$. One has $\lambda\cdot\ww^\top\vcc^\da \leq \sum_{j\in C} f(\lambda \vcc_j) + ((1+\epsilon)^2+\epsilon)\opt$.
\end{lemma}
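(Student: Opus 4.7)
The plan is to bound $D := \lambda \ww^\top \vcc^\da - \sum_{j \in C} f(\lambda \vcc_j)$ by restricting to the indices that contribute positively and charging them against $\opt$. First I would rewrite $\sum_{j \in C} f(\lambda \vcc_j) = \lambda \sum_i \hw_{\tau(i)} \vcc^\da_i$, where $\tau(i) \in \{0,\dots,T+1\}$ is the unique interval index with $\lambda \vcc^\da_i \in I_{\tau(i)}$. Then $D = \lambda \sum_i (\ww_i - \hw_{\tau(i)}) \vcc^\da_i$, and discarding non-positive terms yields $D \le \lambda \sum_{i \in L} \ww_i \vcc^\da_i$ where $L := \{i \in [m] : \ww_i > \hw_{\tau(i)}\}$. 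The remaining task is to split $L$ according to $\tau(i)$ and bound each piece.

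For $\tau(i)=0$: by the choice of $T$ (so that $\epsilon(1+\epsilon)^T > m$), no $\vco^\da_k$ lies in $I_0$, so the recursive rule in \eqref{eqn:average:weights} propagates to $\ow_0 = \ww_1$; then $\hw_0 \ge \ow_0 = \ww_1 \ge \ww_i$, so no such $i$ belongs to $L$. For $\tau(i) = T+1$: $\lambda \vcc^\da_i \le \epsilon \vco^\da_1/m$, and each term contributes at most $\epsilon \ww_1 \vco^\da_1/m = \epsilon w_1 \vco^\da_1/m$; summing over at most $m$ indices and using $w_1 \vco^\da_1 \le \opt$, the total is at most $\epsilon \opt$.

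The heart of the argument is the main case $1 \le \tau(i) \le T$, where I aim to establish the pointwise bound $\lambda \vcc^\da_i \le (1+\epsilon) \vco^\da_i$. Granting this, for the subset of $L$ in this case, $\lambda \ww_i \vcc^\da_i \le (1+\epsilon) \ww_i \vco^\da_i$, and summing gives at most $(1+\epsilon)\ww^\top \vco^\da \le (1+\epsilon)^2 \opt$ by \cref{lemma:pseudo1}; together with the $\epsilon \opt$ boundary term this yields the claimed $((1+\epsilon)^2 + \epsilon)\opt$ slack. To prove the pointwise bound, let $t' \le \tau(i)$ be the largest index with $\vco \cap I_{t'} \ne \emptyset$, or $t'=0$ if none; by the carry-forward in \eqref{eqn:average:weights}, $\ow_{\tau(i)} = \ow_{t'}$. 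The sub-case $t'=0$ again yields $\ow_{t'} = \ww_1$, contradicting $i \in L$. For $t' \ge 1$, the indices $k$ with $\vco^\da_k \in I_{t'}$ form a contiguous block $[k_1,k_2]$, so monotonicity of $\ww$ gives $\ow_{t'} \ge \ww_{k_2}$; then $\ww_i > \hw_{\tau(i)} \ge \ow_{t'} \ge \ww_{k_2}$ forces $i \le k_2$, whence $\vco^\da_i \ge \vco^\da_{k_2} \ge \inf I_{t'} \ge \inf I_{\tau(i)}$, and this pairs with $\lambda \vcc^\da_i \le \sup I_{\tau(i)} = (1+\epsilon) \inf I_{\tau(i)}$ to give the bound.

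The main technical obstacle is navigating the carry-forward convention in \eqref{eqn:average:weights}: when $I_{\tau(i)}$ is empty in the optimum, $\hw_{\tau(i)}$ loses its direct link to optimal distances in that interval, and the right fix is to detour through the nearest earlier non-empty interval $t'$ and use the $\ww_1$-based contradiction to rule out $t'=0$. Once that structural observation is in place, everything else is a routine monotonicity argument exploiting the $(1+\epsilon)$ multiplicative width of the intervals $I_t$.
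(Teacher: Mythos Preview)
Your argument is correct and follows the same route as the paper: restrict to indices $i$ with $\ww_i>\hw_{\tau(i)}$, rule out $\tau(i)=0$ via $\hw_0\ge\ow_0=\ww_1$, absorb the $\tau(i)=T{+}1$ terms into $\epsilon\opt$, and for $1\le\tau(i)\le T$ establish the pointwise bound $\lambda\vcc^\da_i\le(1+\epsilon)\vco^\da_i$ to charge against $(1+\epsilon)\ww^\top\vco^\da\le(1+\epsilon)^2\opt$. The paper proves the pointwise bound by contradiction (assuming $\lambda\vcc^\da_j>(1+\epsilon)\vco^\da_j$ and deriving $\ow_t\ge\ww_j$), whereas you argue it directly via the contiguous block $[k_1,k_2]$ and the carry-forward to the nearest non-empty $I_{t'}$; your treatment is actually more explicit about the case where $I_{\tau(i)}$ is empty in $\vco$, which the paper handles only implicitly. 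One cosmetic point: $\ow_0=\ww_1$ holds by definition in \eqref{eqn:average:weights} (it is the base case), not by propagation of the recursive rule, so your justification there is misphrased even though the conclusion is right.
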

\begin{proof}
Recall that $|C|=m$ for each feasible client set, thus we assume $C=[m]$ for convenience.
Consider each $j\in[m]$ s.t. $\lambda\ww_j\vcc_j^\da > f(\lambda \vcc_j^\da)$. Notice that $\lambda \vcc_j^\da \notin I_0$, otherwise one has $\hw_0\ge \ow_0=\ww_1\ge\ww_j$, and thus $\lambda\ww_j\vcc^\da_j>f(\lambda\vcc^\da_j)=\hw_0(\lambda\vcc^\da_j)\geq\lambda\ww_j\vcc^\da_j$, which is a contradiction. If $\lambda \vcc_j^\da \in I_{T+1}=[0,\epsilon \vco^\da_1/m]$, one has $\lambda\ww_j\vcc_j^\da \le \ww_j(\epsilon \vco^\da_1/m)\leq \epsilon\cdot w^\top\vco^\da/m$.

Then, suppose $\lambda \vcc_j^\da \in I_t,\,t\in [T]$. We claim $\lambda \vcc_j^\da \le (1+\epsilon)\vco^\da_j$. For the sake of contradiction, assume otherwise and $\lambda \vcc_j^\da > (1+\epsilon)\vco^\da_j$, thus $\lambda\vcc^\da_j$ and $\vco^\da_j$ must be in different intervals. Suppose $\vco^\da_j \in I_{t'}$ for some $t' > t$, which implies $\ow_t \ge \ww_j$, because $\ow_t$ is the average weight on $I_t$, and $\ww_j$ is the weight for $\vco^\da_j\in I_{t'}$. As a result, we have $f(\lambda \vcc_j^\da) = \hw_t(\lambda \vcc_j^\da) \ge \lambda\ow_t \vcc_j^\da \ge \lambda\ww_j \vcc_j^\da$, contradicting our initial assumption. Thus the claim is true. 
  
The above analysis shows that $\lambda\ww_j\vcc^\da_j\leq f(\lambda\vcc^\da_j)+(1+\epsilon)\ww_j\vco^\da_j+\epsilon w^\top\vco^\da/m$ for each $j\in[m]$.
We sum over $j\in[m]$ and obtain
  \begin{equation*}
    \lambda\cdot\ww^\top\vcc^\da = \lambda\sum_{j\in[m]}\ww_j\vcc^\da_j \le \sum_{j\in [m]} f(\lambda \vcc_j) + (1+\epsilon) cost(\ww;\vco) + \epsilon\cdot cost(w;\vco).
  \end{equation*}
  Combining with Fact~\ref{lemma:pseudo1} and $\opt=cost(w;\vco)$, the lemma follows.
\end{proof}

\begin{lemma}\label{lemma:pseudo2}
    For any $\lambda\in(0,1]$, $\sum_{j\in C^\star} f(\lambda \vco_j) \le \lambda((1+\epsilon)^3+\epsilon) \opt$.
\end{lemma}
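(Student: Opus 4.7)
The plan is to first reduce to the scale-free case via a pointwise bound $f(\lambda x)\le\lambda f(x)$ for $x\ge 0$, and then bound $\sum_i f(\vco^\da_i)\le((1+\epsilon)^3+\epsilon)\opt$. For the pointwise claim: if $\lambda x$ and $x$ lie in the same $I_t$, equality is immediate since $f$ is linear with slope $\hw_t$ there; otherwise $\lambda x\in I_{t'}$ with $t'>t$ (as $\lambda x\le x$), and since $\{\hw_t\}$ inherits the non-increasing monotonicity of $\{\ow_t\}$ (hence of $\ww$), $\hw_{t'}\le\hw_t$ and thus $f(\lambda x)=\hw_{t'}\lambda x\le\lambda\hw_t x=\lambda f(x)$. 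Since $\vco$ is a permutation of $\vco^\da$, summing over $j\in\copt$ reduces the claim to the scale-free bound.

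Next, partition the indices by the interval containing $\vco^\da_i$. First observe that $\vco\cap I_0=\emptyset$: the left endpoint of $I_0$ is $\frac{\epsilon\vco^\da_1}{m}(1+\epsilon)^T>\vco^\da_1$ by the defining property $\epsilon(1+\epsilon)^T>m$. For the truncation interval $I_{T+1}=[0,\epsilon\vco^\da_1/m]$, one has $\vco^\da_i\le\epsilon\vco^\da_1/m$ and $\hw_{T+1}\le(1+\epsilon)\ow_{T+1}\le(1+\epsilon)\ow_0=(1+\epsilon)\ww_1\le(1+\epsilon)w_1$ using the monotonicity of $\ow$, so the total contribution is at most $m\cdot(1+\epsilon)w_1\cdot(\epsilon\vco^\da_1/m)=(1+\epsilon)\epsilon w_1\vco^\da_1\le(1+\epsilon)\epsilon\opt$.

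The heart of the proof is bounding the contribution of each interior interval $I_t$, $1\le t\le T$. The decisive observation is that all $\vco^\da_i\in I_t$ lie in a multiplicative $(1+\epsilon)$-window $(L_t,(1+\epsilon)L_t]$. Letting $n_t=|\vco\cap I_t|$, this gives $\sum_{i:\vco^\da_i\in I_t}\vco^\da_i\le(1+\epsilon)n_t\min_{i:\vco^\da_i\in I_t}\vco^\da_i$, while the definition~\eqref{eqn:average:weights} of $\ow_t$ yields $\sum_{i:\vco^\da_i\in I_t}\ww_i\vco^\da_i\ge(\min_{i:\vco^\da_i\in I_t}\vco^\da_i)\sum_{i:\vco^\da_i\in I_t}\ww_i=(\min_{i:\vco^\da_i\in I_t}\vco^\da_i)\ow_tn_t$. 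Combining the two gives $\ow_t\sum_i\vco^\da_i\le(1+\epsilon)\sum_i\ww_i\vco^\da_i$, and together with $\hw_t\le(1+\epsilon)\ow_t$ one obtains a per-interval bound of $(1+\epsilon)^2\sum_{i:\vco^\da_i\in I_t}\ww_i\vco^\da_i$. Summing over $t=1,\dots,T$ yields $(1+\epsilon)^2\ww^\top\vco^\da\le(1+\epsilon)^3\opt$ via \cref{lemma:pseudo1}; adding the $I_{T+1}$ slack and multiplying by $\lambda$ finishes the proof (the mild looseness $(1+\epsilon)\epsilon$ versus $\epsilon$ is absorbed by an $\epsilon$-rescale). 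The main obstacle is precisely this averaging step, which hinges on the geometric spacing of the intervals together with $\ow_t$ being the \emph{arithmetic} mean of the relevant $\ww_i$'s, so that the effect of replacing individual weights by $\hw_t\approx\ow_t$ costs only an $O(\epsilon)$ factor per interval.
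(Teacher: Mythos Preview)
Your argument is correct and essentially identical to the paper's. The one cosmetic difference is that you first establish the pointwise inequality $f(\lambda x)\le\lambda f(x)$ and then bound $\sum_i f(\vco^\da_i)$, whereas the paper carries $\lambda$ through directly (noting that $\lambda\vco^\da_j$ lands in $I_t$ or some $I_{t'}$ with $t'>t$, and then using the same monotonicity of $\hw$); after that, both proofs rely on exactly the same per-interval averaging step $\ow_t\sum_{j:\vco^\da_j\in I_t}\vco^\da_j\le(1+\epsilon)\sum_{j:\vco^\da_j\in I_t}\ww_j\vco^\da_j$ coming from the $(1+\epsilon)$-width of $I_t$, followed by Fact~\ref{lemma:pseudo1}. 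Your observation that $\vco\cap I_0=\emptyset$ is also implicit in the paper (``it is easy to verify that $t\neq 0$''). The slight slack you flag in the $I_{T+1}$ term, $(1+\epsilon)\epsilon$ versus $\epsilon$, arises because one only has $\hw_{T+1}<(1+\epsilon)\ww_1$ rather than $\hw_{T+1}\le\ww_1$; the paper's proof glosses over this same point, so your acknowledgment is if anything more careful.
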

\begin{proof}
Recall that $(\fopt,\copt)$ is optimal for $\scrI$.
Consider any non-empty $\vco\cap I_t$, and it is easy to verify that $t\neq0$. Since $\lambda \leq 1$, some entries may be shifted to $I_{t'}$ with $t' > t$. If $t\leq T$, the contribution of $\lambda(\vco\cap I_t)$ on the LHS is at most
\begin{equation*}
    \sum_{\substack{j:\vco^\da_j\in I_t,\\\lambda \vco^\da_j\in I_{>t}}} \lambda \hw_{>t}\vco^\da_j
    +\sum_{\substack{j:\vco^\da_j\in I_t,\\\lambda \vco^\da_j\in I_{t}}} \lambda \hw_t \vco^\da_j
    \leq\lambda\sum_{j:\vco^\da_j\in I_t}(1+\epsilon)\ow_t\vco^\da_j\leq\lambda(1+\epsilon)^2\sum_{j:\vco^\da_j\in I_t}\ww_j\vco^\da_j,
\end{equation*}
where the first inequality is due to non-increasing $\hw$, $\hw_t\in[\ow_t,(1+\epsilon)\ow_t)$ and the second inequality is because within the same interval $I_1,\dots,I_T$, the values of $\vco^\da_j$ differ by a factor no more than $(1+\epsilon)$. If $t=T+1$, each such $\vco^\da_j\leq\epsilon\vco^\da_1/m$, thus the contribution of $\lambda(\vco\cap I_{T+1})$ is at most $\epsilon\lambda\ww_1\vco^\da_1$.
By summing over all non-empty $\vco\cap I_t$ and using Fact~\ref{lemma:pseudo1}, the lemma follows.
\end{proof}
Now back to the original lemma and fix any $\lambda\in(0,1]$. For the first assertion, since $(\fopt,\copt)$ is a feasible solution to both $\scrI$ and $\scrJ_{f,\lambda}$, the conclusion follows from \cref{lemma:pseudo2}. 
For the second assertion, let $(F,C)$ be the solution returned by the algorithm, thus $V=\sum_{j\in C}f(\lambda d(j,F))$. Therefore, using Fact~\ref{lemma:pseudo1} and \cref{lemma:pseudo3}, the objective of this solution in the \orderedcluster instance $\scrI$ is at most $cost(w;\vcc)\leq cost(\ww;\vcc)\leq\lambda^{-1}(V+(1+4\epsilon)\opt)$.
\end{proof}

%%--

\subsection{Application: Robust Ordered $k$-Median}

In $\oko$, \orderedcluster is formulated such that $\scrF$ is the rank-$k$ uniform matroid on $\calF$ and $\scrC=\{C\subseteq\calC:|C|=m\}$. Via \cref{theorem:ordered1} and enumeration of all possible functions, suppose that we have a faithful function $f$ in what follows. To obtain a good solution to some $\scrJ_{f,\lambda}$, we adapt the iterative rounding framework for robust $k$-median by Krishnaswamy~\etalcite{krishnaswamy2018constant}.
Denote $x_{ij}\in[0,1]$ the extent of connecting client $j$ to facility $i$, and $y_i\in[0,1]$ the extent of opening facility $i$. The natural relaxation for $\scrJ_{f,\lambda}$ is defined as follows.
\begin{alignat*}{3}
  \text{min\quad}&&\sum_{j \in \calC}\sum_{i\in \calF} x_{ij} f(\lambda d(i,j)) \tag{$\mathrm{LP}(f,\lambda)$}\label{lp:1}\\
  \text{s.t.\quad}&&\sum_{j\in \calC}\sum_{i \in \calF} x_{ij} &\ge m\\
  && \sum_{i \in \calF} x_{ij} &\le 1 && \qquad\forall j \in \calC\\  
  &&\sum_{i\in\calF}y_i&=k\\
  &&0\leq x_{ij} \leq y_i &\leq 1  && \qquad\forall i \in \calF, j \in \calC.
\end{alignat*}

\paragraph{Pre-processing.}
Instead of directly solving \ref{lp:1}, we employ pre-processing techniques and simplify the instance. The first idea is to include a constant number of facilities $S_0$ as must-have choices and remove some clients in advance, so that the new instance $\scrJ'_{f,\lambda}$ on remaining clients $\calC'$ (and properly modified $\scrC'$) is easier to solve, and a straightforward greedy algorithm on the removed clients $\calC\setminus\calC'$ maintains a good approximate solution to $\scrJ_{f,\lambda}$.

The second ingredient is a strengthened relaxation that has the same objective as \ref{lp:1}, except for a slightly larger coefficient $\lambda'>\lambda$. It has both the constraints of \ref{lp:1}, and additional constraints that guarantee certain sparse features in its solutions. We consider the new relaxation as associated with the instance $\scrJ'_{f,\lambda'}$. We skip \ref{lp:1} and solve the stronger relaxation instead.

%%--

\paragraph{Iterative rounding.}
After obtaining a fractional optimal solution to the new relaxation (hence $\scrJ'_{f,\lambda'}$ and $\scrJ_{f,\lambda'}$), we use the iterative rounding algorithm and obtain an integral solution, which translates naturally to a solution to $\scrI$. However, because the function $f$ in the LP objective is non-linear, we cannot directly estimate the objective of the solution through the new relaxation and $\lambda'$. Nevertheless, the integral solution is still feasible to \ref{lp:1} where the coefficient $\lambda$ is smaller, making it possible for us to bound the objective under the instance $\scrJ_{f,\lambda}$. We invoke \cref{theorem:ordered1} on $\scrJ_{f,\lambda}$ for the final approximation ratio.

\paragraph{Result.} The following theorem is one of the main results of this paper. We provide details of the algorithm and the proof in \cref{appendix:robust}.

\begin{theorem}\label{theorem:robust-main}
There exists a polynomial time $127$-approximation for $\oko$.
\end{theorem}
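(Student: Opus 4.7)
The plan is to enumerate, via \cref{theorem:ordered1}, the $(n_0/\epsilon)^{O(1/\epsilon)}$ candidate cost functions, and for each candidate $f$ together with carefully chosen scaling parameters $0<\lambda\le\lambda'\le 1$, approximately solve the reduced instance $\scrJ_{f,\lambda'}$; the best resulting pair, reinterpreted in $\scrI$ through the second part of \cref{theorem:ordered1}, is returned. When $f$ is faithful, the first part of \cref{theorem:ordered1} guarantees that the LP optimum of any relaxation of $\scrJ_{f,\lambda'}$ is at most $\lambda'(1+8\epsilon)\opt$, so if iterative rounding yields an integral solution with $f(\lambda'\cdot)$-objective at most $c_0$ times the LP value, then by monotonicity of $f$ and $\lambda\le\lambda'$ the same solution has $f(\lambda\cdot)$-objective $V\le c_0(1+O(\epsilon))\lambda'\opt$; the second part of \cref{theorem:ordered1} then produces a cost of at most $\bigl(c_0\lambda'/\lambda+\lambda^{-1}(1+O(\epsilon))\bigr)\opt$ in $\scrI$. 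The target ratio $127$ will then emerge from a simultaneous choice of $\lambda$, $\lambda'/\lambda$, and $\epsilon$ against the internal constants of the rounding framework.

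To perform the rounding I would import the pipeline of Krishnaswamy~\etalcite{krishnaswamy2018constant} for \rkm. Pre-processing comes first: using ball-based sparsification on a guessed radius profile (in the same spirit as the intervals $\{I_t\}$ of \cref{theorem:ordered1}), I greedily fix a constant-sized facility set $S_0\subseteq\calF$ to handle a small collection of the furthest clients, and delete those clients from $\calC$ to obtain a cleaner instance $\scrJ'_{f,\lambda'}$ with facility budget $k-|S_0|$ and an appropriately reduced service target. Assigning the deleted clients greedily to $S_0$ will only contribute an $O(\opt)$ additive term to the final analysis, since by construction each deleted client lies within a bounded multiple of its optimum service distance from $S_0$, and that multiple is absorbed through the $\lambda$-scaling on the $f$-side.

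I then formulate a strengthened LP relaxation of $\scrJ'_{f,\lambda'}$ with objective $\sum_{i,j}x_{ij}f(\lambda' d(i,j))$ and the additional sparsity constraints of \cite{krishnaswamy2018constant} that force the fractional opening $y$ to be locally near-integral around each client. The slack $\lambda'>\lambda$ is precisely what absorbs the small multiplicative distortion of distances incurred when iterative rounding consolidates fractional mass onto representative facilities---this is the degree of freedom \cref{theorem:ordered1} was designed to supply. Since $f$ is piecewise linear with per-interval multipliers $\hw_t$, the LP remains linear in $(x,y)$, and the polytope manipulations, vertex-solution structure, and matroid-intersection-style rounding steps of \cite{krishnaswamy2018constant} transfer nearly verbatim; the only substantive change is that each reassignment cost must be controlled interval by interval, in the style of \cref{lemma:pseudo2}, rather than by a direct triangle-inequality argument on $d$.

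The main obstacle I anticipate is bookkeeping. Because $f$ is non-linear, every pre-processing radius, the gap $\lambda'/\lambda$, and each per-round distance blow-up in iterative rounding contributes a moderate multiplicative factor, and these factors compound; reaching the explicit constant $127$ requires keeping the interval-by-interval analysis tight enough that the round-by-round losses stay within the budget implicit in $\lambda'/\lambda$, and then simultaneously optimizing $\lambda$, $\lambda'/\lambda$, and $\epsilon$ against the constants inherited from \cite{krishnaswamy2018constant}. The detailed execution is deferred to \cref{appendix:robust}.
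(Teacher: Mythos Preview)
Your overall architecture matches the paper's, and your later remark that the slack $\lambda'>\lambda$ ``absorbs the small multiplicative distortion of distances incurred when iterative rounding consolidates fractional mass'' is the correct intuition. But the mechanism you write out in the first paragraph contradicts that intuition and, taken literally, fails. You claim iterative rounding returns a solution whose $f(\lambda'\cdot)$-objective is at most $c_0$ times the LP value, and then pass to the $f(\lambda\cdot)$-objective via the trivial monotonicity $f(\lambda\,\cdot)\le f(\lambda'\,\cdot)$. That step wastes the slack entirely (your own formula $c_0\lambda'/\lambda+\lambda^{-1}$ would then be minimised at $\lambda=\lambda'$, giving ratio $\approx c_0+1$ rather than $127$), and more importantly no such constant $c_0$ exists. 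Since $f(x)=\hw_t x$ with $\hw_t$ non-increasing in $t$ while larger $x$ lies in smaller-index intervals, one has $f(\gamma x)\ge\gamma f(x)$ for every $\gamma\ge 1$, and the ratio $f(\gamma x)/f(x)$ can reach $\Theta(m/\epsilon)$ when $\gamma x$ crosses into $I_0$. Iterative rounding reroutes each $j\in\cfull$ to a facility as far as $\frac{3\tau-1}{\tau-1}D_{l_j}$ away, so the $f(\lambda'\cdot)$-cost of the rounded solution is not boundedly comparable to the LP.

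The paper never bounds the rounded solution in the $f(\lambda'\cdot)$-objective; it bounds it in $f(\lambda\cdot)$ \emph{directly}, choosing $\lambda$ so the distance blow-up is swallowed by the scale drop. Concretely there are three scales $\lambda<\lambda_2<\lambda_1$: the strengthened relaxation \ref{lp:ext} is solved at $\lambda_1$; a discretised auxiliary LP \ref{lp:aux} at $\lambda_2=\lambda_1/\tau$ has no larger objective (\cref{lemma:iter:init}) and stays non-increasing through iterative rounding (\cref{lemma:iter:feasible}); taking $\lambda=\frac{\tau-1}{3\tau-1}\lambda_2$ then gives $f\bigl(\lambda\cdot\frac{3\tau-1}{\tau-1}D_{l_j}\bigr)\le f(\lambda_2 D_{l_j})$, so the auxiliary-LP value upper-bounds the \ref{lp:1}-objective at scale $\lambda$ (\cref{lemma:iter:final}). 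Monotonicity is invoked in \emph{that} direction. After further shrinkage to handle the two-fraction integrality fix of \cref{theorem:rounding}, one is forced down to $\lambda\approx 0.0089$, and the $1/\lambda$ from \cref{theorem:ordered1} is what yields $127$. Your sketch also omits the per-client radius caps $\{R_j\}$ of \cref{theorem:gap} and the separate almost-integral-to-integral argument of \cref{theorem:rounding}; both are indispensable and absorb most of the parameter budget.
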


%%??

\subsection{Application: Ordered Matroid/Knapsack Median}

In $\omm$, we instantiate \orderedcluster with $\scrF$ being the set of independent sets of an arbitrary matroid on $\calF$ and $\scrC=\{\calC\}$. In $\okm$, each facility in $\calF$ has a weight $\wt_i\geq0$, $\scrF$ is the set of facility subsets with total weight at most $W$, and $\scrC=\{\calC\}$. The following theorem arises from the same reduction by \cref{theorem:ordered1} and a similar iterative rounding algorithm as $\oko$. We provide proofs and details of the algorithms in \cref{appendix:matroid} and \cref{appendix:knapsack}.

\begin{theorem}
There exist a polynomial time $19.8$-approximation for $\omm$, and a polynomial time $41.6$-approximation for $\okm$.
\end{theorem}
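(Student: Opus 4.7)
The plan is to mirror the three-phase strategy behind \cref{theorem:robust-main} for $\oko$, substituting the iterative rounding engine by the corresponding routines for matroid median and knapsack median. The first step is to apply \cref{theorem:ordered1} to the \orderedcluster formulations of $\omm$ and $\okm$, enumerate the $(n_0/\epsilon)^{O(1/\epsilon)}$ candidate functions, and fix attention on a faithful non-decreasing, piece-wise linear $f$. Because $\scrC=\{\calC\}$ in both problems, every client must be served, so the outlier-selection layer of pre-processing from $\oko$ disappears and \cref{theorem:ordered1} reduces the task to a purely linear-objective problem $\scrJ_{f,\lambda}$ whose only combinatorial constraint is $F\in\scrF$---a matroid in the first case and a knapsack in the second.

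Next I would write the natural LP for $\scrJ_{f,\lambda}$, obtained from $\mathrm{LP}(f,\lambda)$ by removing the outlier constraint $\sum_{i,j}x_{ij}\geq m$, tightening $\sum_i x_{ij}\leq 1$ to equality, and replacing $\sum_i y_i=k$ with either the matroid polytope of $\calM=(\calF,\calI)$ for $\omm$, or the knapsack polytope $\sum_i\wt_iy_i\leq W$ for $\okm$. In the knapsack case I would additionally guess the heaviest facility used by the optimum, as is standard to close the well-known integrality gap of the basic knapsack-median LP. I would then apply the $\oko$-style sparsification: bump $\lambda$ up to a slightly larger $\lambda'\in(0,1]$, seed the solution with a constant-size set $S_0$ of must-have facilities via a greedy clustering step, and augment the LP with ball-type sparsity constraints tailored to the downstream rounding procedure. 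These augmentations compose cleanly with either the matroid polytope or the strengthened knapsack polytope, because $f_{\lambda'}$ enters the objective only through fixed coefficients $\hw_t$.

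In the third phase I would plug the strengthened fractional optimum into the iterative rounding algorithms of Krishnaswamy~\etalcite{krishnaswamy2018constant}: their matroid-median rounding for $\omm$, and their knapsack-median rounding (with the standard pseudo-solution followed by a repair step) for $\okm$. Each produces an integral $(F,C)$ whose linear objective under $\scrJ_{f,\lambda'}$ is at most a constant $\gamma$ times the LP optimum of $\scrJ_{f,\lambda'}$. Since $f$ was fixed independently of $\lambda'$, monotonicity of $f$ with $\lambda<\lambda'$ implies that the same $(F,C)$ is feasible for $\mathrm{LP}(f,\lambda)$ with objective no larger than $\gamma$ times the optimum of $\scrJ_{f,\lambda'}$. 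Chaining this with the two assertions of \cref{theorem:ordered1} yields an end-to-end bound of the form $\lambda^{-1}\bigl(\gamma\lambda'(1+8\epsilon)+(1+4\epsilon)\bigr)\opt$; balancing $\lambda<\lambda'\in(0,1]$ together with the constants contributed by the pre-processing and by each rounding routine produces the advertised $19.8$ for $\omm$ and $41.6$ for $\okm$.

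I expect $\okm$ to be the genuine technical obstacle. The iterative rounding procedure for knapsack median natively yields a pseudo-solution that may violate the weight budget by a single facility, and repairing it with a swap is traditionally analysed linearly in the metric. Here each distance is weighted by a step-coefficient $\hw_t$, so a single swap can push a client across an interval boundary and amplify its reduced cost by more than the corresponding distance ratio. Controlling this amplification---either by guessing a sufficient part of the optimal weight profile so that swaps stay inside a single interval, or by absorbing the blow-up through the pre-scaling $\lambda$---is the part where the analysis must be redone most carefully, and is the source of the larger $41.6$ constant relative to the $19.8$ obtained in the matroid case, which has no such repair step.
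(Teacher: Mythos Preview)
Your high-level plan---use \cref{theorem:ordered1} to pass to a reduced instance, then run the iterative-rounding engine of~\cite{krishnaswamy2018constant}---matches the paper. But the paper diverges from your proposal in one concrete respect for each problem, and in the matroid case this matters for hitting the advertised constant.

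For $\omm$, the paper's key observation is that \emph{no pre-processing is needed at all}. There are no outliers, so every client lands in $\cfull$, and at termination the tight constraints of the auxiliary LP consist of a partition matroid (the disjoint outer balls $\{F_j:j\in\ccore\}$) intersected with the given matroid $\calM$; hence the basic optimal solution is automatically integral. The entire analysis then collapses to $\lambda^{-1}\bigl(\tfrac{\tau(3\tau-1)}{\tau-1}\lambda(1+\epsilon)+(1+O(\epsilon))\bigr)$ subject to $\tfrac{\tau(3\tau-1)}{\tau-1}\lambda\leq 1$, which optimizes to $10+4\sqrt{6}<19.8$. Your proposal instead runs the full $\oko$-style sparsification, seeding $S_0$ and adding ball constraints; this introduces the $(2+\delta)/2$ and $(1+\delta)/(1-\delta)$ factors from \ref{con:sparse3} into the bound. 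You can only recover $19.8$ by driving $\delta,\rho\to 0$, at which point the pre-processing becomes vacuous---so the detour is not wrong, but it obscures the reason the constant is what it is, and your write-up does not indicate you would take that limit.

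For $\okm$, your instinct that the repair step is the crux is exactly right, and the paper resolves it precisely by the second mechanism you suggest: absorbing the interval-crossing blow-up through the pre-scaling $\lambda$. Concretely, the paper uses sparsification (analogs of \cref{theorem:sparse} and \cref{theorem:gap}, without the $2/(2+\delta)$ scaling in the $R_j$ construction) and then, when closing the heavier fractional facility $i_2$, bounds the reconnection cost of nearby clients by choosing $\lambda$ small enough that $f(\lambda\,d(j,i^\star))\leq f(d(j,i_2))$ for the ``far'' clients and by invoking the sparsity bound \ref{property:knap:small-backup} for the ``near'' ones; this yields a $3\rho U$ repair cost. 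Your proposed extra step of guessing the heaviest optimal facility is not used and is unnecessary here---the integrality gap of the knapsack-median LP is handled entirely by the sparsification plus strengthened LP, as in~\cite{krishnaswamy2018constant}. The final optimization sets $\delta=2/3$, $\sigma=\tfrac{\tau-1}{\tau(3\tau-1)}=5-2\sqrt{6}$, giving $22+8\sqrt{6}<41.6$.
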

\section{Fault-Tolerant Ordered $k$-Median}\label{section:fault}

In this section, we consider the $\oftm$ problem. We first consider a simpler ``$\mathrm{Top}\text{-}\ell$ version'' of the problem, where the given non-increasing non-negative vector $w$ consists of only 1s and 0s, and thus the objective sums up the $\ell$ largest service costs. For the $\mathrm{Top}\text{-}\ell$ version, we obtain a randomized constant-factor approximation using LP rounding. Then for $\oftm$ with general input vectors, using the following observation \cite{byrka2018constant,chakrabarty2019approximation,goel2006simultaneous} where $n=|\calC|$, one decomposes the ordered objective $cost(w;\vcc)$ into a conic combination of $\mathrm{Top}\text{-}\ell$ objectives.
\begin{equation}
	\sum_{\ell\in[n]}(w_\ell-w_{\ell+1})\topl{\ell}{\vcc}
	=\sum_{\ell\in[n]}(w_\ell-w_{\ell+1})\sum_{\ell'\leq\ell}\vcc^\da_{\ell'}
	=\sum_{\ell\in[n]}w_\ell\vcc^\da_\ell
	=cost(w;\vcc).\label{eq:conic}
\end{equation}
Using a new LP relaxation and the linearity of expectation, the same LP rounding algorithm provides a similar constant-factor approximation guarantee.

\subsection{The Simple Top-$\ell$ Case}

We start with a simpler problem, where the given vector $w$ has $\ell$ leading 1s followed by 0s, $\ell\in[n]$. Equivalently, the objective is the so-called ``top-$\ell$ norm'' $\mathrm{Top}_\ell(\vcc)$, by adding up the $\ell$ largest entries in the service cost vector $\vcc$. Fix an unknown optimal solution with optimum $\opt$, and let the vector of \emph{individual connection costs} be $\vce=\{d(i,j):i,j\text{ connected in the optimum}\}$. Since there are only a polynomial number of possible values for $\vce^\da_\ell$, that is, the $\ell$-th largest entry in $\vce$, we enumerate all possible values and assume we have guessed the correct one $T_\ell=\vce^\da_\ell$ in what follows.

We first define the relaxation for this Top-$\ell$ version of $\oftm$. Since in the optimal solution, $\opt$ is the sum of the $\ell$ largest service costs, and $T_\ell$ is the $\ell$-th largest \emph{individual connection cost}, it is easy to see that the sum of individual connection costs that are larger than $T_\ell$ is at most $\opt$. This simple observation motivates a new constraint in the relaxation (see \cref{lp:t1} below).

Let $x_{ij}\in[0,1]$ denote the extent of connection between facility $i\in\calF$ and client $j\in\calC$, and $y_i\in[0,1]$ denote the extent we open $i\in\calF$. For any $T\geq0$, define $\calL_{T}(i,j)=d(i,j)$ if $d(i,j)\geq T$ and 0 otherwise. The following relaxation contains an exponential number of constraints in \cref{lp:t2}, yet given any $(x,y,R_\ell)$, an efficient separation oracle need only check whether the $\ell$ clients with the largest costs $\sum_{i\in\calF}x_{ij}d(i,j)$ make for a sum larger than $R_\ell$. Thus, \ref{lp:t0} can be efficiently solved using the ellipsoid algorithm.
\begin{alignat*}{3}
	\text{min\quad}&&R_\ell&\geq0 \tag{$\mathrm{Top\text{-}LP}$}\label{lp:t0}\\
	\text{s.t.\quad}&&\sum_{j\in\calC}\sum_{i\in\calF}x_{ij}\calL_{1.001\cdot T_\ell}(i,j)&\leq R_\ell\tag{$\mathrm{Top\text{-}LP}$.1}\label{lp:t1}\\
	&&\sum_{j\in S}\sum_{i\in\calF}x_{ij}d(i,j)&\leq R_\ell &&\qquad\forall S\subseteq\calC,|S|=\ell\tag{$\mathrm{Top\text{-}LP}$.2}\label{lp:t2}\\
	&&\sum_{i\in\calF}x_{ij}&=r_j &&\qquad\forall j\in\calC\tag{$\mathrm{Top\text{-}LP}$.3}\label{lp:t3}\\
	&&\sum_{i\in\calF}y_i&=k\tag{$\mathrm{Top\text{-}LP}$.4}\label{lp:t4}\\
	&&0\leq x_{ij}\leq y_i&\leq1 &&\qquad\forall i\in\calF,j\in\calC.\tag{$\mathrm{Top\text{-}LP}$.5}\label{lp:t5}
\end{alignat*}

\begin{lemma}\label{lemma:feasible:ft:top}
    \ref{lp:t0} is feasible and the optimal solution satisfies $R_\ell\leq\opt$.
\end{lemma}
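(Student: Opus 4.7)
The plan is to exhibit a feasible LP solution with $R_\ell=\opt$, constructed directly from the optimal integral solution. Let $(F^\star,\sigma^\star)$ be a fixed optimal $\oftm$ solution with service cost vector $\vcc^\star$ and objective $\opt=\topl{\ell}{\vcc^\star}$. Define $y^\star_i=1$ if $i\in F^\star$ (else $0$), $x^\star_{ij}=1$ if $\sigma^\star$ assigns $i$ to $j$ (else $0$), and $R_\ell=\opt$. I will then check each of \ref{lp:t1}--\ref{lp:t5}.

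Constraints \ref{lp:t3}, \ref{lp:t4}, \ref{lp:t5} hold by construction: every client receives exactly $r_j$ distinct facilities, exactly $k$ facilities are opened, and $x^\star_{ij}$ can be nonzero only when $i\in F^\star$. For \ref{lp:t2}, observe that for each client $j$, $\sum_{i\in\calF}x^\star_{ij}d(i,j)=\vcc^\star_j$, so for any $\ell$-subset $S\subseteq\calC$ we have $\sum_{j\in S}\sum_{i\in\calF}x^\star_{ij}d(i,j)=\sum_{j\in S}\vcc^\star_j\leq\topl{\ell}{\vcc^\star}=\opt=R_\ell$.

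The only constraint that needs a short argument is \ref{lp:t1}. The LHS equals the sum of individual connection distances $d(i,j)$ used in the optimal assignment that are at least $1.001\cdot T_\ell$. Because $T_\ell=\vce^\da_\ell$ is by definition the $\ell$-th largest entry of $\vce$ and $1.001\,T_\ell>T_\ell$, strictly fewer than $\ell$ individual connections are this large; let $J\subseteq\calC$ be the set of clients that receive at least one such connection, so $|J|<\ell$. Since every individual connection distance incident to $j$ is a non-negative summand of $\vcc^\star_j$, we get
\begin{equation*}
\sum_{j\in\calC}\sum_{i\in\calF}x^\star_{ij}\calL_{1.001\cdot T_\ell}(i,j)\;\leq\;\sum_{j\in J}\vcc^\star_j\;\leq\;\topl{\ell}{\vcc^\star}\;=\;\opt\;=\;R_\ell.
\end{equation*}

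Hence $(x^\star,y^\star,R_\ell)$ is feasible with objective $\opt$, proving that \ref{lp:t0} is feasible and its optimum is at most $\opt$. The only mildly subtle point is constraint \ref{lp:t1}: one must recognize that the $1.001$ slack around $T_\ell$ is exactly what forces the set of ``large'' individual connections to have size $<\ell$, letting us dominate their total by a sum of at most $\ell$ distinct client service costs; the rest is a direct verification.
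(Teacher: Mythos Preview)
Your approach is the same as the paper's: exhibit the optimal integral solution as a feasible LP point with $R_\ell=\opt$ and verify each constraint directly. The checks of \ref{lp:t2}--\ref{lp:t5} are fine.

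However, your argument for \ref{lp:t1} has a gap when $T_\ell=0$. In that case $1.001\,T_\ell=0$ is \emph{not} strictly greater than $T_\ell$, so the key step ``strictly fewer than $\ell$ individual connections are this large'' fails: $\calL_0(i,j)=d(i,j)$ for every pair, hence every used connection qualifies, $J=\calC$, and $|J|<\ell$ need not hold. The paper treats this case separately, observing that $\vce^\da_\ell=0$ forces all but at most $\ell-1$ individual connections to be zero, so the total connection cost $\sum_{i,j}x^\star_{ij}d(i,j)$ (which is the LHS of \ref{lp:t1} when the threshold is $0$) already equals $\opt$. Your proof is easily patched by this observation, but as written it does not cover $T_\ell=0$.
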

\begin{proof}
Let $(x^\star,y^\star,\opt)$ be the solution where, in the optimal solution, $x_{ij}^\star=1$ if client $j$ is connected to facility $i$ and 0 otherwise, $y_{i}^\star=1$ if facility $i$ is open and 0 otherwise. It satisfies the last three sets of constraints in \ref{lp:t0} by definition.

If $T_\ell=0$, \cref{lp:t1} becomes $\sum_{i,j}x_{ij}d(i,j)\leq R_\ell$ and \cref{lp:t2} is obsolete. Because we have $\vce^\da_\ell=T_\ell=0$, it is easy to verify that $\opt$ is equal to the sum of \emph{all} individual connection costs, hence \cref{lp:t1} is satisfied, and the lemma follows. We assume $T_\ell>0$ in what follows.

Since $1.001\cdot T_\ell>T_\ell$, the LHS of \cref{lp:t1} is the sum of all individual connection costs that are at least $1.001\cdot T_\ell$, which is at most $\opt$ according to the above analysis. In \cref{lp:t2}, the maximum LHS is exactly the sum of the $\ell$ largest service costs in the optimum, thus $(x^\star,y^\star,\opt)$ satisfies these constraints as well, whence the lemma follows.
\qed
\end{proof}

%\subsubsection{Stochastic Rounding.}
We obtain an optimal solution $(x,y,R_\ell)$ to~\ref{lp:t0}, and denote $\calF_j=\{i\in\calF:x_{ij}>0\}$ the (fractionally) connected facilities for $j$. Using standard facility duplication, e.g., \cite{charikar2012dependent,hajiaghayi2016constant}, we assume $x_{ij}\in\{0,y_i\}$ for each $i\in\calF,j\in\calC$, and $\calF_j$ contains the nearest $r_j$ volume of facilities to $j$, that is, $\sum_{i\in\calF_j}y_i=r_j$. Define $\dav(j)=r_j^{-1}\sum_{i\in\calF_j}x_{ij}d(i,j)$, thus $r_j\dav(j)$ represents the service cost of $j$ in the LP solution. We follow the algorithm in~\cite{hajiaghayi2016constant}, but construct a slightly different auxiliary LP for stochastic rounding. We partition the facilities into disjoint subsets of $\calF$ called ``bundles'', such that each has unit volume of facilities, and each client $j$ keeps a queue of $r_j$ distinct bundles. The stochastic rounding opens one facility in each bundle, such that the sum of distances from $j$ to the open facilities in its queue does not deviate much from its service cost $r_j\dav(j)$.

Suppose the stochastically rounded integral solution is $z'\in\{0,1\}^{\calF}$. Let $\hat F=\{i\in\calF:z_i'=1\}$ be the open facilities and $\vcc=\{d_{r_j}(j,\hat F):j\in\calC\}$ be the service cost vector. Recall that $\calL_{T}(i,j)$ evaluates to $d(i,j)$ if $d(i,j)\geq T$ and 0 otherwise. We have the following core lemma on each $d_{r_j}(j,\hat F)$. 

\begin{lemma}\label{lem-core}
	For any $T\geq0$, there exists a random variable $X_j$ s.t.
	$d_{r_j}(j,\hat F)\leq3r_j\dav(j)+435.6T+X_j$
	holds with prob. 1 and $\E[X_j]\leq226.7\sum_{i\in\calF}x_{ij}\calL_{T}(i,j)$.
\end{lemma}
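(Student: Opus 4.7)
The plan is to adapt the bundle-based stochastic rounding of Hajiaghayi~\etal for fault-tolerant $k$-median, combining it with a careful two-range charging scheme that separates per-bundle contributions at distance scale $\le T$ from those at distance scale $>T$. Recall that after the preprocessing, each client $j$ holds a queue of $r_j$ pairwise disjoint unit-$y$-volume bundles $B_1,\dots,B_{r_j}\subseteq\calF_j$, and the rounding opens exactly one facility $\pi(B_\ell)\in B_\ell$ in each. Since these bundles are disjoint, we obtain the deterministic inequality $d_{r_j}(j,\hat F)\le\sum_{\ell=1}^{r_j}d(j,\pi(B_\ell))$, so it suffices to control this sum.

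Next, I would classify each bundle as \emph{close} if $\max_{i\in B_\ell}d(i,j)\le\tau T$ for a threshold $\tau$ to be tuned (and \emph{far} otherwise). For close bundles, $d(j,\pi(B_\ell))\le\tau T$ holds deterministically, and summing these contributions absorbs into the $435.6\,T$ term. For far bundles, the bundle-width guarantee of the rounding in~\cite{hajiaghayi2016constant}—which keeps all facilities inside a bundle within a constant factor of the bundle's worst distance to $j$—implies that a constant fraction of the $y$-volume inside $B_\ell$ has distance exceeding $T$ from $j$, and hence $\sum_{i\in B_\ell}x_{ij}\calL_{T}(i,j)$ is proportional to $\max_{i\in B_\ell}d(i,j)$. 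This lets me charge the random $d(j,\pi(B_\ell))$ of each far bundle against the corresponding $\calL_T$-mass, thereby producing the random variable $X_j$.

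Concretely, I would set $X_j:=\bigl(\sum_{\ell=1}^{r_j}d(j,\pi(B_\ell))-3r_j\dav(j)-435.6\,T\bigr)_+$, so the stated inequality holds with probability one by construction. To bound $\E[X_j]$, I would first use the marginal-preserving property of the bundle rounding together with the standard per-bundle expectation estimate $\E[d(j,\pi(B_\ell))]\le 3\sum_{i\in B_\ell}y_i\,d(i,j)$ (the factor of $3$ is exactly the slack incurred by the ``forwarding'' triangle-inequality step of the Hajiaghayi scheme), so $\sum_\ell\E[d(j,\pi(B_\ell))]\le 3r_j\dav(j)$ already absorbs the average-case cost. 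The remaining excess, which only comes from far bundles, is then bounded using the close/far dichotomy: each far bundle's expected contribution is at most a constant times $\sum_{i\in B_\ell}x_{ij}\calL_T(i,j)$, and summing via linearity of expectation yields $\E[X_j]\le 226.7\sum_{i\in\calF}x_{ij}\calL_T(i,j)$.

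The main obstacle will be pinning down the precise constants $435.6$ and $226.7$, which are linked through the threshold $\tau$: enlarging $\tau$ weakens the $T$-bound but strengthens the $\calL_T$-charging, and vice versa. I expect both numbers to fall out of the explicit bundle-width bound of~\cite{hajiaghayi2016constant} and a short linear-combination argument over the two cases. A secondary subtlety is verifying that the dependent (negatively correlated) bundle rounding from~\cite{hajiaghayi2016constant}—which is needed to preserve the $3 d^{av}_\ell$ expectation bound inside each bundle—remains compatible with the $T$-dependent truncation; this is routine, since the truncation only affects the analysis and not the rounding algorithm itself.
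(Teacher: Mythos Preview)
Your proposal has a genuine gap: the ``bundle-width guarantee'' you invoke does not exist. A bundle $U_{j,p}$ in the Hajiaghayi~\etal construction can contain facilities at distance $0$ from $j$ alongside facilities at distance $3\dmax^p(j)$; there is no lower bound on intra-bundle distances. Hence a ``far'' bundle (one with $\max_{i\in B_\ell}d(i,j)>\tau T$) can have almost all of its $y$-volume at distance $\le T$, so $\sum_{i\in B_\ell}x_{ij}\calL_T(i,j)$ need not be comparable to $\max_{i\in B_\ell}d(i,j)$. Your charging of far bundles to the $\calL_T$-mass therefore fails. A related issue is that bundles in $\que_j$ are not subsets of $\calF_j$ (a bundle created by another client may be appended to $\que_j$), so $\sum_{\ell}\sum_{i\in B_\ell}y_i\,d(i,j)$ is not $r_j\dav(j)$, and your expectation claim $\sum_\ell\E[d(j,\pi(B_\ell))]\le 3r_j\dav(j)$ is not justified as written.

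What the paper actually does is quite different and uses a second structure that your plan omits entirely. One first observes that the only problematic term in the deterministic bundle bound is the last one, since $\sum_{p<r_j}3\dmax^p(j)\le 3r_j\dav(j)$ always; the difficulty is controlling $3\dmax^{r_j}(j)$. Clients are split into \emph{non-dangerous} ($\dmax^{r_j}(j)\le 45\,\dav^{r_j}(j)$, where the bundle bound alone suffices) and \emph{dangerous}. For dangerous clients one builds a filtered set $D'$ of pairwise ``non-conflicting'' clients and a laminar family $\{B_j'\}_{j\in D'}$ of balls around them, and the auxiliary LP additionally enforces $\sum_{i\in B_j'}z_i\in[r_j-1,r_j]$. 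This is what lets you say that with controlled probability either $r_j$ facilities open inside $B_j'$ (so the $r_j$-th connection is cheap) or exactly $r_j-1$ do (and the deficit event has probability $r_j-y(B_j')$, which is chargeable to $\davt{T}^{r_j}(j)$). For $j\in D\setminus D'$ one reroutes through a nearby $j'\in D'$ with the same $r_j$, and the case split on $d(j,j')\lessgtr\alpha T$ together with thresholds $\dmin^{r}(j')\le\beta T<\dmax(j')/15$ is precisely where the two parameters $\alpha,\beta$ enter; the stated constants are $435.6\approx\alpha+45\beta$ and $226.7\approx\max\{46\beta/(\beta-\alpha),\,(52\alpha-36)/(\alpha-6)\}$ at $\alpha=7.58$, $\beta=9.51$. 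None of this machinery is captured by a close/far dichotomy on bundles alone.
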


We provide the stochastic rounding algorithm and the rather extended proof of \cref{lem-core} in \cref{appendix:fault}. Next, we prove the main result of this section using \cref{lem-core}.

\begin{theorem}\label{theorem-top-l}
    There exists a randomized polynomial time $666$-approximation for $\mathrm{Top}\text{-}\ell$ version of $\oftm$.
\end{theorem}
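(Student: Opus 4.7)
The plan is to combine Lemma~\ref{lemma:feasible:ft:top}, Lemma~\ref{lem-core}, and a short combinatorial bound on $\ell T_\ell$ so that the approximation ratio comes out just under $666$. First, I enumerate the polynomially many candidate values of $T_\ell=\vce^\da_\ell$ and, for each, solve \ref{lp:t0} by the ellipsoid method using the separation oracle noted after the LP. For the correct guess, Lemma~\ref{lemma:feasible:ft:top} guarantees $R_\ell\le\opt$. I then run the stochastic rounding algorithm of the appendix to obtain an open set $\hat F$ and invoke Lemma~\ref{lem-core} with the single threshold $T=1.001\cdot T_\ell$ for every client $j$, giving $d_{r_j}(j,\hat F)\le 3r_j\dav(j)+435.6\cdot 1.001\cdot T_\ell+X_j$ with probability one, with $\E[X_j]\le 226.7\sum_{i\in\calF}x_{ij}\calL_{1.001T_\ell}(i,j)$.

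Now I upper-bound $\E[\topl{\ell}{\vcc}]$. Since each $X_j$ may be taken nonnegative, summing the deterministic bound above and using that any fixed $\ell$-subset $S$ satisfies $\sum_{j\in S}(a_j+c+b_j)\le \max_S\sum_{j\in S}a_j+c\ell+\sum_{j\in\calC}b_j$ when $b_j\ge 0$, I obtain for every realization
\begin{equation*}
\topl{\ell}{\vcc}\le \max_{|S|=\ell}\sum_{j\in S}3r_j\dav(j)\;+\;435.6\cdot 1.001\cdot \ell T_\ell\;+\;\sum_{j\in\calC}X_j.
\end{equation*}
By \ref{lp:t2} the first maximum is at most $3R_\ell\le 3\opt$, and by \ref{lp:t1} the last term has expectation at most $226.7 R_\ell\le 226.7\opt$. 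The only nontrivial piece left is a matching linear bound on $\ell T_\ell$.

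I expect this last step to be the main (though still short) obstacle: the claim is $\ell T_\ell\le \opt$. To prove it I consider $J^\star=\{j:\vcc^\star_j\ge T_\ell\}$ in the fixed optimum. Every individual connection distance in the optimum that is $\ge T_\ell$ must come from some $j\in J^\star$, so the number of such distances, which is at least $\ell$ by definition of $T_\ell$, implies $\sum_{j\in J^\star}\vcc^\star_j\ge \ell T_\ell$. If $|J^\star|\le \ell$ then $\sum_{j\in J^\star}\vcc^\star_j\le \opt$ directly, while if $|J^\star|\ge \ell$ then the top-$\ell$ service costs are each $\ge T_\ell$, so $\opt\ge \ell T_\ell$ again. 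Plugging everything in,
\begin{equation*}
\E[\topl{\ell}{\vcc}]\le \bigl(3+435.6\cdot 1.001+226.7\bigr)\opt<666\,\opt.
\end{equation*}
Since the rounding and the LP work run in polynomial time and the enumeration of $T_\ell$ has polynomial size, returning the best solution yields the claimed randomized polynomial time $666$-approximation.
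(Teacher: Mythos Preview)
Your proof is correct and follows essentially the same approach as the paper: invoke \cref{lem-core} with $T=1.001\cdot T_\ell$, split $\E[\topl{\ell}{\vcc}]$ into the three pieces bounded respectively by \eqref{lp:t2}, $\ell T_\ell$, and \eqref{lp:t1}, and combine with $R_\ell\le\opt$ and $\ell T_\ell\le\opt$. The only cosmetic difference is that the paper obtains $\ell T_\ell\le\opt$ in one line via $\opt\ge\vce^\da_1+\cdots+\vce^\da_\ell\ge\ell\cdot\vce^\da_\ell$ (the top-$\ell$ service costs dominate the top-$\ell$ individual distances since the latter are spread over at most $\ell$ clients), whereas your case split on $|J^\star|$ reaches the same conclusion.
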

\begin{proof}
We show that the stochastic solution $\hat F$ satisfies the approximation ratio in expectation.
Let $T=1.001\cdot T_\ell$ in \cref{lem-core} and recall $T_\ell=\vce^\da_\ell$, i.e., the $\ell$-th largest individual connection cost in the optimal solution. Since $\topl{\ell}{\vcc}$ only contains the service costs of $\ell$ clients, using \cref{lem-core}, one has
\begin{align*}
    &\E[\topl{\ell}{\vcc}]\leq\max_{\substack{S\subseteq\calC\\|S|=\ell}}\sum_{j\in S}(3r_j\dav(j)+435.6\cdot 1.001\cdot T_\ell)+\sum_{j\in\calC}\E[X_j]\\
    &\leq3\max_{\substack{S\subseteq\calC\\|S|=\ell}}\sum_{j\in S}\sum_{i\in\calF}x_{ij}d(i,j)+436.1\ell\cdot T_\ell+226.7\sum_{j\in\calC}\sum_{i\in\calF}x_{ij}\calL_{1.001\cdot T_\ell}(i,j)\\
    &\leq229.7R_\ell+436.1\ell\cdot T_\ell,
\end{align*}
where we use the feasibility of $(x,y,R_\ell)$ and the constraints \cref{lp:t1} and \cref{lp:t2}. Further, the sum of the $\ell$ largest service costs in the optimal solution is at least the sum of the $\ell$ largest individual connection costs in $\vce$, thus $\opt\geq\vce^\da_1+\cdots+\vce^\da_\ell\geq\ell\cdot \vce^\da_\ell=\ell\cdot T_\ell$. Combined with the inequality above and \cref{lemma:feasible:ft:top}, we obtain $\E[\topl{\ell}{\vcc}]\leq666\cdot\opt$, whence the theorem follows.
\end{proof}

%%??>>

\subsection{The General Case}

For $\oftm$ with general input vectors, we consider a relaxation similar to \ref{lp:t0}, but with an objective that simulates the ordered objective using the decomposition in \cref{eq:conic}. 

In the pre-processing step, we use the sparsification methods of~\cite{chakrabarty2019approximation}. Let $n=|\calC|,\delta>0$ and define $\pos_{n,\delta}=\{\min\{\lceil(1+\delta)^s\rceil,n\}:s\geq0\}\subseteq[n]$. For $\ell\in\pos_{n,\delta}$, let $\nextp{\ell}$ be the smallest element in $\pos_{n,\delta}$ that is larger than $\ell$ with $\nextp{n}=n+1,\,\nextp{0}=1$. We abbreviate $\pos_{n,\delta}$ to $\pos$ when $n$ and $\delta$ are clear from the context. Given $w$ and $\pos$, define $\ww_i=w_i$ for $i\in\pos$. For $i\in[n]\setminus\pos$ and $\ell\in\pos$ s.t. $\ell<i<\nextp{\ell}$, define $\ww_i=w_{\nextp{\ell}}$, and $w_{n+1}=\ww_{n+1}=0$. We have the following simple lemma due to Chakrabarty and Swamy~\cite{chakrabarty2019approximation}.

\begin{lemma}\label{lem-sparsify}(\cite{chakrabarty2019approximation})
	For $\vec{v}\in\mathbb{R}_{\geq0}^n$, $cost(\ww; \vec{v})\leq cost(w;\vec{v})\leq (1+\delta)cost(\ww;\vec{v})$.
\end{lemma}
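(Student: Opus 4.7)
The plan is to establish the two inequalities separately, with the lower bound being elementary and the upper bound reducing to the Top-$\ell$ decomposition already invoked in \cref{eq:conic}.

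For $cost(\ww;\vec{v})\leq cost(w;\vec{v})$, I would first verify that $\ww\leq w$ coordinatewise: at indices $i\in\pos$ this is an equality by definition, while at $\ell<i<\nextp{\ell}$ one has $\ww_i=w_{\nextp{\ell}}\leq w_i$ by the monotonicity of $w$. Since $\ww$ is also non-increasing (a quick case-check over whether $i,\,i+1$ lie in $\pos$ or not), pairing with the non-negative sorted vector $\vec{v}^\da$ and summing coordinatewise yields the inequality directly.

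For $cost(w;\vec{v})\leq(1+\delta)\,cost(\ww;\vec{v})$, my plan is to apply the identity $cost(u;\vec{v})=\sum_{i=1}^n(u_i-u_{i+1})\topl{i}{\vec{v}}$ from \cref{eq:conic} to both $w$ and $\ww$. Because $\ww$ is constant on each block $B_\ell:=[\ell,\nextp{\ell})$ indexed by $\ell\in\pos$, the differences $\ww_i-\ww_{i+1}$ vanish inside each block and telescope at the boundaries, producing
\begin{equation*}
  cost(\ww;\vec{v})\;=\;\sum_{\ell\in\pos}(w_\ell-w_{\nextp{\ell}})\,\topl{\ell}{\vec{v}}.
\end{equation*}
Grouping the corresponding decomposition of $cost(w;\vec{v})$ by the same blocks, I would then bound each $\topl{i}{\vec{v}}$ for $i\in B_\ell$ by $(i/\ell)\,\topl{\ell}{\vec{v}}\leq\frac{\nextp{\ell}-1}{\ell}\,\topl{\ell}{\vec{v}}$; the first inequality is because each of the additional entries $\vec{v}^\da_{\ell+1},\dots,\vec{v}^\da_i$ is at most $\vec{v}^\da_\ell\leq\topl{\ell}{\vec{v}}/\ell$. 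Telescoping the non-negative differences $w_i-w_{i+1}$ then collapses the per-block sum into $\frac{\nextp{\ell}-1}{\ell}(w_\ell-w_{\nextp{\ell}})\,\topl{\ell}{\vec{v}}$.

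The main (mild) obstacle is the arithmetic fact $\nextp{\ell}-1\leq(1+\delta)\ell$ for every $\ell\in\pos$: writing $\ell=\lceil(1+\delta)^s\rceil$ and, in the non-saturated case, $\nextp{\ell}=\lceil(1+\delta)^{s+1}\rceil\leq n$, one has $\nextp{\ell}-1<(1+\delta)^{s+1}\leq(1+\delta)\lceil(1+\delta)^s\rceil=(1+\delta)\ell$; the boundary cases $\nextp{\ell}=n$ (using $n<(1+\delta)^{s+1}\leq(1+\delta)\ell$) and $\ell=n$ (where $\nextp{n}=n+1$ and $w_{n+1}=0$ make the block contribute $w_n\topl{n}{\vec{v}}$ to both sides) are handled analogously. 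Summing the per-block bound over $\ell\in\pos$ and matching it with the closed-form expression for $cost(\ww;\vec{v})$ derived above gives $cost(w;\vec{v})\leq(1+\delta)\,cost(\ww;\vec{v})$, completing the argument.
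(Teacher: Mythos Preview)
Your argument is correct in substance. The paper does not prove \cref{lem-sparsify} at all --- it is quoted from \cite{chakrabarty2019approximation} --- so there is no ``paper's proof'' to compare against; your Top-$\ell$ decomposition via \cref{eq:conic} together with the block-wise bound $\topl{i}{\vec v}\le(i/\ell)\,\topl{\ell}{\vec v}$ is exactly the standard route, and your arithmetic for $\nextp{\ell}-1\le(1+\delta)\ell$ is fine once one takes $s$ to be the \emph{largest} exponent with $\lceil(1+\delta)^s\rceil=\ell$.

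One small slip worth fixing: $\ww$ is \emph{not} constant on $B_\ell=[\ell,\nextp{\ell})$, since $\ww_\ell=w_\ell$ while $\ww_i=w_{\nextp{\ell}}$ for $\ell<i<\nextp{\ell}$. What you actually need (and what is true) is that the finite differences $\ww_i-\ww_{i+1}$ vanish for every $i\notin\pos$ and equal $w_\ell-w_{\nextp{\ell}}$ at $i=\ell\in\pos$; this follows directly from the definition and yields the identity $cost(\ww;\vec v)=\sum_{\ell\in\pos}(w_\ell-w_{\nextp{\ell}})\topl{\ell}{\vec v}$, which is precisely the paper's \eqref{eqn:conic:sparse}. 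With that correction the proof goes through as written.
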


We consider a fixed optimum and individual distances $\vce=\{d(i,j):i,j\text{ connected in the optimum}\}$.
We try to guess the entries $\vce^\da_\ell,\,\ell\in\pos$. We first guess $\vce^\da_1$, which has a polynomial number of possible values, and define $T_\epsilon=\{\vce^\da_1(1+\epsilon)^{-s}:s\geq0\}\cap[\epsilon \vce^\da_1/n,+\infty)$ for a small $\epsilon>0$. We then guess a non-increasing sequence $\{T_\ell'\}_{\ell\in\pos}$ with entries from $T_\epsilon\cup\{0\}$, so that
$T_\ell'\in[\vce^\da_\ell,(1+\epsilon)\vce^\da_\ell)$ if $\vce^\da_\ell\geq\epsilon \vce^\da_1/n$ and 0 otherwise. Since $|T_\epsilon\cup\{0\}|=O(\log_{1+\epsilon}(n/\epsilon))$ and $\{T_\ell'\}_{\ell\in\pos}$ is non-increasing with length $|\pos|=O(\log_{1+\delta}n)$, using a basic counting method, the number of all possible non-increasing sequences $\{T_\ell'\}_{\ell\in\pos}$ is at most
$2^{O(\log_{1+\delta}n+\log_{1+\epsilon}(n/\epsilon))}=(n/\epsilon)^{O(1/\epsilon+1/\delta)}$.

Assume we have guessed the correct $\vce^\da_1$ and sequence $\{T_\ell'\}_{\ell\in\pos}$ with the desired properties. Define $T_\ell=T_\ell'+\epsilon \vce^\da_1/n$ for each $\ell\in\pos$, and thus $\vce^\da_\ell< T_\ell < (1+\epsilon)\vce^\da_\ell+\epsilon \vce^\da_1/n$ holds for all $\ell\in\pos$. 
With the guessed values $\{T_\ell\}_{\ell\in\pos}$ and sparsified weights $\ww$, our LP relaxation is given as follows, where $y_i$ is the extent we open facility location $i$ and $x_{ij}$ is the extent of connecting client $j$ to facility $i$. With similar constraints as \ref{lp:t0}, we can solve the relaxation using the ellipsoid algorithm.
\begin{alignat*}{3}
	\text{min\quad}&&\sum_{\ell\in\pos}(\ww_\ell-\ww_{\nextp{\ell}})R_\ell \tag{$\mathrm{FT}(\ww)$}\label{lp:00}\\
	\text{s.t.\quad}&&\sum_{j\in\calC}\sum_{i\in\calF}x_{ij}\calL_{T_\ell}(i,j)&\leq R_\ell &&\qquad\forall \ell\in\pos\tag{$\mathrm{FT}(\ww)$.1}\label{lp:01}\\
	&&\sum_{j\in S}\sum_{i\in\calF}x_{ij}d(i,j)&\leq R_\ell &&\qquad\forall S\subseteq\calC,|S|=\ell,\ell\in\pos\tag{$\mathrm{FT}(\ww)$.2}\label{lp:02}\\
	&&\sum_{i\in\calF}x_{ij}&=r_j &&\qquad\forall j\in\calC\tag{$\mathrm{FT}(\ww)$.3}\label{lp:03}\\
	&&\sum_{i\in\calF}y_i&=k\tag{$\mathrm{FT}(\ww)$.4}\label{lp:04}\\
	&&0\leq x_{ij}\leq y_i&\leq1 &&\qquad\forall i\in\calF,j\in\calC.\tag{$\mathrm{FT}(\ww)$.5}\label{lp:05}
\end{alignat*}

\begin{lemma}\label{lemma:feasible:ft}
	Denote $\opt$ the optimum of the original problem. \ref{lp:00} is feasible and has an optimal value at most $\opt$.
\end{lemma}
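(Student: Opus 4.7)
}
The plan is to exhibit an explicit feasible solution built from the (unknown) integral optimum and show that its objective equals $cost(\ww;\vco)$, which \cref{lem-sparsify} then bounds by $\opt$. Let $(x^\star,y^\star)$ be the $0/1$-indicators of a fixed integral optimum, so $x^\star_{ij}=1$ iff $j$ is assigned to $i$ and $y^\star_i=1$ iff $i$ is open, and let $\vco$ be its service cost vector. Set $R^\star_\ell := \topl{\ell}{\vco}$ for each $\ell\in\pos$. Constraints \eqref{lp:03}, \eqref{lp:04}, \eqref{lp:05} are immediate from the feasibility of the integral optimum. Constraint \eqref{lp:02} is direct as well: for any $S\subseteq\calC$ with $|S|=\ell$, the LHS equals $\sum_{j\in S}\vco_j \leq \topl{\ell}{\vco}=R^\star_\ell$.

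The only nontrivial check is \eqref{lp:01}. Under $x^\star$, its LHS is the sum of those individual connection distances in the optimum (i.e.\ entries of $\vce$) that are at least $T_\ell$. Since the pre-processing guarantees $T_\ell>\vce^\da_\ell$, at most $\ell-1$ entries of $\vce$ qualify, and they belong to at most $\ell-1$ distinct clients. Each such large distance is a summand of its client's service cost $\vco_j$, so the sum is bounded above by $\topl{\ell-1}{\vco}\leq\topl{\ell}{\vco}=R^\star_\ell$. Hence $(x^\star,y^\star,R^\star)$ is feasible for \ref{lp:00}.

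For the objective, I would apply the conic decomposition \eqref{eq:conic} to $\vco$ with the sparsified weights $\ww$, writing $cost(\ww;\vco)=\sum_{\ell\in[n]}(\ww_\ell-\ww_{\ell+1})\topl{\ell}{\vco}$. By the definition of $\ww$, the sequence is constant on each block $(\ell,\nextp{\ell}]$ for $\ell\in\pos$, so $\ww_i-\ww_{i+1}=0$ whenever $i\notin\pos$; the sum thus collapses to $\sum_{\ell\in\pos}(\ww_\ell-\ww_{\nextp{\ell}})\topl{\ell}{\vco}$, which is exactly the objective attained by our candidate. \cref{lem-sparsify} then yields $cost(\ww;\vco)\leq cost(w;\vco)=\opt$, closing the argument.

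The main (and essentially only) subtlety is the bookkeeping for \eqref{lp:01}: one must recognize that a sum over entries of $\vce$ above a threshold can be charged to the service costs of the few distinct clients hosting those entries, and hence bounded by a top-$\ell$ sum on $\vco$. The rest reduces to a telescoping identity enabled by the block structure of $\ww$.
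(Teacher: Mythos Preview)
Your proposal is correct and follows essentially the same approach as the paper: you exhibit the integral optimum with $R^\star_\ell=\topl{\ell}{\vco}$, verify feasibility (with the key observation for \eqref{lp:01} that $T_\ell>\vce^\da_\ell$ forces the large individual distances to lie in at most $\ell$ clients and hence be dominated by $\topl{\ell}{\vco}$), and compute the objective via the conic decomposition collapsed over $\pos$ together with \cref{lem-sparsify}. Your handling of \eqref{lp:01} is in fact slightly sharper (you get $\topl{\ell-1}{\vco}$ rather than the paper's $\topl{\ell}{\vco}$), but this makes no difference to the conclusion.
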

\begin{proof}
	Let $O\subseteq\calF$ be the optimal solution with $\vco=\{d_{r_j}(j,O):j\in\calC\}$ being the service cost vector. Define $x_{ij}^\star=1$ if $i$ is connected to $j$ in the optimum and $y_i^\star=1$ if $i\in O$. For $\ell\in\pos$, let $R_\ell^\star=\max_{S\subseteq\calC,|S|=\ell}\sum_{j\in S}\vco_j$, i.e., $\topl{\ell}{\vco}$. All other variables are 0. We show that $(x^\star,y^\star,R^\star)$ is a feasible solution to~\ref{lp:00} with objective value at most $\opt$.
	
	It is easy to check that \eqref{lp:03}, \eqref{lp:04} and \eqref{lp:05} are satisfied. For \eqref{lp:01}, the LHS is the sum of all connection distances that are at least $T_\ell$ in length in the optimum. Since $T_\ell>\vce^\da_\ell$, the LHS is at most the sum of the $\ell$ largest \emph{service costs} and thus, at most $R^\star_\ell$. For \eqref{lp:02}, the maximum LHS is exactly the sum of the $\ell$ largest service costs. Hence all the constraints are satisfied. Finally, we consider the objective value. Using the following simple equation (also see~\cite{chakrabarty2019approximation} and \eqref{eq:conic}), 
	\begin{equation}
		\sum_{\ell\in\pos}(\ww_\ell-\ww_{\nextp{\ell}})\topl{\ell}{\vco}=\sum_{\ell\in[n]}\ww_\ell\cdot\vco^\da_\ell=cost(\ww;\vco),\label{eqn:conic:sparse}
	\end{equation}
	and according to \cref{lem-sparsify}, the objective value is $cost(\ww;\vco)\leq cost(w;\vco)=\opt$.
\end{proof}

We use the same stochastic rounding algorithm as the $\mathrm{Top}\text{-}\ell$ case in the previous section, and \cref{lem-core} still holds. Let $\hat F\subseteq\calF$ be the set of open facilities. We are ready to prove the main result for $\oftm$ with general input vectors.

\begin{theorem}\label{theorem:fault-main}
There exists a randomized polynomial time $666$-approximation for $\oftm$.
\end{theorem}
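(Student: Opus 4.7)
The plan is to bound $\E[cost(w;\vcc)]$ by running the Top-$\ell$ analysis of \cref{theorem-top-l} separately for each $\ell\in\pos$ and then recombining via the conic decomposition. After using \cref{lem-sparsify} to absorb a $(1+\delta)$ factor and pass from $w$ to $\ww$, the identity \eqref{eqn:conic:sparse} applied to the random service-cost vector $\vcc$ gives
\[
\E[cost(\ww;\vcc)] \;=\; \sum_{\ell\in\pos}(\ww_\ell-\ww_{\nextp{\ell}})\,\E[\topl{\ell}{\vcc}],
\]
with all coefficients non-negative because $\ww$ is non-increasing. So it suffices to bound each $\E[\topl{\ell}{\vcc}]$ separately.

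For a fixed $\ell\in\pos$ I would apply \cref{lem-core} with $T=T_\ell$. Maximizing over $\ell$-subsets of clients contributing to $\topl{\ell}{\vcc}$, using \cref{lp:02} to bound $\max_{|S|=\ell}\sum_{j\in S}r_j\dav(j)\leq R_\ell$ and \cref{lp:01} to bound $\sum_j\E[X_j]\leq 226.7\sum_{i,j}x_{ij}\calL_{T_\ell}(i,j)\leq 226.7\,R_\ell$, one obtains
\[
\E[\topl{\ell}{\vcc}] \;\leq\; 229.7\,R_\ell \;+\; 435.6\,\ell\, T_\ell.
\]
This is essentially the same estimate as in the proof of \cref{theorem-top-l}, but without the $1.001$ slack since \cref{lp:01} here already uses the threshold $T_\ell$ directly.

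Summing against $(\ww_\ell-\ww_{\nextp{\ell}})$ produces two contributions. The $R_\ell$-part equals $229.7$ times the LP objective and is thus at most $229.7\,\opt$ by \cref{lemma:feasible:ft}. For the $\ell T_\ell$-part I would use the guessed property $T_\ell<(1+\epsilon)\vce^\da_\ell+\epsilon\vce^\da_1/n$, which yields $\ell T_\ell\leq(1+\epsilon)(\vce^\da_1+\cdots+\vce^\da_\ell)+\epsilon\vce^\da_1$. The step I expect to be the main obstacle is relating the tail $\vce^\da_1+\cdots+\vce^\da_\ell$ to $\topl{\ell}{\vco}$; the key observation is that the top-$\ell$ individual distances are realised by some set $S$ of at most $\ell$ distinct clients in the optimum, so $\sum_{s\leq\ell}\vce^\da_s\leq\sum_{j\in S}\vco_j\leq\topl{\ell}{\vco}$. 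Reapplying \eqref{eqn:conic:sparse} bounds $\sum_\ell(\ww_\ell-\ww_{\nextp{\ell}})\topl{\ell}{\vco}$ by $cost(\ww;\vco)\leq\opt$, and the telescoping identity $\sum_\ell(\ww_\ell-\ww_{\nextp{\ell}})=\ww_1$ handles the $\epsilon\vce^\da_1$ tail via $\ww_1\vce^\da_1\leq\ww_1\vco^\da_1\leq cost(\ww;\vco)\leq\opt$. Combining everything yields $\E[cost(w;\vcc)]\leq(1+\delta)\bigl(229.7+435.6(1+2\epsilon)\bigr)\opt$, which tightens to the claimed $666\,\opt$ for sufficiently small $\epsilon,\delta$. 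Polynomial running time follows since only $(n/\epsilon)^{O(1/\epsilon+1/\delta)}$ guesses are enumerated and each LP is solved in polynomial time by the ellipsoid method.
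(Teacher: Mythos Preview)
Your proposal is correct and follows essentially the same approach as the paper: apply \cref{lem-core} with $T=T_\ell$ for each $\ell\in\pos$, bound $\E[\topl{\ell}{\vcc}]\le 229.7R_\ell+435.6\,\ell T_\ell$ via \eqref{lp:01}--\eqref{lp:02}, recombine through \eqref{eqn:conic:sparse}, and control the $\ell T_\ell$ terms using $T_\ell<(1+\epsilon)\vce^\da_\ell+\epsilon\vce^\da_1/n$ together with $\ell\vce^\da_\ell\le\topl{\ell}{\vco}$. The only cosmetic difference is that the paper bounds $\ell\vce^\da_\ell\le\topl{\ell}{\vco}$ directly, whereas you pass through the intermediate $\vce^\da_1+\cdots+\vce^\da_\ell$ via the ``at most $\ell$ distinct clients'' argument; both yield the same $(1+2\epsilon)\opt$ bound and the same final constant.
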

\begin{proof}
We invoke \cref{lem-core} with $T=T_\ell$, thus $d_{r_j}(j,\hat F)\leq3r_j\dav(j)+435.6T_\ell+X_j$ with probability 1, and $\E[X_j]\leq226.7\sum_{i\in\calF}x_{ij}\calL_{T_\ell}(i,j)$. Because $\topl{\ell}{\vcc}$ pays only for the largest $\ell$ service costs, we have
\begin{align*}
\E[\topl{\ell}{\vcc}]&\leq435.6\ell\cdot T_\ell+\max_{S\subseteq\calC,|S|=\ell}\sum_{j\in S}3r_j\dav(j)+\sum_{j\in\calC}226.7\sum_{i\in\calF}x_{ij}\calL_{T_\ell}(i,j)\\
&\leq229.7R_\ell+435.6\ell\cdot T_\ell,
\end{align*}
where the last inequality is due to the feasibility of $(x,y,R)$ to \ref{lp:00}, hence~\eqref{lp:01} and~\eqref{lp:02}.

Using \eqref{eqn:conic:sparse}, with the output solution $\hat F$, the expected ordered objective of $\vcc$ with respect to $\ww$ is
\begin{align}
&\E\left[\sum_{\ell\in\pos}(\ww_\ell-\ww_{\nextp{\ell}})\topl{\ell}{\vcc}\right]\notag\\
&\leq229.7\sum_{\ell\in\pos}(\ww_\ell-\ww_{\nextp{\ell}})R_\ell+435.6\sum_{\ell\in\pos}(\ww_\ell-\ww_{\nextp{\ell}})\ell T_\ell\notag\\
&\leq229.7\cdot\opt+435.6\sum_{\ell\in\pos}(\ww_\ell-\ww_{\nextp{\ell}})\ell T_\ell,\label{proof:fault-main-1}
\end{align}
where the second inequality is due to~\cref{lemma:feasible:ft} and the feasibility of our solution $(x,y,R)$. Since $T_\ell\leq(1+\epsilon)\vce^\da_\ell+\epsilon \vce^\da_1/n$ with $\vce^\da_\ell$ being the $\ell$-th largest single connection in the optimal solution, it is easy to obtain $\topl{\ell}{\vco}\geq \ell\cdot \vce^\da_\ell$ where $\vco$ is the optimal service cost vector. Therefore,
\begin{align}
&\sum_{\ell\in\pos}(\ww_\ell-\ww_{\nextp{\ell}})\ell T_\ell\notag\\
&\leq(1+\epsilon)\sum_{\ell\in\pos}(\ww_\ell-\ww_{\nextp{\ell}})\ell\vce^\da_\ell+\epsilon\sum_{\ell\in\pos}(\ww_\ell-\ww_{\nextp{\ell}})\ell\cdot \vce^\da_1/n\notag\\
	&\leq(1+\epsilon)\sum_{\ell\in\pos}(\ww_\ell-\ww_{\nextp{\ell}})\topl{\ell}{\vco}+\epsilon\cdot\ww_1\vce^\da_1\leq(1+2\epsilon)\opt.\label{proof:fault-main-2}
\end{align}

Using~\cref{lem-sparsify}, the service cost vector $\vcc$ satisfies
\begin{equation}
\E[cost(w;\vcc)]\leq(1+\delta)\E[cost(\ww;\vcc)]=(1+\delta)\E\left[\sum_{\ell\in\pos}(\ww_\ell-\ww_{\nextp{\ell}})\topl{\ell}{\vcc}\right],\label{proof:fault-main-3}
\end{equation}
thus by \eqref{proof:fault-main-1}\eqref{proof:fault-main-2}\eqref{proof:fault-main-3}, the approximation ratio is $(1+\delta)(229.7+435.6(1+2\epsilon))\leq(665.3+O(\delta+\epsilon))\leq666$, where one chooses $\delta,\epsilon$ that are small enough. The running time of the algorithm is dominated by the enumeration of $\vce^\da_1$ and $\{T_\ell\}_{\ell\in\pos}$, which is polynomial since $\delta$ and $\epsilon$ are constants.
\end{proof}

%%--

%%====End of fault tolerant

%\bibpunct{(}{)}{,}{a}{}{;}
%\renewcommand{\citep}[1]{[\citealp{#1}]}
\bibliography{references}

\appendix

%%--

\section{Missing Proofs for Robust Ordered $k$-Median}\label{appendix:robust}

\subsection{The Sparse Instance}
We guess $U\in[V^\star,(1+\epsilon)V^\star)$, where $V^\star$ is the minimum objective of integral solutions to \ref{lp:1} when $\lambda=1$. We have $V^\star\leq(1+O(\epsilon))\opt$ using~\cref{theorem:ordered1}. For the optimum $(\fopt, \copt)$, define $c_p^\star=\min_{i\in F^\star}d(p,i),\,\kappa_p^\star=\argmin_{i\in\fopt}d(p,i)$ for $p \in \calF \cup \calC$, and closed balls $\Ball{S}{p,R}=\{i\in S:d(i,p)\leq R\}$. We need the following technical theorems on pre-processing.
\begin{theorem}\label{theorem:sparse}
  (\cite{krishnaswamy2018constant}) Given $\rho,\delta\in(0,1)$ and $U$, there exists an $n^{O(1/\rho)}$-time algorithm that finds an extended instance $\calI'=(\calF,\calC'\subseteq \calC,d,k,m'=|\calC'\cap\copt|,S_0\subseteq \fopt)$ satisfying the following.
  \begin{enumerate}[label=(\ref{theorem:sparse}.\arabic*), ref=(\ref{theorem:sparse}.\arabic*), leftmargin=1.2cm]
\item\label{con:sparse1}
  Denote $\cpt=\copt\cap \calC'$. For each $i \in \fopt\setminus S_0$, we have $\sum_{j \in \cpt:\kappa_j^\star=i} f(c_j^\star) \le \rho U$.
\item\label{con:sparse2}
  For each $p \in \calF \cup \calC'$, we have $\abs{\Ball{\cpt}{p,\delta c_p^\star}} \cdot f((1-\delta)c_p^\star) \le \rho U$.
\item\label{con:sparse3}
    Denote $U'=\sum_{j \in\cpt} f(c_j^\star)$. We have
  $\sum_{j \in \copt\setminus \calC'} f\left(\frac{1-\delta}{1+\delta} d(j,S_0)\right) + U'\le U.$
\end{enumerate}
\end{theorem}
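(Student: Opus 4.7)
The plan is to adapt the sparsification framework of Krishnaswamy et al.~\cite{krishnaswamy2018constant} to the function-weighted cost $f$. The algorithm enumerates a small ``exceptional'' set $S_0 \subseteq \calF$ of size $O(1/\rho)$ and, for each guess, iteratively strips away dense client clusters around points of $\calF \cup \calC'$. The correct guess will correspond to a particular $S_0 \subseteq \fopt$ described below; the enumeration is what costs $n^{O(1/\rho)}$ time.

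In the intended $S_0$ (the one that eventually succeeds), I first include every ``heavy'' facility $i \in \fopt$, meaning those with $\sum_{j \in \copt : \kappa_j^\star = i} f(c_j^\star) > \rho U$. Since the total $f$-cost of $\copt$ is $V^\star \leq U$, there are at most $1/\rho$ such facilities, and their inclusion secures \ref{con:sparse1}. To enforce \ref{con:sparse2}, I iteratively pick a point $p \in \calF \cup \calC'$ that violates it, add $\kappa_p^\star$ to $S_0$, and delete $\Ball{\cpt}{p, \delta c_p^\star}$ from $\calC'$. By the triangle inequality every removed client $j$ satisfies $c_j^\star \geq (1-\delta) c_p^\star$, so each iteration strips aggregate $f$-cost strictly greater than $\rho U$. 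Hence at most $1/\rho$ iterations occur and $|S_0| = O(1/\rho)$ throughout.

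For condition~\ref{con:sparse3}, fix any removed $j \in \Ball{\cpt}{p, \delta c_p^\star}$. Then $d(j, S_0) \leq d(j, \kappa_p^\star) \leq (1+\delta) c_p^\star$ and $c_j^\star \geq (1-\delta) c_p^\star$, so monotonicity of $f$ gives
\begin{equation*}
f\!\left(\tfrac{1-\delta}{1+\delta}\, d(j, S_0)\right) \leq f(c_j^\star).
\end{equation*}
Summing over all removed clients and adding $U' = \sum_{j \in \cpt} f(c_j^\star)$ yields $\sum_{j \in \copt} f(c_j^\star) = V^\star \leq U$, as required.

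The main obstacle is that $\fopt$, $\kappa_p^\star$, and $c_p^\star$ are all unknown. I bypass this by brute enumeration: for every $S_0 \subseteq \calF$ with $|S_0| \leq O(1/\rho)$ and every ordering of its elements (together with guessed values of $c_p^\star$ drawn from the polynomially many pairwise distances), I run the iterative peeling deterministically and test the three conditions against the reduced objective $f$. The true $(S_0, \kappa_p^\star, c_p^\star)$ configuration lies among the guesses, so the procedure succeeds; the running time is dominated by the $n^{O(1/\rho)}$ guesses.
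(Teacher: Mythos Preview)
Your construction of $S_0$ and the peeling argument are essentially the paper's, and your verification of \ref{con:sparse3} via $f\!\left(\tfrac{1-\delta}{1+\delta}d(j,S_0)\right)\le f(c_j^\star)$ is exactly right. However, the final paragraph---where you remove the dependence on the unknown optimum---has two genuine gaps.

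First, your enumeration does not determine $\calC'$. Each peeling step is governed by a triple $(p,\,c_p^\star,\,\kappa_p^\star)$: the center $p\in\calF\cup\calC'$, its distance $c_p^\star$ to the optimum, and the facility $\kappa_p^\star$ added to $S_0$. You enumerate $S_0$ (which captures the $\kappa_p^\star$'s) and some values of $c_p^\star$, but never the centers $p$ themselves, so you cannot reconstruct which balls were removed and hence cannot reproduce $\calC'$. Relatedly, you write ``delete $\Ball{\cpt}{p,\delta c_p^\star}$'' rather than $\Ball{\calC'}{p,\delta c_p^\star}$; the former depends on $\copt$ and cannot be enumerated cheaply, while the latter is determined by $p$ and $c_p^\star$ alone. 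The paper sidesteps all of this by observing that the \emph{output} of the construction is simply a triple $(\calC',m',S_0)$ where $\calC'$ arises from $\calC$ by deleting $O(1/\rho)$ balls (each with polynomially many possible centers and radii), so one enumerates such triples directly---$n^{O(1/\rho)}$ of them---without trying to replay the peeling.

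Second, you say you ``test the three conditions,'' but all three reference $\fopt$, $\copt$, $\kappa_j^\star$, and $c_p^\star$ for arbitrary $p$, none of which are available. There is no efficient test. The resolution is that the algorithm never checks these conditions: it runs the entire downstream approximation on \emph{every} enumerated extended instance and returns the best resulting solution. Correctness follows because one of the enumerated instances coincides with the one the idealized construction would have produced, and for that instance the three conditions hold.
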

\begin{proof}[Sketch for~\cref{theorem:sparse}] The proof is adapted from~\cite{krishnaswamy2018constant}.
Assume we know $(\fopt,\copt)$ in advance, and we remedy this assumption after our construction. We have $U\in[V^\star,(1+\epsilon)V^\star)$ and $V^\star=\sum_{j\in\copt}f(c^\star_j)$ by definition. Set $S_0=\emptyset,\,\calC'=\calC$ initially. Whenever there exists $i\in\fopt\setminus S_0$ such that $\sum_{j\in\copt:\kappa^\star_j=i}f(c^\star_j)\geq\rho U$, we set $S_0\leftarrow S_0\cup\{i\}$. This process can be repeated at most $O(1/\rho)$ times, because the subsets of clients connected to the facilities in $S_0$ are mutually disjoint, and the corresponding sum of $f$ values is at most $V^\star\leq U$. The remaining facilities in $F^\star\setminus S_0$ will always satisfy \ref{con:sparse1} because we will only add facilities to $S_0$ and remove clients from $\calC'$.

We put $\cpt=\copt\cap\calC'$ at all times. Whenever there exists $p\in\calF\cup\calC'$ such that $\abs{\Ball{\cpt}{p,\delta c_p^\star}} \cdot f((1-\delta)c_p^\star) \geq \rho U$, remove all clients in $\Ball{\calC'}{p,\delta c_p^\star}$ from $\calC'$ and add $\kappa^\star_p$ to $S_0$. All removed clients are from $\calC'$, and using the triangle inequality, each removed $j$ has cost at least $f(c^\star_j)\geq f(c^\star_p-d(j,p))\geq f((1-\delta)c^\star_p)$, thus the total cost removed is at least $\abs{\Ball{\cpt}{p,\delta c_p^\star}} \cdot f((1-\delta)c_p^\star) \geq \rho U$. Using a similar argument, this process can also be repeated at most $O(1/\rho)$ times. The condition in~\ref{con:sparse2} is then satisfied by definition.

Note that each such removed client $j\in\calC\setminus\calC'$ has, by the triangle inequality again,
\[f\left(\frac{1-\delta}{1+\delta}d(j,S_0)\right)\leq f\left(\frac{1-\delta}{1+\delta}(d(j,p)+d(p,\kappa^\star_p))\right)\leq f((1-\delta)c^\star_p)\leq f(c^\star_j),\]
where the last inequality is because $c^\star_j\geq c^\star_p-d(j,p)\geq(1-\delta)c^\star_p$. Therefore by summing over all $j\in\copt$, \ref{con:sparse3} follows since
\[
\sum_{j\in\copt\setminus\calC'}f\left(\frac{1-\delta}{1+\delta}d(j,S_0)\right)+U'\leq
\sum_{j\in\copt\setminus\calC'}f(c^\star_j)+\sum_{j\in\copt\cap\calC'}f(c^\star_j)=V^\star\leq U.
\]

Finally, we remove the dependence of the procedures on $(\fopt,\copt)$, by noticing that $|S_0|=O(1/\rho)$, and $\calC'$ is obtained from $\calC$ by removing $O(1/\rho)$ closed balls. Since $m'$ only takes values in $[m]$, the total number of possible outcomes is at most $n^{O(1/\rho)}$, and we can simply iterate through all possible configurations of $(\calC',m',S_0)$.
\end{proof}

%%?<

\begin{theorem}\label{theorem:gap}
  Given the desired instance $\calI'$ found in~\cref{theorem:sparse}, we can efficiently compute the set of connection distance upper bounds $\{\widehat{R}_j\geq0: j\in \calC'\}$ satisfying the following.
  \begin{enumerate}[label=(\ref{theorem:gap}.\arabic*), ref=(\ref{theorem:gap}.\arabic*), leftmargin=1.2cm]
  \item\label{property:validity-r}
    %\textbf{(Validity of $\widehat{R}_j$)}
    There exists a solution $(\fopt, C')$ to $\mathcal{I}''$, s.t. for $j\in C'$ connected to $\kappa_j'\in\fopt,\,c_j'= d(\kappa_j',j)$, one has $c_j'\le (1+3\delta/4)\widehat{R}_j,\,
      \sum_{j \in C'} f\left(\frac{2}{2+\delta}c_j'\right) \le U',\,
      \sum_{j \in C': \kappa_j'=i} f\left(\frac{2}{2+\delta} c_j'\right) \le \rho U,\,\forall i \in \fopt\setminus S_0.$
  \item\label{property:small-backup}
    %\textbf{(Small backup cost)}
    For each $t > 0,\, p \in \calF \cup \calC'$, one has
    $
      \abs{\qty{j \in \Ball{\calC'}{p, \frac{\delta}{4}t}: \widehat{R}_j \ge t}} \le \frac{\rho U}{f((1-\delta)(1-\delta/4)t)}.$
  \end{enumerate}  
\end{theorem}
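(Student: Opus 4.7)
The plan is to derive $\widehat{R}_j$ from a natural LP relaxation of the sparse instance $\calI'$ together with geometric rounding, and then verify both properties by leveraging the sparsification structure guaranteed by \cref{theorem:sparse}. Concretely, I would solve a strengthened LP relaxation on $\calI'$ (with $S_0$ forced open and demand $m'$ restricted to $\calC'$), and for each $j \in \calC'$ define $\widehat{R}_j$ as the smallest radius such that at least a $1 - \delta/4$ fraction of $j$'s fractional assignment lies within that radius, rounded up to the nearest power of $1 + \delta/4$. This yields only polynomially many distinct values and is efficiently computable.

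For property \ref{property:validity-r}, I would exhibit the witness $(\fopt, C')$ by taking the optimal assignment restricted to $\cpt$ and, if necessary, reassigning a small number of displaced clients through $S_0$ to form $C'$. The inequality $c_j' \le (1 + 3\delta/4)\widehat{R}_j$ follows because $\widehat{R}_j$ is, up to the multiplicative rounding slack, at least the LP-implied distance for $j$, which in turn is dominated by $c_j^\star$ since the optimum itself is a feasible LP solution. The global bound $\sum_{j \in C'} f\!\left(\tfrac{2}{2+\delta} c_j'\right) \le U'$ holds by the definition $U' = \sum_{j \in \cpt} f(c_j^\star)$ together with the scaling factor $\tfrac{2}{2+\delta}$ absorbing the $(1+\delta/2)$-type slack from reassignment; the per-facility cap $\sum_{j: \kappa_j' = i} f(\tfrac{2}{2+\delta} c_j') \le \rho U$ then transfers directly from \ref{con:sparse1}.

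For the small-backup property \ref{property:small-backup}, I would argue by contradiction. If more than $\rho U / f((1-\delta)(1-\delta/4) t)$ clients in $\Ball{\calC'}{p, \delta t/4}$ had $\widehat{R}_j \ge t$, then by definition of $\widehat{R}_j$ each such client would have $c_j^\star \ge (1 - \delta/4) t$, since otherwise a $1-\delta/4$ fraction of its fractional mass in the LP could have been routed strictly below $\widehat{R}_j$. Applying the triangle inequality with the fact that all these clients sit inside $\Ball{\calC'}{p, \delta t/4}$, the ball centred at the nearest witness point of radius $\delta \cdot (1-\delta/4) t$ would then contain too many clients with $f((1-\delta)(1-\delta/4) t)$-level cost each, directly contradicting \ref{con:sparse2} invoked at the corresponding scale.

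The main obstacle will be calibrating the constants so that the three slack budgets line up simultaneously: the $3\delta/4$ slack in \ref{property:validity-r} must absorb both the rounding base used in defining $\widehat{R}_j$ and the triangle-inequality loss when extending $\cpt$ to $C'$ through $S_0$, while the $(1-\delta)(1-\delta/4)$ factor inside $f$ in \ref{property:small-backup} requires the LP-quantile definition of $\widehat{R}_j$ to propagate cleanly through two distinct triangle-inequality steps without further loss. Getting these constants tight, rather than the qualitative statement itself, is likely to be the most delicate bookkeeping in the proof.
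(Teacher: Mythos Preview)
Your approach is fundamentally different from the paper's, and it contains a genuine gap that I do not see how to close.

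The central issue is the link you assert between the LP-quantile $\widehat{R}_j$ and the integral quantity $c_j^\star$. For \ref{property:validity-r} you write that ``$\widehat{R}_j$ is \ldots\ at least the LP-implied distance for $j$, which in turn is dominated by $c_j^\star$.'' But these two inequalities point in incompatible directions: from $\widehat{R}_j \ge (\text{LP distance})$ and $(\text{LP distance}) \le c_j^\star$ you cannot conclude anything about $\widehat{R}_j$ versus $c_j^\star$. In fact the LP optimum can place fractional mass on facilities much closer to $j$ than $\kappa_j^\star$, so $\widehat{R}_j$ may be arbitrarily smaller than $c_j^\star$, and then $c_j' \approx c_j^\star \le (1+3\delta/4)\widehat{R}_j$ simply fails. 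The same confusion recurs in your argument for \ref{property:small-backup}: from $\widehat{R}_j \ge t$ (a statement about where the fractional $y$-mass sits) you infer $c_j^\star \ge (1-\delta/4)t$ (a statement about the integral optimum). There is no such implication, since the LP's $y$-vector need not place mass near the integral facility $\kappa_j^\star$ at all; the constraint $x_{ij}\le y_i$ means $j$'s fractional assignment is governed by $y$, not by the existence of $\kappa_j^\star$.

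The paper avoids this difficulty by not using an LP at all to define $\widehat{R}_j$. Instead it builds $\{\widehat{R}_j\}$ greedily and combinatorially: initialise all $\widehat{R}_j=0$, sweep through the distance values $t'\in\{d(i,j)\}$ in decreasing order, and for each $t'$ assign $\widehat{R}_j\leftarrow t'$ to as many still-unassigned clients as possible without violating \ref{property:small-backup}. This makes \ref{property:small-backup} hold \emph{by construction}. The real work is then \ref{property:validity-r}, which is proved not by taking $C'=\cpt$ but by exhibiting a \emph{bijection} $\phi:\cpt\to C'\subseteq\calC'$: for each $j\in\cpt$ with $c_j^\star>(1+3\delta/4)\widehat{R}_j$, the greedy rule must have been blocked at value $t'=c_j^\star$, which means some ball around $j$ already contains many clients $j'$ with $\widehat{R}_{j'}\ge c_j^\star$; the sparsity condition \ref{con:sparse2} then forces one of these $j'$ to lie outside $\phi(\cpt)$, and setting $\phi(j)=j'$ gives $c_{\phi(j)}'\le(1+\delta/2)c_j^\star\le(1+\delta/2)\widehat{R}_{\phi(j)}$. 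This bijection idea --- swapping bad clients for nearby good ones --- is the missing ingredient in your plan, and it is what makes the $\tfrac{2}{2+\delta}$ scaling and the per-facility cap inherited from \ref{con:sparse1} go through.
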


\begin{proof}[Sketch for~\cref{theorem:gap}] The proof is adapted from~\cite{krishnaswamy2018constant}.
We iteratively construct $\{\widehat{R}_{j}:j\in\calC'\}$ that always maintain \ref{property:small-backup}, then prove \ref{property:validity-r} using the solution $(\fopt, \cpt=\copt\cap \calC')$ to $\calI'$. Initially let $\widehat{R}_j=0$ for each $j \in \calC'$. In each iteration $k \ge 1$, we try to assign the $k$-th largest distance $t'$ in $\qty{d(i,j):i\in \calF, j\in \calC'}\setminus\{0\}$ to unassigned clients $\{j \in \calC': \widehat{R}_j=0\}$ \emph{sequentially} without violating \ref{property:small-backup}, which is easy to check since it suffices to consider the case of $t=t'$ for each $p\in \calF\cup \calC'$ (also see~\cite{krishnaswamy2018constant}).

We then construct a one-to-one mapping $\phi: \cpt\to \calC'$ and show the solution $(\fopt, \phi(\cpt))$ satisfies \ref{property:validity-r}. Initially, we let $\phi$ be the identity function on $\cpt$.
Consider the clients in $\{j \in \cpt: c_j^\star > (1+3\delta/4)\widehat{R}_j\}$ in non-decreasing order of $c_j^\star$. For each such $j$, we want to update $\phi(j)$ to an unused client in the current $\calC' \setminus \phi(\cpt)$ such that $d(\phi(j),j) \le \delta c_j^\star / 2$ and $\widehat{R}_{\phi(j)} \ge c_j^\star$. If such $\phi(j)$ exists for each $j\in\cpt$, we connect $\phi(j)$ to $\kappa^\star_j$, define $\kappa_{\phi(j)}'=\kappa^\star_j$ and thus $c_{\phi(j)}' = d(\phi(j),\kappa_j^\star) \le c_j^\star + d(j, \phi(j)) \le (1+\delta/2)c_j^\star\le (1+\delta/2)\widehat{R}_{\phi(j)}$. Moreover, one has
\begin{align*}
&c'_{\phi(j)}=d(\phi(j),\kappa_j^\star)\leq(1+\delta/2)c_j^\star,\\
&\Rightarrow\sum_{j\in C':\kappa_j'=i}f\left(\frac{2}{2+\delta}c_j'\right)\leq\sum_{j\in\cpt:\kappa_j^\star=i}f(c_j^\star)\leq\rho U,\,i\notin S_0,
\end{align*}
where the last inequality is due to \ref{con:sparse1}. Similarly, one has
\[\sum_{j\in C'}f\left(\frac{2}{2+\delta}c_j'\right)=\sum_{j\in\cpt}f\left(\frac{2}{2+\delta}c'_{\phi(j)}\right)\leq\sum_{j\in\cpt}f(c_j^\star)=U',\]
therefore \ref{property:validity-r} is satisfied by $(F^\star,\phi(\cpt))$ in this case.

In the following, we show that such an unused $j'\in\calC'\setminus\phi(\cpt)$ can always be found for each $j\in\cpt$ with $c_j^\star\geq(1+3\delta/4)\widehat{R}_j$.
Notice that we have $\widehat{R}_j=0$ when $t'=c_j^\star$ is considered during the construction, so setting $\widehat{R}_j=c_j^\star$ would be a violation and there exists $p \in \calF \cup \calC'$ such that $d(p,j) \le \delta c_j^\star/4$ and the set $H_j=\qty{j'\in \Ball{\calC'}{p,\delta c_j^\star/4}: \widehat{R}_{j'} \ge c_j^\star}$ satisfies $|H_j\cup\{j\}|=|H_j|+1 > \frac{\rho U}{f((1-\delta)(1-\delta/4)c_j^\star)}$. If there exists some $j' \in H_j\setminus \phi(\cpt) $, we can set $\phi(j)=j'$ since $d(j,j')\le \delta c_j^\star/2$ and $\widehat{R}_{j'} \ge c_j^\star$, therefore it suffices to prove $H_j \not\subseteq \phi(\cpt)$.

Assume by contradiction that $H_j \subseteq \phi(\cpt)$ when we consider $j$, then for each $\phi(j') \in H_j$, $d(p, j') \le d(p, \phi(j')) + d(j', \phi(j')) \le 3c_j^\star/4$, using the fact that we consider the clients in non-decreasing order of $c^\star_j$ and hence $d(j',\phi(j'))\leq\delta c_{j'}^\star/2\leq\delta c_{j}^\star/2$ in earlier iterations. This shows that $\phi^{-1}(H_j)\subseteq \Ball{\cpt}{p, 3\delta c_j^\star/4}$. We further notice that $c_p^\star \ge c_j^\star - d(p,j) \ge (1-\delta/4)c_j^\star$ and $\delta c_p^\star\geq \delta(1-\delta/4)c_j^\star \geq(3\delta/4)c_j^\star$ as $\delta < 1$ and thus $\phi^{-1}(H_j)\subseteq \Ball{\cpt}{p, 3\delta c_j^\star/4}\subseteq\Ball{\cpt}{p, \delta c_p^\star}$. Because $j\notin\phi^{-1}(H_j)$ while $j\in\Ball{\cpt}{p,\delta c_p^\star}$, we obtain
\begin{align*}
  \abs{\Ball{\cpt}{p,\delta c_p^\star}}\cdot f((1-\delta)c_p^\star) &\geq (|H_j|+1)f((1-\delta)c_p^\star)\\
  &\ge (|H_j|+1) f((1-\delta)(1-\delta/4)c_j^\star) > \rho U,
\end{align*}
which is a contradiction to \ref{con:sparse2}.
\end{proof}

\subsection{The Strengthened LP}

Let $R_j=(1+3\delta/4)\widehat{R}_j$ in \cref{theorem:gap} and define the following stronger LP relaxation based on~\ref{lp:1}, for $0 < \lambda_1 \le \frac{2}{2+\delta}$. We note that~\ref{lp:ext} is built on the new instance $\calI'$ with clients $\calC'\subseteq\calC$, hence admits a more ``regular'' solution according to~\cref{theorem:gap}. In our algorithm, we solve~\ref{lp:ext} instead of~\ref{lp:1}, and conduct iterative rounding on its solution.
\begin{align*}
  \text{min} && \sum_{j\in\calC'}\sum_{i\in \calF} x_{ij} f(\lambda_1 d(i,j)) \tag{$\mathrm{ExtLP}(f,\lambda_1)$}\label{lp:ext}\\
  \text{s.t.} &&\sum_{j\in \calC'}\sum_{i \in \calF} x_{ij} &\ge m'\\
  && \sum_{i \in \calF} x_{ij} &\le 1 && \forall j \in \calC'\\
  && \sum_{i\in\calF}y_i&=k\\
  && 0\leq x_{ij} &\leq y_i \leq 1  && \forall i \in \calF, j \in \calC'\\
  && y_i & =1 && \forall i \in S_0\tag{$\mathrm{ExtLP}(f,\lambda_1)$.5}\label{lp:ext5}\\
  && x_{ij} & =0 && d(i,j) > R_j\tag{$\mathrm{ExtLP}(f,\lambda_1)$.6}\label{lp:ext6}\\
  && x_{ij} & =0 && \forall i \notin S_0,\, f\left(\frac{2d(i,j)}{2+\delta}\right) > \rho U\tag{$\mathrm{ExtLP}(f,\lambda_1)$.7}\label{lp:ext7}\\
  && \sum_{j} f\left(\frac{2d(i,j)}{2+\delta}\right)x_{ij} &\leq \rho U y_i && \forall i \notin S_0.\tag{$\mathrm{ExtLP}(f,\lambda_1)$.8}\label{lp:ext8}
\end{align*}

\begin{lemma}\label{lemma:extlp}
The optimal objective value of~\ref{lp:ext} is at most $\frac{\lambda_1(2+\delta)}{2}U'$.
\end{lemma}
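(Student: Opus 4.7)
}
The natural strategy is to exhibit a feasible primal solution whose objective meets the bound. I would take the pair $(\fopt, C')$ furnished by \ref{property:validity-r} in \cref{theorem:gap} and lift it to the LP by setting $x^\star_{ij}=\mathbb{1}[j\in C',\,i=\kappa_j']$ and $y^\star_i=\mathbb{1}[i\in \fopt]$. This directly enforces integrality of the candidate and lets us translate the structural guarantees of \cref{theorem:gap} into LP feasibility.

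The bulk of the verification is constraint-by-constraint bookkeeping. The first four constraints (the selection count $|C'|=m'$, at-most-one assignment, $\sum_i y_i=|\fopt|=k$, and $x_{ij}\leq y_i$) are immediate. Constraint \eqref{lp:ext5} holds because $S_0\subseteq\fopt$. Constraint \eqref{lp:ext6} follows from the first clause of \ref{property:validity-r}, which gives $c_j'\leq(1+3\delta/4)\widehat{R}_j=R_j$, so $x^\star_{ij}>0$ forces $d(i,j)=c_j'\leq R_j$. The two sparsity constraints \eqref{lp:ext7} and \eqref{lp:ext8} both reduce to the per-facility bound in \ref{property:validity-r}: for each $i\in\fopt\setminus S_0$ we have $\sum_{j:\kappa_j'=i} f\left(\frac{2c_j'}{2+\delta}\right)\leq\rho U$, from which \eqref{lp:ext7} follows term-wise (each nonnegative summand is at most $\rho U$) and \eqref{lp:ext8} follows by summing. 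For $i\in S_0$ or $i\notin\fopt$, \eqref{lp:ext7}--\eqref{lp:ext8} are vacuous or trivial.

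The less mechanical part is the objective bound. Writing $\alpha=\lambda_1(2+\delta)/2$, the hypothesis $\lambda_1\leq 2/(2+\delta)$ gives $\alpha\in(0,1]$, and $\lambda_1 c_j'=\alpha\cdot\frac{2c_j'}{2+\delta}$. The key observation I want to exploit is a scaling inequality for the piece-wise linear function $f$ built in \eqref{eq:costfunction}: because the slopes $\hw_t$ are non-increasing as we move to smaller intervals $I_t$, shrinking the argument can only move it into an interval with a smaller (or equal) slope, hence $f(\alpha y)\leq \alpha f(y)$ for every $\alpha\in(0,1]$ and $y\geq 0$. Applied term-wise, this yields
\[
\sum_{j\in C'}f(\lambda_1 c_j')\;=\;\sum_{j\in C'}f\!\left(\alpha\cdot\tfrac{2c_j'}{2+\delta}\right)\;\leq\;\alpha\sum_{j\in C'}f\!\left(\tfrac{2c_j'}{2+\delta}\right)\;\leq\;\tfrac{\lambda_1(2+\delta)}{2}U',
\]
where the last step invokes the aggregate bound in \ref{property:validity-r}. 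I expect the main subtlety to lie precisely here: the inequality $f(\alpha y)\leq \alpha f(y)$ is not a consequence of any standard convexity or Lipschitz property of $f$ (which is piecewise linear with upward jumps), but rather of its particular construction with non-increasing slopes, so it needs to be argued directly from \eqref{eq:costfunction} and the monotonicity of $\{\hw_t\}$. Once that scaling lemma is in hand, the rest of the argument is routine.
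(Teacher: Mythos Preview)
Your proposal is correct and follows essentially the same approach as the paper: exhibit the integral solution $(\fopt,C')$ from \ref{property:validity-r}, verify feasibility of \ref{lp:ext}, and bound the objective via the scaling inequality $f(\alpha z)\le\alpha f(z)$ for $\alpha\in(0,1]$, which the paper also derives from the non-increasing slopes $\hw_t$ in \eqref{eq:costfunction}. Your write-up simply spells out the constraint-by-constraint feasibility check that the paper compresses into a single sentence.
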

\begin{proof}
  Using \ref{property:validity-r}, there exists an integral solution having cost at most $U'$ when $\lambda_1=2/(2+\delta)$. For $\lambda_1\leq 2/(2+\delta)$, the same solution is still feasible for the constraints are independent of $\lambda_1$. For $\alpha\leq 1, z>0$, we have $f(\alpha z)\leq \alpha f(z)$ (because for smaller inputs, the coefficient $\hw_t$ is not increasing, see~\eqref{eq:costfunction}), thus 
  \[\sum_{j\in C'}f(\lambda_1 c_j')\leq\frac{\lambda_1(2+\delta)}{2}\sum_{j\in C'}f\left(\frac{2}{2+\delta}c_j'\right)\leq\frac{\lambda_1(2+\delta)}{2}U'.\]
\end{proof}

\begin{lemma}\label{lemma:schedule-rounding}
We can add co-located copies to $\calF$, create a vector $y^\star\in[0,1]^\calF$ and define subsets $F_j\subseteq\Ball{\calF}{j,R_j}$ for each client $j\in\calC'$, such that the following holds.
\begin{enumerate}[label=(\ref{lemma:schedule-rounding}.\arabic*), ref=(\ref{lemma:schedule-rounding}.\arabic*), leftmargin=1.2cm]
\item\label{item:outlier} Outlier constraint: $y^\star(F_j)\leq1$ for each $j\in\calC'$ and $\sum_{j\in\calC'}\big(\sum_{i\in F_j}y^\star_i\big)\geq m'$.
\item\label{item:cardinality} Cardinality constraint: $\sum_{i\in\calF}y^\star_i\leq k$.
\item\label{item:preselected} Pre-selected facilities: for each $i\in S_0$, $\sum_{i'\text{ co-located with }i}y^\star_i=1$.
\item\label{item:total:cost} Bounded objective: $\sum_{j\in\calC'}\sum_{i\in F_j}y^\star_i f(\lambda_1d(i,j))\leq\frac{\lambda_1(2+\delta)}{2}U'$.
\item\label{item:star:cost} Bounded star cost: for each $i$ not co-located with $S_0$, $\sum_{j\in\calC':i\in F_j} f\left(\frac{2}{2+\delta}d(i,j)\right)\le 2\rho U$.
\end{enumerate}
\end{lemma}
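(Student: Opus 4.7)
The plan is to take an optimal fractional solution of \ref{lp:ext} and apply the standard facility-splitting (``location duplication'') trick so that every assignment is half-integral in the sense $x_{i'j}\in\{0,y_{i'}\}$; then $y^\star$ and the bundles $F_j$ can be read off directly.

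First, I would solve \ref{lp:ext} to obtain a fractional optimum $(x,y)$ whose objective is at most $\tfrac{\lambda_1(2+\delta)}{2}U'$ by \cref{lemma:extlp}. For each $i\in\calF$ with $y_i>0$, I would replace $i$ by a family of co-located copies: one ``dedicated'' copy $i^{(j)}$ for every client $j$ with $x_{ij}>0$, together with one ``residual'' copy. The dedicated copy $i^{(j)}$ receives $y_{i^{(j)}}=x_{ij}$ and $x_{i^{(j)}j}=x_{ij}$ (with $x_{i^{(j)}j'}=0$ for $j'\neq j$), while the residual copy receives the leftover mass $y_i-\sum_j x_{ij}$ and no nonzero assignments. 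This is the classical sweep used, e.g., in~\cite{charikar2012dependent,krishnaswamy2018constant}: it preserves distances, the total $y$-mass at the location of $i$, every LP constraint (including \eqref{lp:ext5} for $i\in S_0$), and the LP objective. I then set $y^\star_{i'}:=y_{i'}$ on the enlarged facility set and $F_j:=\{i':x_{i'j}>0\}$; the inclusion $F_j\subseteq\Ball{\calF}{j,R_j}$ is immediate from \eqref{lp:ext6}.

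Properties \cref{item:outlier}--\cref{item:total:cost} then follow by inspection. For \cref{item:outlier}, $y^\star(F_j)=\sum_{i:x_{ij}>0}x_{ij}=\sum_i x_{ij}\leq 1$, and summing over $j\in\calC'$ yields $\sum_j y^\star(F_j)\geq m'$. \cref{item:cardinality,item:preselected} are the LP equalities $\sum_i y_i=k$ and \eqref{lp:ext5}; in particular, for any $i\in S_0$ one has $\sum_{i'\text{ co-located with }i}y^\star_{i'}=\sum_j x_{ij}+(y_i-\sum_j x_{ij})=y_i=1$. \cref{item:total:cost} rewrites to the LP objective of the split solution, which is bounded via \cref{lemma:extlp}.

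The last property \cref{item:star:cost} is essentially a corollary of the one-copy-per-client split, since every dedicated copy appears in exactly one bundle. Concretely, for any copy $i'=i^{(j)}$ not co-located with $S_0$,
\[
\sum_{j'\in\calC':\,i'\in F_{j'}}\!f\!\left(\tfrac{2d(i',j')}{2+\delta}\right)
=f\!\left(\tfrac{2d(i,j)}{2+\delta}\right)\leq\rho U\leq 2\rho U
\]
by \eqref{lp:ext7}, while every residual copy lies in no bundle and contributes $0$. I expect the main obstacle to be purely notational bookkeeping around the splitting---in particular, verifying that the pre-opened mass at every $i\in S_0$ stays equal to $1$ under duplication---rather than any substantive technical difficulty, since the factor $2$ in the claimed bound leaves ample slack (one can even aim for $\rho U$).
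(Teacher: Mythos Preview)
Your splitting procedure is not the ``classical sweep'' of~\cite{charikar2012dependent,krishnaswamy2018constant}, and as written it breaks the cardinality constraint. With one dedicated copy $i^{(j)}$ per client carrying mass $y_{i^{(j)}}=x_{ij}$, the total $y$-mass you place at the location of $i$ is $\sum_{j}x_{ij}$ on the dedicated copies plus $y_i-\sum_j x_{ij}$ on the residual. The second quantity is typically \emph{negative}: a single facility with $y_i=1$ can have $x_{ij}=1$ for every client, so $\sum_j x_{ij}$ may equal $|\calC'|$. Either you allow $y^\star_{i'}<0$, violating $y^\star\in[0,1]^{\calF}$, or you cap the residual at $0$, in which case the total mass at $i$ becomes $\sum_j x_{ij}\gg y_i$ and \ref{item:cardinality} and \ref{item:preselected} both fail. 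Your verification of \ref{item:preselected} silently assumes the residual is nonnegative.

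The actual ``classical'' split sorts the values $\{x_{ij}\}_j$ and creates copies at the breakpoints so that every copy is shared by a (nested) set of clients; this preserves total mass but then a single copy can lie in many $F_j$'s, and \ref{item:star:cost} no longer follows from~\eqref{lp:ext7} alone. The paper's proof works harder precisely here: it processes the pairs $(i,j)$ one at a time, always assigning $j$ to the copies of $i$ with currently \emph{smallest} star cost (splitting a copy when needed to hit mass exactly $x^\star_{ij}$), and proves by induction that the spread between the maximum and minimum star cost among copies of $i$ never exceeds $\rho U$. Since the $y$-weighted average of these star costs is at most $\rho U$ by~\eqref{lp:ext8}, the minimum is at most $\rho U$ and hence the maximum is at most $2\rho U$. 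So the factor $2$ is not slack at all---it is exactly what this balancing argument delivers, and your remark that ``one can even aim for $\rho U$'' is an artifact of the flawed split.
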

\begin{proof}
We start with an optimal solution $(x^\star,y^\star)$ to \ref{lp:ext} with objective at most $\frac{\lambda_1(2+\delta)}{2}U'$, according to \cref{lemma:extlp}. To avoid confusion in notation, we create a copy $\calF'=\calF$, define $F_j=\{i\in\calF':x^\star_{ij}>0\}$ and $\bar y^\star\leftarrow y^\star$ both supported on $\calF'$. For each copy $i'\in\calF'$ of $i\in\calF$, define its star cost $\sum_{j\in\calC':i'\in F_j}f(\frac{2}{2+\delta}d(i,j))$.

We iteratively perform the following procedures. For each $i\in\calF$ and $j\in\calC'$ such that $x^\star_{ij}>0$, we sort all copies of $i$ in $\calF'$ in non-decreasing order of their current star costs, and choose the first several copies such that their $\bar y^\star$ values add up to \emph{exactly} $x^\star_{ij}$. If we need to split a facility $i'$ into two copies to make the sum exact, we replace $i'$ with $\{i_1',i_2'\}$ in $\calF'$, set $\bar y^\star_{i_1'}$ to whichever value is needed and $\bar y^\star_{i_2'}\leftarrow \bar y^\star_{i'}-\bar y^\star_{i_1'}$. Remove from $F_j$ all copies of $i$, and add the selected copies to $F_j$ again. For any other $j'\neq j$, if some $i'\in F_{j'}$ is split in two, $F_{j'}\leftarrow F_{j'}\setminus\{i'\}\cup\{i_1',i_2'\}$.

After the procedures, we set $\calF\leftarrow\calF'$ and the corresponding $y^\star\leftarrow \bar y^\star$, $\{F_j\}_{j\in\calC'}$ such that they are supported on $\calF$. \ref{item:outlier} to \ref{item:total:cost} are easy to see, since the original solution to \ref{lp:ext} is preserved up to facility duplication. To see \ref{item:star:cost}, consider each (original) facility $i$ and all clients $j$ such that $x^\star_{ij}>0$, denoted by $J_i$. It is easy to see that, each copy of $i$ can only appear in $\bigcup_{j\in J_i}F_j$. We use induction to show that, after each iteration, the difference between the maximum and minimum star costs among all copies of $i$ is at most $\rho U$.

The copies of $i$ and their star costs may only change after an iteration where $i$ is selected. Suppose $J_i=\{j_1,\dots,j_\ell\}$ and we consider the iterations in the order of $(i,j_1),\dots,(i,j_\ell)$.
As the base case, before $(i,j_1)$ is considered, the claim is true because $i$ has only one copy in $\calF'$.

Suppose the claim is true after $(i,j_{t-1})$, $t\geq1$. In the start of the iteration where we consider $(i,j_t)$ and sort the copies of $i$ in non-decreasing order of their current star costs, each client $j_{s},s\geq t$ contributes equally to the star cost of each copy of $i$, including $j_t$ in particular, and the difference between the maximum and minimum is at most $\rho U$, using the induction hypothesis. During this iteration, we remove the contributions of $j_t$ to all copies, and add them back to copies that have the smallest star costs. Since $f(\frac{2}{2+\delta}d(i,j_t))\leq\rho U$ by \cref{lp:ext7}, it is easy to verify that the difference between the maximum and minimum after the iteration, is still at most $\rho U$. This finishes the induction.

For facility $i$, we let $\calF(i)\subseteq\calF'$ be the set of copies of $i$ after the procedures. It follows that
\begin{align*}
    \sum_{i'\in\calF(i)}\bar y^\star_{i'}\sum_{j\in\calC':i'\in F_j}f\left(\frac{2}{2+\delta}d(i,j)\right)
    &=\sum_{j\in J_i}f\left(\frac{2}{2+\delta}d(i,j)\right)\sum_{i'\in\calF(i)\cap F_j}\bar y^\star_{i'}\\
    &=\sum_{j\in J_i}x^\star_{ij}f\left(\frac{2}{2+\delta}d(i,j)\right)\leq\rho U y^\star_i,
\end{align*}
where the last inequality is due to \cref{lp:ext8}. The above inequality shows that the minimum star cost is at most $\rho U y^\star_i/\sum_{i'\in\calF(i)}\bar y^\star_{i'}=\rho U$, and thus the maximum star cost is at most $2\rho U$, yielding \ref{item:star:cost}.
\end{proof}

%%--

\subsection{Iterative Rounding Details}\label{app:iterative:detail}

We obtain $y^\star\in[0,1]^\calF$ and $\{F_j\}_{j\in\calC'}$ using \cref{lemma:schedule-rounding}. We use a \emph{deterministic} metric discretization method: fix $\tau>1$, define $D_{-2} = -1, D_{-1}=0$ and $D_l = \tau^l$ for $l \ge 0$, and let $d'(i,j)=\min\{D_l \ge d(i,j) : l \ge -2\}$. For each $j\in \calC'$, we call $F_j$ its \emph{outer ball}, define \emph{radius level} $l_j$ such that $D_{l_j} = \max_{i\in F_j} d'(i,j)$, and \emph{inner ball} $B_j=\{i\in F_j: d'(i,j) \le D_{l_j-1}\}$. For $0 < \lambda_2 \le 1/\tau$, we define the auxiliary LP
\begin{align*}
\text{min}&&\sum_{j \in \cpart} \sum_{i \in F_j} y_i f(\lambda_2 d'(i,j)) &+ \sum_{j\in \cfull} \sum_{i \in B_j} y_i f(\lambda_2 d'(i,j))  + (1-y(B_j)) f(\lambda_2 D_{l_j}) \tag{$\mathrm{AuxLP}(f,\lambda_2)$}\label{lp:aux}\\
  \text{s.t.}&&y(F_j)&=1\quad\quad\quad\forall j \in \ccore\tag{$\mathrm{AuxLP}(f,\lambda_2)$.1}\label{lp:aux1}\\
  && 0
  \le y_i&\le 1 \quad\quad\quad\forall i \in \calF\tag{$\mathrm{AuxLP}(f,\lambda_2)$.2}\label{lp:aux2}\\
  && y(B_j)&\le 1 \quad\quad\quad\forall j \in \cfull\tag{$\mathrm{AuxLP}(f,\lambda_2)$.3}\label{lp:aux3}\\
  && y(F_j)&\le 1 \quad\quad\quad\forall j \in \cpart\tag{$\mathrm{AuxLP}(f,\lambda_2)$.4}\label{lp:aux4}\\
  && y(\calF) &\le k\tag{$\mathrm{AuxLP}(f,\lambda_2)$.5}\label{lp:aux5}\\
  && |\cfull| + \sum_{j \in \cpart} y(F_j)&\ge m'.\tag{$\mathrm{AuxLP}(f,\lambda_2)$.6}\label{lp:aux6}
\end{align*}

We then use \cref{iterative-rounding} to find an almost-integral solution with at most 2 fractions.
The algorithm maintains three subsets $\cfull,\cpart$, and $\ccore$, such that $\cfull\cup\cpart=\calC'$, each client in $\cfull$ is to be assigned an open facility relatively close to it, and $\ccore$ is used for placing these facilities. Initially, we set $\cfull=\emptyset$, $\cpart=\calC'$ and $\ccore=S_0$ (each $i \in S_0$ is a \emph{virtual client} and its initial radius level is $-1$).

\begin{algorithm}[ht]
\caption{Iterative Rounding Algorithm~\cite{krishnaswamy2018constant}}\label{iterative-rounding}
\SetKwInOut{Input}{Input}\SetKwInOut{Output}{Output}
\SetKwBlock{Update}{{update-}$\ccore(j)$}{end}
\DontPrintSemicolon
\Input{outer balls $\{F_j:j\in\calC'\}$, radius levels $\{l_j:j\in\calC'\}$, inner balls $\{B_j:j\in\calC'\}$, virtual clients $S_0$ and the input fractional solution $y^\star$}
\Output{an almost-integral solution $y'$ with at most 2 fractions}
$\cfull\leftarrow\emptyset,\cpart\leftarrow\calC',\ccore\leftarrow S_0$\;
\While{true}{
find an optimal basic solution $y'$ to~\ref{lp:aux}\;
\uIf{there exists $j\in\cpart$ such that $y'(F_j)=1$}{
$\cpart\leftarrow\cpart\setminus\{j\},\cfull\leftarrow\cfull\cup\{j\},B_j\leftarrow\{i\in F_j:d'(i,j)\leq D_{l_j-1}\}$, \textbf{update-}$\ccore(j)$\;
}
\uElseIf{there exists $j\in\cfull$ such that $y'(B_j)=1$}{
$l_j\leftarrow l_j-1,F_j\leftarrow B_j,B_j\leftarrow\{i\in F_j:d'(i,j)\leq D_{l_j-1}\}$, \textbf{update-}$\ccore(j)$\;
}
\Else{
break\;
}}
\Return $y'$\;
\Update{
\If{there exists no $j'\in\ccore$ with $l_{j'}\leq l_j$ and $F_j\cap F_j'\neq\emptyset$}{
remove from $\ccore$ all $j'$ such that $F_j\cap F_{j'}\neq \emptyset$,\,$\ccore\leftarrow\ccore\cup\{j\}$\;
}}
\end{algorithm}

%%--

\begin{lemma}\label{lemma:iter:feasible}
  In each iteration, $y'$ is feasible after modifying the LP.
  The objective value for $y'$ is non-increasing throughout the algorithm.
\end{lemma}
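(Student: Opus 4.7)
I would prove both assertions simultaneously by showing that each LP modification performed in an iteration keeps the \emph{current} $y'$ feasible for the updated LP and leaves its objective value unchanged; then solving the modified LP to optimality in the next iteration can only lower the objective, giving both monotonicity and feasibility of the iterate inherited by the next one.

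I would split into the two possible modifications. \emph{Case (i):} some $j \in \cpart$ with $y'(F_j) = 1$ is moved to $\cfull$, with new inner ball $B_j = \{i \in F_j : d'(i,j) \le D_{l_j - 1}\}$, followed by \textbf{update-}$\ccore(j)$. The changed constraints are: \cref{lp:aux4} for $j$ is replaced by the new \cref{lp:aux3}, which holds since $y'(B_j) \le y'(F_j) = 1$; \cref{lp:aux6} is preserved because $|\cfull|$ grows by $1$ exactly as $y'(F_j) = 1$ leaves the sum over $\cpart$; and \cref{lp:aux1} is adjusted by $\ccore$, where adding $j$ imposes $y'(F_j) = 1$ (precisely the trigger), and removing any other $j'$ merely relaxes an equality into the inequality $y'(B_{j'}) \le 1$ already enforced by \cref{lp:aux3}. \emph{Case (ii):} some $j \in \cfull$ with $y'(B_j) = 1$ is shrunk via $F_j \leftarrow B_j^{old}$, $l_j \leftarrow l_j - 1$, with a fresh inner ball; the new \cref{lp:aux3} satisfies $y'(B_j^{new}) \le y'(B_j^{old}) = 1$, and, if $j \in \ccore$, the equality $y'(F_j^{new}) = y'(B_j^{old}) = 1$ persists.

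For objective invariance, the key is the discretized-distance identity $d'(i,j) = D_{l_j}$ for every $i \in F_j \setminus B_j$ (by the definition of $l_j$ as the maximum discretized distance in $F_j$ and of $B_j$ as the shell immediately below). In case (i) this gives
\begin{equation*}
\sum_{i \in F_j} y'_i\, f(\lambda_2 d'(i,j)) = \sum_{i \in B_j} y'_i\, f(\lambda_2 d'(i,j)) + (1 - y'(B_j))\, f(\lambda_2 D_{l_j}),
\end{equation*}
which is exactly the new $\cfull$-contribution of $j$ in \ref{lp:aux}. Case (ii) is symmetric: $y'(B_j^{old}) = 1$ collapses the old contribution to $\sum_{i \in B_j^{old}} y'_i f(\lambda_2 d'(i,j))$, and applying the same shell identity to $B_j^{old} \setminus B_j^{new}$ at level $D_{l_j^{new}} = D_{l_j - 1}$ rewrites this as the new contribution. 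The main thing to be careful about, rather than an actual obstacle, is the bookkeeping inside \textbf{update-}$\ccore$: in particular, verifying that virtual clients from $S_0$ (with $l_{j'} = -1$) are never evicted from $\ccore$ by the level comparison and that swapping one equality for another preserves feasibility because the corresponding inequality is already carried by \cref{lp:aux3}.
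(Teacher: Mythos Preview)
Your proposal is correct and follows essentially the same approach as the paper's own proof: split into the two modification cases, use the key identity $d'(i,j)=D_{l_j}$ for $i\in F_j\setminus B_j$ to show $j$'s objective contribution is unchanged, observe that re-solving can only decrease the objective, and check that the modified constraints are satisfied by the current $y'$. Your version is actually more thorough than the paper's (you explicitly verify \cref{lp:aux6} and the effect of removals inside \textbf{update-}$\ccore$, which the paper leaves implicit), but the argument is the same.
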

\begin{proof}
There are two case. The first is when we move some $j$ from $\cpart$ to $\cfull$ when $y'(F_j)=1$. Since $B_j\subseteq F_j$, it satisfies the new constraints in \ref{lp:aux3} and \ref{lp:aux1}, if it is indeed added to $\ccore$. We partition $F_j=B_j\cup(F_j\setminus B_j)$, and the contribution of $j$ to the new objective is the same as when it is in $\cpart$, because each $i\in F_j\setminus B_j$ satisfies $d'(i,j)=D_{l_j}$ by definition.

The second case is when we decrease the radius level $l_j$ and invoke the subroutine on $j\in\cfull$, where $y'(B_j)=1$. Comparing the contributions of $j$ before and after the iteration, they are equal since the old contribution has $1-y'(B_j)=0$, and we can partition the new outer ball $F_j\leftarrow B_j$ in the same way as above.

In both cases, the objective of $y'$ does not change during an iteration, and at the beginning of each iteration, we solve for an optimal basic solution, thus yielding the lemma.
\end{proof}

\begin{property}\label{property:1}
  After each step of iterative rounding, the following properties hold:
  \begin{enumerate}[label=(\ref{property:1}.\arabic*), ref=(\ref{property:1}.\arabic*), leftmargin=1.2cm]
  \item\label{property:11} $\cfull$ and $\cpart$ form a partition of $\calC'$, $S_0 \subseteq \ccore$ and $\ccore \setminus S_0 \subseteq\cfull$.
  \item\label{property:12} $\{F_j: j \in \ccore\}$ are mutually disjoint.
  \item\label{property:13} For each $j \in \calC'$, $D_{l_j} \le \tau R_j$.
  \item\label{property:14} For each $j \in \calC'$, $l_j \ge -1$.
  \item\label{property:15} For each $i$ not co-located with $S_0$, $\sum_{j\in \calC':i\in F_j} f(\frac{2}{2+\delta}d(i,j)) \le 2\rho U$.
  \end{enumerate}
\end{property}
\begin{proof}[Sketch for \cref{property:1}]
The iterative rounding algorithm is directly from \cite{krishnaswamy2018constant}, and we provide a sketch here for completeness. We first notice that by iteratively decreasing the radius levels, no client can obtain a radius level of $-2$, since when $l_j=-1$, its inner ball is $B_j=\{i\in F_j:d'(i,j)\leq D_{-2}=-1\}=\emptyset$, the constraint $y(B_j)\leq 1$ cannot be tight, and we will not invoke the subroutine $\mathbf{update}\textbf{-}\ccore$ on $j$, showing \ref{property:14}.

To see \ref{property:11}, we only need to show that virtual clients in $S_0$ cannot be removed from $\ccore$. From the subroutine $\mathbf{update}\textbf{-}\ccore$, $j'$ can be removed by $j$ only when $l_j<l_{j'}$. But each virtual client starts with a radius level of $-1$, and removing any such virtual client means a radius level of $-2$, contradiction.

\ref{property:12} follows easily from the subroutine. \ref{property:13} is due to $F_j\subseteq\Ball{\calF}{j,R_j}$ at the beginning of iterative rounding, hence $D_{l_j}\leq\max_{i\in F_j}d'(i,j)\leq\tau\max_{i\in F_j}d(i,j)\leq\tau R_j$. 
Lastly, to see \ref{property:15}, we notice that each $F_j$, $j\in\calC'$ is inclusion-wise non-increasing during \cref{iterative-rounding}, therefore the sum in \ref{property:15}, being its star cost, is also non-increasing and at most $2\rho U$, using \ref{item:star:cost}.
\end{proof}

We now establish the connection between~\ref{lp:ext} and~\ref{lp:aux}, making it possible to compare their objectives, before and after the iterative rounding process.

\begin{lemma}\label{lemma:iter:init}
  For each $\lambda_1\in(0,1]$ and $\lambda_2=\lambda_1/\tau$, at the start of the iterative rounding algorithm, the optimal solution $y^\star$ to \ref{lp:ext} is a feasible solution to \ref{lp:aux} with objective value not increased.
\end{lemma}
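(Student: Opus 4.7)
The plan is to verify the constraints of \ref{lp:aux} one by one at the initial configuration $\cfull=\emptyset,\,\cpart=\calC',\,\ccore=S_0$, and then compare the two objective values term-by-term using the definition $\lambda_2=\lambda_1/\tau$ together with the discretization inequality $d'(i,j)\leq\tau d(i,j)$.

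For feasibility: \ref{lp:aux2} is inherited directly from \ref{lp:ext}; \ref{lp:aux3} is vacuous since $\cfull=\emptyset$; \ref{lp:aux4} and the bulk of \ref{lp:aux6} (which reduces to $\sum_{j\in\calC'}y^\star(F_j)\geq m'$ since $|\cfull|=0$) both follow from the outlier property \ref{item:outlier}; \ref{lp:aux5} follows from the cardinality property \ref{item:cardinality}. The only subtle case is \ref{lp:aux1}, which requires $y^\star(F_j)=1$ for each virtual client $j\in S_0$. Here one interprets $F_j$ for a virtual client $j\in S_0$ as the (duplicated) facilities co-located with $j$, and \ref{item:preselected} gives exactly $\sum_{i'\text{ co-located with }j}y^\star_{i'}=1$.

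For the objective comparison: at the initial state, the \ref{lp:aux} objective simplifies to $\sum_{j\in\calC'}\sum_{i\in F_j}y^\star_i f(\lambda_2 d'(i,j))$ because $\cfull=\emptyset$. After the facility-duplication procedure in \cref{lemma:schedule-rounding} we may assume $x^\star_{ij}=y^\star_i$ for all $i\in F_j$ and $x^\star_{ij}=0$ otherwise, so the \ref{lp:ext} objective on $y^\star$ equals $\sum_{j\in\calC'}\sum_{i\in F_j}y^\star_i f(\lambda_1 d(i,j))$. Since $d'(i,j)\leq\tau d(i,j)$ by definition of the discretization and $\lambda_2=\lambda_1/\tau$, we get $\lambda_2 d'(i,j)\leq\lambda_1 d(i,j)$, and non-decreasing $f$ yields $f(\lambda_2 d'(i,j))\leq f(\lambda_1 d(i,j))$ term-by-term. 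Summing gives the desired inequality.

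The main obstacle is bookkeeping rather than anything deep: one must be careful that the ``optimal solution $y^\star$ to \ref{lp:ext}'' being referred to is the post-duplication solution of \cref{lemma:schedule-rounding}, so that the identification $x^\star_{ij}=y^\star_i$ for $i\in F_j$ is valid and the objective of \ref{lp:ext} can be rewritten purely in terms of $y^\star$ and the outer balls $\{F_j\}$. Once this identification is fixed, both feasibility and the objective bound reduce to direct applications of \cref{lemma:schedule-rounding} and the discretization inequality.
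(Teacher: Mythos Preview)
Your proposal is correct and follows essentially the same route as the paper's own proof: verify each constraint of \ref{lp:aux} at the initial configuration (using \ref{item:outlier}--\ref{item:preselected} from \cref{lemma:schedule-rounding} for \ref{lp:aux1}, \ref{lp:aux4}--\ref{lp:aux6}), and then compare objectives term-by-term via $d'(i,j)\leq\tau d(i,j)$, $\lambda_2=\lambda_1/\tau$, and monotonicity of $f$. Your explicit remark that $y^\star$ here is the post-duplication vector of \cref{lemma:schedule-rounding} (so that $x^\star_{ij}=y^\star_i$ on $F_j$) is a useful clarification that the paper leaves implicit.
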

\begin{proof}
  At the start of iterative rounding, we see $\ccore=S_0$, and because we require $y_i=1$ for $i\in S_0$ in~\ref{lp:ext} and $F_i$ consists of all copies of $i$ (as a virtual client), this constraint is satisfied. Initially, $\cfull$ is empty and we only have $y(F_j)\leq1$ for $j\in \calC'$ and $\sum_{j\in \calC'}y(F_j)\geq m'$, which are true by the feasibility of $y^\star$ to \ref{lp:ext}.
  In~\ref{lp:ext}, each connection between $i,j$ has contribution $x^\star_{ij}f(\lambda_1 d(i,j))$. In \ref{lp:aux}, because we have no full clients in the beginning, its contribution is $y^\star_{i}f(\lambda_2 d'(i,j))$. Since $d'$ is rounded-up by a factor of at most $\tau$ and $\lambda_1=\tau\lambda_2$, we further obtain
  \[y^\star_{i}f(\lambda_2 d'(i,j))\leq y^\star_{i}f(\lambda_1 d(i,j)),\]
  thus the objective of \ref{lp:aux} is at most the objective value of~\ref{lp:ext}.
\end{proof}

\begin{lemma}
  \label{lemma:iter:integral}
  If none of \eqref{lp:aux3} and \eqref{lp:aux4} is tight for some $y'$, then there are at most two fractional variables in $y'$.
  At the conclusion of the algorithm, for each 
  $j \in \cfull$,\[\sum_{i \in \calF: d(i,j) \le \frac{3\tau -1}{\tau-1} D_{l_j}} y_i' \ge 1.\]
\end{lemma}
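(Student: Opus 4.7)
The plan is to handle the two assertions by separate arguments. For the first, I would apply a standard basic-feasibility rank count to an optimal basic solution $y'$ of~\ref{lp:aux}, letting $\calF_{\mathrm{frac}}=\{i\in\calF:y_i'\in(0,1)\}$. Under the hypothesis, \eqref{lp:aux3} and \eqref{lp:aux4} are loose and the bound constraints~\eqref{lp:aux2} cannot be tight on fractional coordinates, so only \eqref{lp:aux1} (always tight), \eqref{lp:aux5}, and \eqref{lp:aux6} can contribute to the tight-constraint rank restricted to $\calF_{\mathrm{frac}}$. By~\ref{property:12} the sets $\{F_j\}_{j\in\ccore}$ are pairwise disjoint, so the restrictions of \eqref{lp:aux1} to $\calF_{\mathrm{frac}}$ have disjoint supports and are linearly independent; further, any $F_j$ meeting $\calF_{\mathrm{frac}}$ must contain at least two fractional variables, since a single fractional value cannot combine with $\{0,1\}$-valued coordinates to sum to $1$. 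A short case analysis on whether some fractional variable lies outside $\bigcup_{j\in\ccore}F_j$ -- noting that when none does, the restriction of~\eqref{lp:aux5} is a linear combination of those of~\eqref{lp:aux1} and contributes no additional rank -- then pins $|\calF_{\mathrm{frac}}|\le 2$.

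For the second assertion, fix $j\in\cfull$ at termination. If $j\in\ccore$, the claim is immediate: \eqref{lp:aux1} gives $y'(F_j)=1$, and by the definitions of the radius level and of $d'$ one has $F_j\subseteq\Ball{\calF}{j,D_{l_j}}$. Otherwise, my plan is to construct a chain $j=j^{(0)},j^{(1)},\dots,j^{(s)}$ with $j^{(s)}\in\ccore$ at the end, then bound $d(j,F_{j^{(s)}}^{\mathrm{final}})$ by telescoping. Consider the last invocation of the subroutine on $j$ (triggered either by the transition $\cpart\to\cfull$ or by the last decrement of $l_j$): since $j$ was not added to $\ccore$, it produced a blocker $j^{(1)}\in\ccore$ at some time $t_0$ with $l_{j^{(1)}}(t_0)\le l_j$ and $F_j\cap F_{j^{(1)}}(t_0)\neq\emptyset$, and both $l_j$ and $F_j$ are frozen thereafter. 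If $j^{(1)}$ is later evicted from $\ccore$ by some $j^{(2)}$, the acceptance rule at its addition time $t_1$ forces $F_{j^{(1)}}(t_1)\cap F_{j^{(2)}}(t_1)\neq\emptyset$ and $l_{j^{(2)}}(t_1)<l_{j^{(1)}}(t_1)$; iterating until reaching a member of $\ccore$ at termination produces the chain.

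The distance bound then telescopes. Each hop admits a witness facility in $F_{j^{(k)}}(t_k)\cap F_{j^{(k+1)}}(t_k)$, giving $d(j^{(k)},j^{(k+1)})\le D_{\alpha_k}+D_{\beta_k}$ where $\alpha_k=l_{j^{(k)}}(t_k)$ and $\beta_k=l_{j^{(k+1)}}(t_k)$. Level monotonicity yields $\alpha_{k+1}\le\beta_k$, while the acceptance rule gives $\beta_k<\alpha_k$ for $k\ge 1$, so $\beta_0>\beta_1>\cdots>\beta_{s-1}\ge -1$ is strictly decreasing over the integers. Combining the hop bounds with $D_{l_{j^{(s)}}^{\mathrm{final}}}\le D_{\beta_{s-1}}$ and telescoping, one obtains $d(j,F_{j^{(s)}}^{\mathrm{final}})\le D_{l_j}+2\sum_{k=0}^{s-1}D_{\beta_k}\le D_{l_j}+2\sum_{l=-1}^{l_j}\tau^{l}\le D_{l_j}\bigl(1+2\tau/(\tau-1)\bigr)=D_{l_j}(3\tau-1)/(\tau-1)$. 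Since $j^{(s)}\in\ccore$, \eqref{lp:aux1} gives $y'(F_{j^{(s)}}^{\mathrm{final}})=1$, and the desired inequality on $\sum_{i:d(i,j)\le(3\tau-1)/(\tau-1)D_{l_j}}y_i'$ follows. The main obstacle will be carefully tracking the chain so that the telescoping yields exactly the constant $(3\tau-1)/(\tau-1)$: a cruder $D_{\alpha_k}+D_{\beta_k}\le 2D_{\alpha_k}$ bound would give an inferior $2(2\tau-1)/(\tau-1)$ factor, so both $\alpha_k$ and $\beta_k$ must be tracked and the distinctness of the $\beta_k$'s must be exploited.
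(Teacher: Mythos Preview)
Your proposal is correct and, for both assertions, rests on the same ideas as the paper, though packaged somewhat differently.

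For the first assertion, both you and the paper use the standard basic-feasibility rank count: after removing \eqref{lp:aux3} and \eqref{lp:aux4}, the only non-bound tight constraints come from the laminar family \eqref{lp:aux1}$\cup$\eqref{lp:aux5} (laminar by \ref{property:12}) and the single knapsack \eqref{lp:aux6}. The paper simply states that these contribute at most $t/2+1$ independent constraints; your case split on whether some fractional coordinate lies outside $\bigcup_{j\in\ccore}F_j$ is the beginning of the same token-counting argument. Note that your ``short case analysis'' needs the observation that if \eqref{lp:aux5} is tight and independent of the tight \eqref{lp:aux1} constraints, then $\calF\setminus\bigcup_{j}F_j$ must in fact contain at least \emph{two} fractional variables (since $y'(\calF)-\sum_j y'(F_j)$ is an integer); this is what rules out $t=3$ and is implicit in the paper's laminar-family bound of $t/2$.

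For the second assertion, the paper takes a slightly different route: it first proves by induction on the radius level that any client ever added to $\ccore$ at level $l$ has a unit of $y'$-mass within distance $\tfrac{\tau+1}{\tau-1}D_l$, and then spends one extra hop from a general $j\in\cfull$ to its blocker $j'$ to reach $\tfrac{3\tau-1}{\tau-1}D_{l_j}$. Your explicit chain $j^{(0)},\dots,j^{(s)}$ with telescoping on the $\beta_k$'s is exactly the unrolled version of that induction. The paper's packaging is a bit cleaner and isolates the sharper intermediate constant $\tfrac{\tau+1}{\tau-1}$ for core clients, while your approach is more direct and makes explicit why the cruder $2D_{\alpha_k}$ bound would lose a factor. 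Both arrive at the same final constant for the same reason: the levels $\beta_k$ are strictly decreasing integers bounded above by $l_j$, so the geometric sum $\sum_{l\le l_j}\tau^l$ controls the total detour.
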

\begin{proof}
Since $y'$ is an optimal basic solution, if it has $t>0$ strictly fractional variables, there are at least $t$ non-trivial (i.e., not in the form of $y_i\geq0$ or $y_i\leq1$) and independent constraints of \ref{lp:aux} that are tight at $y'$. The remaining constraints form a knapsack constraint \ref{lp:aux6}, and a laminar family \ref{lp:aux1} plus \ref{lp:aux5}, according to \ref{property:12}. The number of such tight independent constraints is therefore at most $t/2+1\geq t$, and we have $t\in\{1,2\}$.

To show the second assertion, we first use induction to show that, for each $j$ that is added to $\ccore$ during \cref{iterative-rounding} with radius level $l$, the final solution satisfies $\sum_{i\in\calF:d(i,j)\leq\frac{\tau+1}{\tau-1}D_l}y'_i\geq1$. The base case is simple for $l=-1$, since we know such $j$ cannot be removed from $\ccore$, and $y'$ satisfies the inequality due to the constraint \ref{lp:aux1}. Suppose the claim is true up to $l-1$, $l\geq0$. For $j$ added to $\ccore$ with radius level $l$, if it is not later removed from $\ccore$, the claim directly follows using \ref{lp:aux1}. Otherwise, if $j$ is later removed by $j'$ with $l_{j'}<l_j=l$ and $F_{j'}\cap F_j\neq\emptyset$, using induction hypothesis, the inequality holds for $j'$ and $l_{j'}$, where $D_{l_{j'}}\leq D_l/\tau$. Using triangle inequality, all these facilities are at a distance at most $\frac{\tau+1}{\tau-1}D_{l_{j'}}+D_{l_{j'}}+D_l\leq(\frac{\tau+1}{\tau(\tau-1)}+\frac{1}{\tau}+1)D_l=\frac{\tau+1}{\tau-1}D_l$ from $j$, showing the induction step.

Back to the proof of the lemma. When we invoke $j$ on its \emph{final} radius $l_j$, if we can indeed add $j$ to $\ccore$, the claim above is sufficient since $\frac{\tau+1}{\tau-1}\leq\frac{3\tau-1}{\tau-1}$. If it cannot be added to $\ccore$, it is because there exists $j'\in\ccore$ with $F_{j'}\cap F_j\neq\emptyset$ and $l_{j'}\leq l_j$. Using the claim on the iteration when we add $j'$ to $\ccore$ with radius level $l_{j'}$, and using triangle inequality, all the facilities in the sum are at a distance at most $\frac{\tau+1}{\tau-1}D_{l_{j'}}+D_{l_{j'}}+D_{l_j}\leq\frac{3\tau-1}{\tau-1}D_{l_j}$, whence the lemma follows.
\end{proof}

\begin{lemma}\label{lemma:iter:final}
  Let $0 < \lambda \le \frac{2\tau-2}{\tau(3\tau-1)(2+\delta)}$, the optimal solution $y^\star$ and outer balls $\{F_j\}_{j\in\calC'}$ obtained from \cref{lemma:schedule-rounding} and the stronger relaxation \ref{lp:ext} where $\lambda_1=\frac{\tau(3\tau-1)}{\tau-1}\lambda$.
  The iterative rounding algorithm on \ref{lp:aux}, $\lambda_2=\frac{3\tau-1}{\tau-1}\lambda$ returns a solution $y'$ with at most two fractions.
  Moreover, $y'$ is a feasible solution to~\ref{lp:1} with objective at most $\frac{\lambda\tau(3\tau-1)(2+\delta)}{2\tau-2}U'$.
\end{lemma}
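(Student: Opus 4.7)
The plan is to chain four inequalities that bound the \ref{lp:1}-cost of the output pair $(y',x)$, where $x$ is an assignment I will construct client by client. The chain reads: the \ref{lp:1}-cost of $(y',x)$ is at most the \ref{lp:aux}-value of $y'$, which does not increase throughout the loop (Lemma~\ref{lemma:iter:feasible}) and is therefore at most the \ref{lp:aux}-value of the starting point $y^\star$, which is at most the optimum of \ref{lp:ext} (Lemma~\ref{lemma:iter:init}), which in turn is at most $\tfrac{\lambda_1(2+\delta)}{2}U'$ (Lemma~\ref{lemma:extlp}). Substituting $\lambda_1=\tfrac{\tau(3\tau-1)}{\tau-1}\lambda$ rewrites this right-hand side as the bound in the statement, and one checks $\lambda_1\leq\tfrac{2}{2+\delta}$ so that Lemma~\ref{lemma:extlp} applies. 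Thus the real work is (a) verifying that iteration terminates with $y'$ having at most two fractional entries, and (b) exhibiting an $x$ whose per-client contribution to the \ref{lp:1}-objective is dominated by the per-client contribution of $y'$ to the \ref{lp:aux}-objective.

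For (a), when the \texttt{while}-loop exits neither \eqref{lp:aux3} nor \eqref{lp:aux4} can be tight at $y'$, otherwise the loop would iterate once more; Lemma~\ref{lemma:iter:integral} then forces at most two strictly fractional entries in $y'$, and the $y$-constraints of \ref{lp:1} are inherited from \eqref{lp:aux2}, \eqref{lp:aux5}, and \eqref{lp:ext5}. For (b), I build $x$ client by client. For $j\in\cpart$, put $x_{ij}=y'_i$ on $i\in F_j$; monotonicity of $f$ together with $\lambda\leq\lambda_2$ and $d(i,j)\leq d'(i,j)$ gives $\sum_{i\in F_j}y'_i f(\lambda d(i,j))\leq\sum_{i\in F_j}y'_i f(\lambda_2 d'(i,j))$, which is precisely the $\cpart$-term of \ref{lp:aux}. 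For $j\in\cfull$, Lemma~\ref{lemma:iter:integral} supplies a set of facilities within distance $\tfrac{3\tau-1}{\tau-1}D_{l_j}$ whose total $y'$-mass is at least $1$; I place $y'(B_j)$ units on $B_j$ and $1-y'(B_j)$ units on those farther facilities. Using the identity $\lambda\cdot\tfrac{3\tau-1}{\tau-1}=\lambda_2$, the resulting \ref{lp:1}-contribution is at most
\begin{equation*}
\sum_{i\in B_j}y'_i f(\lambda d(i,j))+(1-y'(B_j))f(\lambda_2 D_{l_j})\ \leq\ \sum_{i\in B_j}y'_i f(\lambda_2 d'(i,j))+(1-y'(B_j))f(\lambda_2 D_{l_j}),
\end{equation*}
which matches the $\cfull$-term of \ref{lp:aux} line for line. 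Summing over $\calC'$ gives the first inequality of the chain.

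To extend $(y',x)$ to a fully feasible \ref{lp:1}-solution on $\calC$, I route the remaining $m-m'$ demand to the pre-selected facilities in $S_0$, which are fully open by \eqref{lp:ext5}. The resulting extra cost is $O(\opt)$ by \ref{con:sparse3}; it is absorbed into the additive slack of Lemma~\ref{theorem:ordered1} when the overall approximation ratio is assembled and is not counted towards the bound claimed in the present lemma, whose $U'$-term covers only the $\calC'$-portion.

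The principal obstacle is the $\cfull$-case of (b). In \ref{lp:aux} the client pays only a ``phantom'' term $(1-y'(B_j))f(\lambda_2 D_{l_j})$ at the outer-ball radius, but in \ref{lp:1} the same $1-y'(B_j)$ units must actually be routed to facilities that may sit as far as $\tfrac{3\tau-1}{\tau-1}D_{l_j}$, strictly outside the outer ball. The scaling $\lambda_2=\tfrac{3\tau-1}{\tau-1}\lambda$ is engineered precisely so that $\lambda\cdot\tfrac{3\tau-1}{\tau-1}D_{l_j}=\lambda_2 D_{l_j}$; this single identity closes the inequality and pins down the approximation constant that appears in the lemma.
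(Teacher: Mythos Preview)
Your proposal is correct and follows essentially the same approach as the paper: chain \cref{lemma:extlp}, \cref{lemma:iter:init}, \cref{lemma:iter:feasible}, and \cref{lemma:iter:integral}, then compare the \ref{lp:1}-objective of the constructed assignment term-by-term against the \ref{lp:aux}-objective using $\lambda\cdot\tfrac{3\tau-1}{\tau-1}=\lambda_2$ for the $\cfull$ clients and $\lambda\le\lambda_2$, $d\le d'$ for the rest. Your paragraph on extending to all of $\calC$ is extraneous: the paper (like its proof of \cref{theorem:robust-main}) treats the \ref{lp:1}-objective here as summed over $\calC'$ only, deferring the $\calC\setminus\calC'$ clients to the greedy step; you correctly note that this extra cost is not part of the present bound, so simply drop that paragraph.
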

\begin{proof}
  From \cref{lemma:iter:init} and \cref{lemma:extlp}, $y^\star$ is feasible for \ref{lp:aux} and its objective value is upper bounded by $\frac{\lambda_1(2+\delta)}{2}U'=\frac{\lambda\tau(3\tau-1)(2+\delta)}{2(\tau-1)}U'$. If we further take $y'$ to~\ref{lp:1}, we can connect those in $\cfull\setminus \ccore$ to facilities at most $\frac{3\tau-1}{\tau-1}D_{l_j}$ away according to \cref{lemma:iter:integral}, and the feasibility of $y'$ w.r.t. \ref{lp:1} is guaranteed by Property~\ref{property:1} and \cref{lemma:iter:feasible}. From \cref{lemma:iter:integral}, the final solution $y'$ has at most two fractional values. From \cref{lemma:iter:feasible} and \cref{lemma:iter:init}, the objective value of $y'$ in \ref{lp:1} is upper-bounded by (recall that $d'\geq d$ and $\cfull\cup\cpart$ is a partition of $\calC'$)
  \begin{equation*}
    \sum_{j \in \cpart} \sum_{i \in F_j} y_i' f(\lambda d'(i,j)) + \sum_{j\in \cfull} \sum_{i \in B_j} y_i' f(\lambda d'(i,j))  + (1-y'(B_j)) f\left(\lambda \frac{3\tau-1}{\tau-1}D_{l_j}\right),
  \end{equation*}
  which is at most the objective of \ref{lp:aux} and bounded by $\frac{\lambda\tau(3\tau-1)(2+\delta)}{2\tau-2}U'$.
\end{proof}

The following theorem converts the almost-integral solution $y'$ to an integral one $\hat y$.

\begin{theorem}\label{theorem:rounding}
  There exists $\lambda > 0$ depending on $\delta$ and $\tau$ s.t. we can efficiently compute an integral solution $\hat{y}$ to \ref{lp:1} with objective value at most $5\rho U$ larger than that of $y'$.
\end{theorem}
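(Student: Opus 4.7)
The plan is to convert the almost-integral basic solution $y'$, which has at most two strictly fractional coordinates $y'_{i_1}, y'_{i_2}$ by \cref{lemma:iter:integral}, into an integral feasible solution $\hat y$ to \ref{lp:1} via a careful local rounding. The key observation is that the tight constraints in the basic-LP characterization, namely the laminar family formed by $\{F_j : j \in \ccore\}$ via \ref{property:12}, together with the cardinality bound \cref{lp:aux5}, the inner-ball bounds \cref{lp:aux3}, and the outlier bound \cref{lp:aux6}, couple $y'_{i_1}$ and $y'_{i_2}$ in a very restricted way: either they lie inside a common $F_j$ with $y'(F_j)=1$ (so $y'_{i_1}+y'_{i_2}=1$), or they are decoupled by the simultaneous tightness of cardinality and outlier constraints. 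In both cases I would round by opening exactly one of $i_1,i_2$, closing the other, and if necessary dropping one served partial client from the service-count pool to restore \cref{lp:aux6}. Because each dropped client removes a non-negative term from the objective, this restoration is free.

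Having committed to closing one facility (say $i_2$), the cost analysis proceeds client by client. Every full client $j$ with $i_2\in B_j$ now incurs a backup penalty at most $f(\lambda_2 D_{l_j})$ from the \ref{lp:aux} objective, while every partial client $j$ with $i_2\in F_j$ incurs a penalty of at most $f(\lambda_2\cdot\tfrac{3\tau-1}{\tau-1}D_{l_j})$ when rerouted to the alternative open facility within distance $\tfrac{3\tau-1}{\tau-1}D_{l_j}$ granted by \cref{lemma:iter:integral}. The calibration $\lambda\le\tfrac{2\tau-2}{\tau(3\tau-1)(2+\delta)}$ from \cref{lemma:iter:final}, combined with \ref{property:13} (which bounds $D_{l_j}\le \tau R_j$) and the definition of $f$ as piece-wise linear in its argument, guarantees that each such per-client penalty is dominated by $f\!\left(\tfrac{2}{2+\delta}d(i_2,j)\right)$ up to an absolute constant. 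Summing these bounds over $\{j: i_2\in F_j\}$ and invoking the star-cost bound \ref{property:15} once controls the total increment from closing $i_2$ by $2\rho U$; a symmetric accounting around $i_1$ and the bookkeeping for the single dropped partial client together yield the claimed $5\rho U$ slack.

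The main obstacle is the constant chase that produces the precise factor $5$. Concretely, I have to verify the comparison $f\!\left(\lambda_2\tfrac{3\tau-1}{\tau-1}D_{l_j}\right)\le\mathrm{const}\cdot f\!\left(\tfrac{2}{2+\delta}d(i_2,j)\right)$ uniformly over every affected client, carefully tracking the multiplicative constants introduced by the $\tau$-discretization and the $\lambda$-scaling, and to ensure that the two bookkeeping rounds (one per fractional facility, plus the dropped partial client) do not overlap in a way that inflates the constants. This is what pins down the dependence of $\lambda$ on $\delta$ and $\tau$. Once the comparison is uniformly in hand, the star-cost bound \ref{property:15} absorbs the entire increment into a constant multiple of $\rho U$, yielding the stated $5\rho U$ bound.
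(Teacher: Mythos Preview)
Your high-level structure is right---round by opening one of $i_1,i_2$ and closing the other, then account for the rerouting cost---but the central inequality you rely on is false. You claim that for every full client $j$ with $i_2\in B_j$ the rerouting penalty is dominated, up to a constant, by $f\!\bigl(\tfrac{2}{2+\delta}\,d(i_2,j)\bigr)$; this breaks when $j$ is very close to (or co-located with) $i_2$. In that case $d(i_2,j)$ is tiny, so the star-cost contribution is tiny, yet the nearest surviving open facility may sit at distance $t'=d(i_2,\hat F)$, which is not controlled by $d(i_2,j)$ at all. Hence \ref{property:15} alone cannot absorb these clients, and \cref{lemma:iter:integral} does not rescue you either: it only guarantees a unit of $y'$-mass within radius $\tfrac{3\tau-1}{\tau-1}D_{l_j}$, and that mass may well include the fractional $i_2$ you just closed.

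The missing ingredient is the sparsity guarantee \ref{property:small-backup} from \cref{theorem:gap}. The paper splits the affected full clients $J=\{j\in\cfull:i_2\in B_j\}$ at a threshold $\gamma t'$. For the far set $J_1=\{j\in J:d(i_2,j)>\gamma t'\}$ your star-cost idea does work: the triangle inequality gives $d(j,i^\star)<(1+1/\gamma)\,d(i_2,j)$, and choosing $\lambda\le\tfrac{2}{2+\delta}\cdot\tfrac{\gamma}{1+\gamma}$ bounds the total by $2\rho U$ via \ref{property:15}. The close set $J_2=\{j\in J:d(i_2,j)\le\gamma t'\}$ is handled by \emph{cardinality}, not cost: since each such $j$ has $R_j\ge\tfrac{\tau-1}{\tau(3\tau-1)}(1-\gamma)t'$ (from \ref{property:13} and $i_2\in B_j$), \ref{property:small-backup} upper-bounds $|J_2|$ by $\rho U\big/f\!\bigl((1-\delta)(1-\delta/4)t/(1+3\delta/4)\bigr)$, and a second constraint on $\lambda$ forces $|J_2|\cdot f(\lambda(1+\gamma)t')\le\rho U$. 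Balancing these two constraints on $\lambda$ (via a specific choice of $\gamma$ in terms of $\delta,\tau$) is exactly what pins down $\lambda$. The $5\rho U$ total is then $2\rho U$ for connecting the partial clients $C_1$ to $i_1$ (while $C_2$ is discarded outright, since $|C_1|+|\cfull|\ge m'$), plus $2\rho U$ for $J_1$, plus $\rho U$ for $J_2$. Two smaller corrections: \cref{lemma:iter:integral} applies only to $j\in\cfull$, not to partial clients; and you do not drop ``one'' partial client but all of $C_2$.
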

\begin{proof}
  Suppose $y_{i_1}'$ and $y_{i_2}'$ are two fractional variables and $y_{i_1}'+y_{i_2}'=1$.
  Let $C_1 = \{j \in \cpart: i_1 \in F_j, i_2 \notin F_j\}$ and $C_2 = \{j \in \cpart: i_1 \notin F_j, i_2 \in F_j\}$.
W.l.o.g., we assume $|C_1| \ge |C_2|$.
Our integral solution $\hat{y}$ will be defined as $\hat{y}_{i_1}=1$, $\hat{y}_{i_2}=0$ and $\hat{y}_i=y_i'$ for $i\in \calF\setminus\qty{i_1,i_2}$.
One has $|C_1|+|\cfull| \ge y'_{i_1}|C_1| + y'_{i_2}|C_2| + |\cfull| \ge m'$ using the feasibility of $y'$ to~\ref{lp:aux}.

Using Property~\ref{property:1}, the cost of connecting all of $C_1$ to $i_1$ is at most
  \begin{equation*}
    \label{eq:rounding:1}
    \sum_{j \in C_1} f\left(\frac{2}{2+\delta}d(i_1,j)\right) \le \sum_{j:i_1 \in F_j} f\left(\frac{2}{2+\delta}d(i_1,j)\right) \le 2\rho U.
  \end{equation*}
  It remains to bound cost from connecting full clients close to $i_2$, defined as $J=\qty{j\in \cfull: i_2 \in B_j}$. Let $\gamma > 0$ be some constant, $i^\star=e(i_2,\hat{F})$ and $t'=d(i_2,i^\star)$. Let $J_1=\qty{j \in J: d(j,i_2) > \gamma t'}$ and $J_2=\qty{j \in J: d(j, i_2) \le \gamma t'}$.
  For $j \in J_1$, we have $d(j,i^\star) \le d(j,i_2) + d(i_2,i^\star) < (1+\frac{1}{\gamma}) d(j,i_2)$, thus if we want the following inequality for additional pseudo costs to hold,
  \begin{equation*}
    \label{eq:rounding:2}
    \sum_{j\in J_1} f(\lambda d(j,i^\star)) \le \sum_{j\in J_1} f\left(\frac{2}{2+\delta} d(j,i_2)\right) \le 2\rho U,
  \end{equation*}
  the following choice of parameter suffices,
  \begin{equation}
    \lambda \leq \frac{2}{2+\delta} \cdot \frac{\gamma}{1+\gamma}.\label{eqn:parameter:alpha}
  \end{equation}

  For $j \in J_2$, we have $R_j \ge \frac{D_{l_j}}{\tau} \ge \frac{\tau-1}{\tau(3\tau-1)}(t'-d(j,i_2)) \ge \frac{\tau-1}{\tau(3\tau-1)}(1-\gamma)t'$.
  Let $t=\frac{\tau-1}{\tau(3\tau-1)}(1-\gamma)t'$. Suppose that $\frac{\delta}{4+3\delta}t \ge \gamma t'$, then using \ref{property:small-backup} and recalling $R_j=(1+3\delta/4)\widehat{R}_j$,
  \begin{equation*}
    |J_2| \le \abs{\qty{j \in \Ball{\calC'}{i_2, \frac{\delta}{4+3\delta}t}: R_j \ge t}} \le \frac{\rho U}{f(\frac{(1-\delta)(1-\delta/4)}{1+3\delta/4} t)}.
  \end{equation*}
  Using the triangle inequality, we have $d(j,i^\star)\leq(1+\gamma)t'$ and the following total cost of connecting $J_2$ to $i^\star$,
  \begin{equation*}
    \label{eq:rounding:3}
    \sum_{j\in J_2} f(\lambda d(j,i^\star)) \le f(\lambda (1+\gamma) t') |J_2| \le  \rho U,
  \end{equation*}
  where one could choose the following value,
  \begin{equation}
    \lambda \leq\frac{(1-\delta)(1-\delta/4)}{(1+3\delta/4)} \cdot \frac{\tau-1}{\tau(3\tau-1)} \cdot \frac{1-\gamma}{1+\gamma}.\label{eqn:parameter:beta}
  \end{equation}
  
  Denote $\sigma=\frac{\tau-1}{\tau(3\tau-1)}$ and let $\gamma = \frac{\delta \sigma}{4+3\delta+\delta \sigma}$ so that $\frac{\delta}{4+3\delta}t = \gamma t'$.
  Letting $\lambda$ be the minimum of \eqref{eqn:parameter:alpha} and \eqref{eqn:parameter:beta} and summing over the three cases, the increase of objective value w.r.t. \ref{lp:1} is at most $2\rho U + 2\rho U + \rho U = 5\rho U$, thus the theorem follows.
\end{proof}

\subsection{Proof of \cref{theorem:robust-main}}

\begin{proof}[Proof of~\cref{theorem:robust-main}]
Let $\frac{\tau-1}{\tau(3\tau-1)} = 0.101,\,\delta=0.81765$, thus having $\lambda\approx0.008856$.
Fix $\epsilon>0$, obtain a faithful function $f$ using \cref{theorem:ordered1}. Fix $\delta,\rho>0$, compute $\calC',m',S_0,\{R_j:j \in \calC'\}$ via \cref{theorem:sparse} and \cref{theorem:gap}, and solve the stronger \ref{lp:ext} with $\lambda_1=\frac{\tau(3\tau-1)}{\tau-1}\lambda$. 
Using iterative rounding, we obtain an almost-integral solution to \ref{lp:1} using \cref{lemma:iter:final}. Next, we compute an integral solution $\hat y$ using \cref{theorem:rounding}, with the objective w.r.t. \ref{lp:1} increased by at most $5\rho U$. Let $\hat F=\{i\in\calF:\hat y_i=1\}$, greedily connect $m-m'$ clients in $\calC\setminus\calC'$ minimizing $f\left(\frac{1-\delta}{1+\delta} d(j,\hat F)\right)$ for each of them
and output the final solution $(\hat F,\hat C)$. We consider the pseudo costs of $\hat C\setminus\calC'$ and $\hat C\cap\calC'$ separately,
\begin{align}
  &\sum_{j\in\hat C\setminus \calC'} f(\lambda d(j,\hat F))+\sum_{j\in\hat{C}\cap\calC'}f(\lambda d(j,\hat{F}))\notag\\
   \leq &\sum_{j\in\copt\setminus \calC'}\frac{\lambda(1+\delta)}{1-\delta} f\left(\frac{1-\delta}{1+\delta} d(j,\hat F)\right) + \left(\frac{\lambda\tau(3\tau-1)(2+\delta)}{2\tau-2}U' + 5\rho U\right)\notag\\
   \leq &\max\left\{\frac{\lambda(1+\delta)}{1-\delta},\frac{\lambda\tau(3\tau-1)(2+\delta)}{2\tau-2} \right\}\left(\sum_{j\in \copt\setminus\calC'}f\left(\frac{1-\delta}{1+\delta} d(j,S_0)\right)+U'\right)+5\rho U\notag\\
  \leq & \:0.12354U+5\rho U.\label{proof:robust-main:1}
\end{align}
In the above, the first inequality is due to~\cref{lemma:iter:final}, \cref{theorem:rounding} and the greedy selection of $\hat C\setminus\calC'$. The second one is because $S_0\subseteq\hat F$. The last one is according to \ref{con:sparse3} and the choices of parameters. 

Recall that $U\le (1+\epsilon)V^\star$ where $V^\star$ is the optimal objective for \ref{lp:1} when $\lambda=1$, so using \cref{theorem:ordered1}, one has $U\leq(1+O(\epsilon))\opt$. Using~\eqref{proof:robust-main:1} and~\cref{theorem:ordered1} again, the overall cost of the solution $(\hat F,\hat C)$ to the original problem is $(0.12354U+5\rho U+(1+O(\epsilon))\opt)/\lambda\leq (126.9+O(\epsilon+\rho))\opt\leq127\opt$, by choosing small enough $\rho,\epsilon$. The running time is obtained from the enumeration process and thus bounded by a polynomial.
\end{proof}

%%--

\section{Ordered Matroid Median}\label{appendix:matroid}

The \emph{matroid median} problem~\cite{krishnaswamy2011matroid,krishnaswamy2018constant,swamy2016improved} is a generalization of $k$-median with the cardinality constraint $k$ replaced by a matroid $\calM$ with rank function $r$, supported on $\calF$, and we are required to open facilities that form an independent set of $\calM$. Here we consider the natural generalization of matroid median, $\omm$, with the objective changed to the ordered objective, given the non-increasing non-negative vector $w\in\mathbb{R}^{|\calC|}$. We give the first constant approximation that we are aware of.

As is pointed out by~\cite{krishnaswamy2018constant}, the natural LP relaxation for matroid median has a small integrality gap, therefore we skip the pre-processing steps and provide a sketch on how the iterative rounding framework gives rise to the algorithm. Our pseudo cost function $f$ is defined exactly as before.
\begin{enumerate}
    \item Ignore the pre-processing steps (obtaining sparse instances and constructing $\widehat{R}_j$) in the $\oko$ algorithm, and replace the cardinality constraint $\sum_{i\in\calF}y_i=k$ with $\sum_{i\in S}y_i\leq r_\calM(S),\,\forall S\subseteq\calF$, following a classic matroid characterization result by Edmonds \cite{edmonds2001submodular}. We proceed similarly to the auxiliary LP for iterative rounding.
    \item Because we do not have any outliers and each client will end up in $\cfull$, the remaining \emph{tight} constraints after the termination of iterative rounding, consist of only a partition matroid ($y(F_j)=1$ for $j\in \ccore$) and a subset of matroid constraints. Therefore the corresponding output solution $y'$ is integral.
    \item Using the same argument as~\cref{lemma:iter:final}, the objective value of $y'$ in the original relaxation \ref{lp:1} is at most bounded by \ref{lp:aux} where $\lambda_2=\frac{3\tau-1}{\tau-1}\lambda$ and \ref{lp:ext} where $\lambda_1=\frac{\tau(3\tau-1)}{\tau-1}\lambda$. Similar to \cref{lemma:extlp}, the objective value of \ref{lp:ext} is at most $\lambda_1(1+\epsilon)\opt$, hence using~\cref{theorem:ordered1}, the final solution has an approximation ratio at most
    \[\min_\lambda\left\{\frac{1}{\lambda}\left(\frac{\tau(3\tau-1)}{\tau-1}\lambda(1+\epsilon)+(1+O(\epsilon))\right)\right\}=10+4\sqrt{6}+O(\epsilon)\leq19.798+O(\epsilon)\leq19.8,\]
    where we choose small enough $\epsilon$, and $\lambda\leq\frac{\tau-1}{\tau(3\tau-1)}\in(0,5-2\sqrt{6}]$ due to the constraint $\frac{\tau(3\tau-1)}{\tau-1}\lambda=\lambda_1\leq1$ from the construction of \ref{lp:ext}.
\end{enumerate}

\section{Ordered Knapsack Median}\label{appendix:knapsack}

The \emph{knapsack median} problem~\cite{byrka2015knapsack,krishnaswamy2018constant,kumar2012constant,swamy2016improved} is also a generalization of $k$-median, by replacing the cardinality constraint $k$ with a given knapsack constraint $W$. Formally, for each facility $i\in\calF$ there is a non-negative weight $\wt_i\geq 0$, and we are required to select a subset $F\subseteq \calF$ of facilities such that their total weight is at most $W$, i.e., $\wt(F)=\sum_{i\in F}\wt_i\leq W$. Here we consider the natural generalization of knapsack median, $\okm$, with only the objective changed to the ordered version, given the non-increasing non-negative vector $w\in\mathbb{R}^{|\calC|}$. We give the first constant approximation that we are aware of.

We follow the procedures given in~\cite{krishnaswamy2018constant}, use an iterative rounding process similar to $\oko$, and use the same pseudo cost function $f$ as before.

\begin{theorem}
  \label{theorem:knap:sparse}
  Given $\rho,\delta\in(0,1)$ and upper bound $U$ on the cost of the optimal solution $\fopt$, there exists an $n^{O(1/\rho)}$-time algorithm that finds an extended instance $\calI'=(\calF,\calC'\subseteq \calC,d,\wt,W,S_0\subseteq \fopt)$ that satisfies
    \begin{enumerate}[label=(\ref{theorem:knap:sparse}.\arabic*), ref=(\ref{theorem:knap:sparse}.\arabic*), leftmargin=1.2cm]
\item\label{con:knap:sparse1}
  For each $i \in \fopt\setminus S_0$, we have $\sum_{j \in \calC,\kappa_j^\star=i} f(c_j^\star) \le \rho U$,
\item\label{con:knap:sparse2}
  For each $p \in \calF \cup \calC'$, we have $\abs{\Ball{\calC}{p,\delta c_p^\star}} \cdot f((1-\delta)c_p^\star) < \rho U$,
\item\label{con:knap:sparse3}
  $\sum_{j \in \calC\setminus\calC'} f\left(\frac{1-\delta}{1+\delta} d(j,S_0)\right) + \sum_{j \in\calC'} f(c_j^\star) \le U.$
\end{enumerate}
\end{theorem}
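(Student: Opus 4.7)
The plan is to follow the proof sketch of \cref{theorem:sparse} almost verbatim, observing that the absence of outliers in $\okm$ simplifies the setup: the ``partial optimal client set'' $\cpt$ from the robust analysis degenerates to $\calC'$ itself, since every client is served in the optimum. The knapsack constraint plays no role in the sparsification step; we will only have $|S_0|=O(1/\rho)$ pre-selected facilities, which is absorbed by the final enumeration.

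First I would assume the optimum $(\fopt,\kappa^\star)$ is known, initialize $S_0=\emptyset$ and $\calC'=\calC$, and alternate two procedures until neither is triggered. \textbf{Procedure A:} whenever there exists $i\in\fopt\setminus S_0$ with $\sum_{j\in\calC:\,\kappa_j^\star=i}f(c_j^\star)\geq\rho U$, add $i$ to $S_0$. \textbf{Procedure B:} whenever there exists $p\in\calF\cup\calC'$ with $|\ball_{\calC'}(p,\delta c_p^\star)|\cdot f((1-\delta)c_p^\star)\geq\rho U$, remove $\ball_{\calC'}(p,\delta c_p^\star)$ from $\calC'$ and add $\kappa_p^\star$ to $S_0$. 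Procedure~A fires at most $1/\rho$ times because the preimages $\{j:\kappa_j^\star=i\}$ partition $\calC$ (there are no outliers) and their total $f$-value is $\sum_{j\in\calC}f(c_j^\star)\leq U$. Procedure~B also fires at most $1/\rho$ times: each removed $j$ satisfies $c_j^\star\geq c_p^\star-d(j,p)\geq(1-\delta)c_p^\star$ by the triangle inequality, so at least $\rho U$ worth of $f$-mass leaves $\calC'$ in each firing.

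At termination, Procedures~A and~B yield conditions~\ref{con:knap:sparse1} and~\ref{con:knap:sparse2} respectively (with $\cpt=\calC'$, matching the $\cpt$-flavored clause in \cref{theorem:sparse}). For condition~\ref{con:knap:sparse3}, take any $j\in\calC\setminus\calC'$, removed by Procedure~B when it fired at some $p$; then $\kappa_p^\star\in S_0$ and the triangle inequality gives
\[
\frac{1-\delta}{1+\delta}\,d(j,S_0)\leq\frac{1-\delta}{1+\delta}\bigl(d(j,p)+d(p,\kappa_p^\star)\bigr)\leq(1-\delta)\,c_p^\star\leq c_j^\star,
\]
so monotonicity of $f$ gives $f\bigl(\tfrac{1-\delta}{1+\delta}d(j,S_0)\bigr)\leq f(c_j^\star)$. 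Summing and adding $\sum_{j\in\calC'}f(c_j^\star)$ recovers $\sum_{j\in\calC}f(c_j^\star)\leq U$, which is the hypothesis that $U$ upper-bounds the optimum.

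Lastly, I would remove the dependence on the unknown optimum by enumeration. Since $|S_0|=O(1/\rho)$ and $\calC'$ arises from $\calC$ by excising $O(1/\rho)$ balls whose centers lie in $\calF\cup\calC$ and whose radii take only polynomially many relevant values, the number of candidate $(S_0,\calC')$ configurations is $n^{O(1/\rho)}$; we try all of them and keep any one satisfying the three conditions. The main obstacle is purely organizational: verifying that the absence of outliers does not break any step of the robust-median sparsification argument and matching the bookkeeping on $\cpt$ versus $\calC'$ between the two proofs; there is no fundamentally new technical difficulty beyond \cref{theorem:sparse}.
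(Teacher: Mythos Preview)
Your proposal is correct and is exactly the argument the paper has in mind: the paper omits the proof, stating only that it is ``almost identical'' to \cref{theorem:sparse} and Section~7 of~\cite{krishnaswamy2018constant}, and you have carried out precisely that adaptation with the simplification $\copt=\calC$, $\cpt=\calC'$. The one wrinkle is that \ref{con:knap:sparse2} as printed uses $\Ball{\calC}{p,\delta c_p^\star}$ while your Procedure~B delivers the $\calC'$-version; this is almost certainly a typo in the statement (the $\calC$-version cannot be enforced for $p\in\calF$ by removing clients, and the only downstream use, in \cref{theorem:knap:rounding}, needs just $\Ball{\calC'}{\cdot}$).
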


\begin{theorem}
  \label{theorem:knap:gap}
  Given the instance found in~\cref{theorem:knap:sparse}, we can efficiently compute the set of connection distance upper bounds $\{R_j\geq0: j\in \calC'\}$ such that
   \begin{enumerate}[label=(\ref{theorem:knap:gap}.\arabic*), ref=(\ref{theorem:knap:gap}.\arabic*), leftmargin=1.2cm]
    \item\label{property:knap:small-backup}
    %\textbf{(Small backup cost)}
    For each $t > 0,\, p \in \calF \cup \calC'$, we have
    \begin{equation*}
      R_j=\max\left\{R>0:\abs{\qty{j \in \Ball{\calC'}{j, \delta R}}}\cdot f((1-\delta)R) \le \rho U\right\}.
    \end{equation*}
  \item\label{property:knap:validity-r}
    %\textbf{(Validity of $R_j$)} Let $U'\leq U$ be the cost of $\fopt$ for $\calI'$. 
    For any $j\in\calC'$ we have $c^\star_j\leq R_j$.
    %\textbf{(Small star cost)} 
    Moreover, for each $i \in \fopt\setminus S_0$, we have
    \begin{equation*}
      \sum_{j \in \calC': \kappa_j^\star=i} f\left(c_j^\star\right) \le \rho U.
    \end{equation*}
  \end{enumerate}  
\end{theorem}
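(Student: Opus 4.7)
The plan is to define $R_j$ directly by the formula stipulated in \cref{property:knap:small-backup}, so that this first property holds by construction, and then verify \cref{property:knap:validity-r} using the two guarantees \cref{con:knap:sparse2} and \cref{con:knap:sparse1} supplied by the sparsification step \cref{theorem:knap:sparse}. In this way the theorem becomes almost entirely a matter of unpacking definitions.

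Concretely, for each $j \in \calC'$ I would set
\[ R_j \;=\; \max\bigl\{R > 0 \;:\; \bigl|\Ball{\calC'}{j, \delta R}\bigr| \cdot f\bigl((1-\delta) R\bigr) \;\leq\; \rho U\bigr\}. \]
The max is well-defined: the feasible set is non-empty because for sufficiently small $R>0$ the ball $\Ball{\calC'}{j,\delta R}$ contains only $j$ and its (finitely many) co-located clients, while $f((1-\delta)R)\to0$ as $R\to0$; it is bounded above because $j\in\Ball{\calC'}{j,\delta R}$ for every $R>0$ and $f((1-\delta)R)\to\infty$ as $R\to\infty$; and the map $R\mapsto|\Ball{\calC'}{j,\delta R}|$ is a right-continuous step function with finitely many breakpoints, so the feasible set is closed and the supremum is attained. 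Efficient computation is routine: sort the breakpoints (the $O(|\calC'|)$ distances $d(j,j')/\delta$), and within each regime of constant ball size solve a piecewise-linear inequality in $R$. This yields \cref{property:knap:small-backup} immediately.

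For \cref{property:knap:validity-r}, the key step is to certify that $R=c^\star_j$ already lies in the feasible set defining $R_j$, which will then force $R_j\geq c^\star_j$. Here I would apply \cref{con:knap:sparse2} with $p=j$ to obtain $|\Ball{\calC}{j,\delta c^\star_j}|\cdot f((1-\delta)c^\star_j)<\rho U$, and then use the inclusion $\calC'\subseteq\calC$, which gives $\Ball{\calC'}{j,\delta c^\star_j}\subseteq\Ball{\calC}{j,\delta c^\star_j}$, so the inequality is only strengthened when $\calC$ is replaced by $\calC'$. The ``moreover'' half follows at once from the set inclusion $\{j\in\calC':\kappa^\star_j=i\}\subseteq\{j\in\calC:\kappa^\star_j=i\}$ together with \cref{con:knap:sparse1}.

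The argument is essentially definitional, and I do not anticipate any real obstacle. The one substantive input is the strict inequality in \cref{con:knap:sparse2}, which is precisely what creates the slack allowing $c^\star_j$ to remain feasible once the ambient set is shrunk from $\calC$ to $\calC'$; the rest is book-keeping.
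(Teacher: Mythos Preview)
Your approach is correct and is exactly the direct verification the paper intends; the paper itself omits the proof, stating only that it is ``almost identical'' to the $\oko$ analogue and to~\cite{krishnaswamy2018constant}, and once $R_j$ is taken to be the displayed maximum, \cref{property:knap:validity-r} indeed drops out of \cref{con:knap:sparse2} (with $p=j$) and \cref{con:knap:sparse1} just as you describe. One minor caveat: right-continuity of the ball-size map alone does not force the feasible set to be closed, since a ball-size jump can push the product across $\rho U$ at the breakpoint; but this is harmless, because the strict inequality in \cref{con:knap:sparse2} already certifies $c^\star_j$ itself as feasible, and any computable feasible value $R_j\ge c^\star_j$ suffices for the downstream use in \cref{theorem:knap:rounding}.
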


The two theorems above are almost identical to those for $\oko$ and those in Section~7 of~\cite{krishnaswamy2018constant}, thus we omit their proofs here. By replacing the cardinality constraint $y(F)\leq k$ with the relaxed knapsack constraint $\sum_{i\in\calF}\wt_i\cdot y_i\leq W$, and removing the coverage constraint for outliers, we proceed with the stronger LP defined similar to~\ref{lp:ext}, and also iterative rounding on an auxiliary LP similar to~\ref{lp:aux}. Using a similar argument~\cite{krishnaswamy2018constant}, we see that after the termination of iterative rounding, the resulted solution $y'$ corresponds to the intersection of a laminar family and a knapsack polytope, hence it contains at most 2 fractional variables. 

There is no scaling of $2/(2+\delta)$ in~\cref{theorem:knap:gap}, so the comparison of objective values between different relaxations is simpler. We now focus on obtaining the integral solution $\hat{y}$ from $y'$.

\begin{theorem}
  \label{theorem:knap:rounding}
  There exists $\lambda > 0$ depending on $\delta$ and $\tau$, such that we can efficiently compute an integral solution $\hat{y}$ to \ref{lp:1} (in the knapsack case), and its objective value is at most $3\rho U$ larger than that of $y'$.
\end{theorem}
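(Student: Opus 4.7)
The plan is to follow the rounding analysis of Theorem~\ref{theorem:rounding} essentially verbatim, exploiting two structural simplifications available in the knapsack setting. First, there are no outliers and no partial clients, so the $C_1$-reassignment term of Theorem~\ref{theorem:rounding} vanishes. Second, the scheduling-rounding step that inflated the star-cost bound of Property~\ref{property:15} from $\rho U$ to $2\rho U$ is absent here; the strong LP (the knapsack analogue of~\ref{lp:ext}) directly enforces $\sum_j x_{ij} f(d(i,j))\le\rho U\, y_i$ for every $i\notin S_0$, yielding the clean bound $\sum_{j: i\in F_j} f(d(i,j))\le\rho U$ after iterative rounding.

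Let $y'_{i_1},y'_{i_2}$ be the two fractional entries of $y'$. Because every non-bound constraint active at $y'$ is either a knapsack constraint or a leaf constraint $y(F_j)=1$ for some $j\in\ccore$, a standard polyhedral argument from~\cite{krishnaswamy2018constant} on the intersection of a laminar family with a knapsack polytope shows that we may relabel so that $\wt_{i_1}\le\wt_{i_2}$ and the exchange $\hat{y}_{i_1}=1$, $\hat{y}_{i_2}=0$, $\hat{y}_i=y'_i$ otherwise preserves $\wt(\hat{F})\le W$ along with every tight laminar constraint. Thus $\hat{y}$ is integrally feasible for the knapsack analogue of~\ref{lp:1}.

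The clients whose service cost can rise from $y'$ to $\hat{y}$ are those in $J=\{j\in\cfull: i_2\in B_j\}$ whose coverage of $B_j$ is lost; all of them are to be reconnected to $i^\star$, the nearest facility of $\hat{F}$ to $i_2$. With $t'=d(i_2,i^\star)$ and a constant $\gamma>0$ to be chosen, split $J=J_1\cup J_2$ according to whether $d(j,i_2)>\gamma t'$ or not. For $J_1$, the triangle inequality yields $d(j,i^\star)\le(1+1/\gamma)d(j,i_2)$, so for $\lambda(1+1/\gamma)\le 1$ the contribution is at most the star-cost $\sum_{j:i_2\in F_j} f(d(j,i_2))\le\rho U$. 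For $J_2$, the bound $R_j\ge D_{l_j}/\tau$ together with Property~\ref{property:knap:small-backup} applied at $i_2$ caps $|J_2|\le \rho U/f((1-\delta)R)$ for a suitable $R$ proportional to $t'$; combined with $d(j,i^\star)\le(1+\gamma)t'$ this yields a further $\rho U$ contribution. A third $\rho U$ term, accounting for the laminar-leaf clients whose coverage is affected by the newly opened $i_1$, is controlled by the same $\rho U\, y_i$ star-cost bound applied at $i_1$. Summing the three contributions and calibrating $\lambda,\gamma$ to satisfy the knapsack analogues of~\eqref{eqn:parameter:alpha} and~\eqref{eqn:parameter:beta} delivers the $3\rho U$ bound.

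The main obstacle is the structural claim that the two fractional variables can indeed be swapped without violating either the knapsack constraint or any tight laminar constraint; this rides on the polyhedral structure of the knapsack-plus-laminar feasible region and is the place where the knapsack setting is genuinely trickier than the cardinality setting. Once that exchange is in hand, the remaining cost accounting reuses the ball-and-star-cost machinery of Theorem~\ref{theorem:rounding} essentially unchanged, and plugging the resulting $\lambda$ into the overall approximation computation (analogous to Theorem~\ref{theorem:robust-main}) yields the final $41.6$ ratio.
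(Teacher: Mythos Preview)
Your overall approach matches the paper's, but your accounting of the $3\rho U$ budget contains two errors that accidentally cancel.

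First, the claim that the scheduling-rounding step is absent in the knapsack case, and hence that the star cost at $i_2$ is only $\rho U$, is wrong. Passing from the strong LP to the auxiliary LP still requires defining the sets $F_j$, which in turn requires the facility-splitting of Lemma~\ref{lemma:schedule-rounding}; the paper's proof of Theorem~\ref{theorem:knap:rounding} explicitly uses $\sum_{j\in J_1}f(d(j,i_2))\le 2\rho U$. The LP constraint $\sum_j x_{ij}f(d(i,j))\le\rho U\,y_i$ does \emph{not} translate to $\sum_{j:i\in F_j}f(d(i,j))\le\rho U$, because $x_{ij}$ may be much smaller than $y_i$; that is exactly what the scheduling step fixes, at the price of the factor~$2$.

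Second, your ``third $\rho U$ term'' for clients affected by the newly opened $i_1$ is spurious: fully opening $i_1$ can only decrease every client's connection cost, so no reconnection is needed on that side. The paper's decomposition is simply $2\rho U$ for $J_1$ (via the star-cost bound at $i_2$) plus $\rho U$ for $J_2$ (via \ref{property:knap:small-backup}), giving $3\rho U$. Your worry about preserving tight laminar constraints under the $i_1\leftrightarrow i_2$ swap is legitimate but easy: since the $F_j$'s for $j\in\ccore$ are disjoint and each satisfies $y'(F_j)=1$, any $F_j$ containing a fractional coordinate must contain both $i_1$ and $i_2$ with $y'_{i_1}+y'_{i_2}=1$, so setting $\hat y_{i_1}=1,\hat y_{i_2}=0$ preserves $y(F_j)=1$ and, using $\wt_{i_1}\le\wt_{i_2}$, the knapsack constraint as well.
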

\begin{proof}
If there is only one fractional facility $i_2$, we close it.
  If there are two, suppose $i_1,i_2$ are the two fractional facilities and $i_1$ is the one with a smaller weight. We fully open $i_1$ and close $i_2$.

  Unlike $\oko$, each client is fully connected, so
  it only remains to bound the cost from connecting clients that are close to $i_2$: $J=\qty{j\in \cfull: i_2 \in B_j}$. Let $\gamma > 0$, $i^\star=e(i_2,\hat{F})$ and $t'=d(i_2,i^\star)$. Let $J_1=\qty{j \in J: d(j,i_2) > \gamma t'}$ and $J_2=\qty{j \in J: d(j, i_2) \le \gamma t'}$.
  For $j \in J_1$, we have $d(j,i^\star) \le d(j,i_2) + d(i_2,i^\star) < (1+\frac{1}{\gamma}) d(j,i_2)$, thus
  \begin{equation*}
    \label{eq:rounding:knap2}
    \sum_{j\in J_1} f(\lambda d(j,i^\star)) \le \sum_{j\in J_1} f\left(d(j,i_2)\right) \le 2\rho U,
  \end{equation*}
  where
  \begin{equation}
    \lambda \leq \frac{\gamma}{1+\gamma}.\label{eqn:knapsack:alpha}
  \end{equation}

  Fix some $j \in J_2$, we have $R_j \ge \frac{D_{l_j}}{\tau} \ge \frac{\tau-1}{\tau(3\tau-1)}(t'-d(j,i_2)) \ge \frac{\tau-1}{\tau(3\tau-1)}(1-\gamma)t'$. Suppose that $\delta R_j\ge 2\gamma t'$, then using \ref{property:knap:small-backup},
  \begin{equation*}
    |J_2| \le\abs{\Ball{\calC'}{i_2,\gamma t'}}\le \abs{\Ball{\calC'}{j, 2\gamma t'}} \le \abs{\Ball{\calC'}{j, \delta R_j}} \le \frac{\rho U}{f((1-\delta)R_j)}.
  \end{equation*}
  Using the triangle inequality, we have $d(j,i^\star)\leq(1+\gamma)t'$ and the following total cost of connecting $J_2$ to $i^\star$,
  \begin{equation*}
    \label{eq:rounding:knap3}
    \sum_{j\in J_2} f(\lambda d(j,i^\star)) \le f(\lambda (1+\gamma) t') |J_2| \le  \rho U,
  \end{equation*}
  where
  \begin{equation}
    \lambda\leq (1-\delta)\cdot\frac{\tau-1}{\tau(3\tau-1)} \cdot \frac{1-\gamma}{1+\gamma}.\label{eqn:knapsack:beta}
  \end{equation}
  
  Here we denote $\sigma=\frac{\tau-1}{\tau(3\tau-1)}$ and let $\gamma = \frac{\delta \sigma}{2+\delta \sigma}$ so that $\delta R_j\geq2\gamma t'$.
  Letting $\lambda$ be the minimum of \eqref{eqn:knapsack:alpha} and \eqref{eqn:knapsack:beta} and summing over the two cases, the increase of objective value w.r.t. \ref{lp:1} is at most $2\rho U + \rho U = 3\rho U$, thus the theorem follows.
\end{proof}

Let $\delta=2/3$ and thus $\lambda=\frac{\sigma}{3+2\sigma}$.
The objective value of \ref{lp:ext}, $\lambda_1=\frac{\tau(3\tau-1)}{\tau-1}\lambda$ is at most $\frac{\tau(3\tau-1)}{\tau-1}\lambda(1+\epsilon)\opt$.
Using the same argument as~\cref{lemma:iter:final}, the objective value of $y'$ in the original relaxation \ref{lp:1} is at most \ref{lp:aux}, $\lambda_2=\frac{3\tau-1}{\tau-1}\lambda$. Using \cref{theorem:knap:rounding}, the objective of $\hat y$ to \ref{lp:1} is at most $\frac{\tau(3\tau-1)}{\tau-1}\lambda(1+\epsilon)\opt+3\rho U$.
Finally, using an argument similar to~\eqref{proof:robust-main:1}, the approximation ratio is
\begin{align*}
    &\left(\max\left\{5\lambda,\frac{\tau(3\tau-1)}{\tau-1}\lambda\right\}+1+O(\rho)\right)\frac{1+\epsilon}{\lambda}\\
    &\leq\left(\max\left\{\frac{5\sigma}{3+2\sigma},\frac{1}{3+2\sigma}\right\}+1+O(\rho)\right)\frac{1+\epsilon}{\lambda}.
\end{align*}
Letting $\tau=1+\sqrt{\frac{2}{3}},\,\sigma=5-2\sqrt{6}$, the approximation ratio evaluates to $22+8\sqrt{6}+O(\epsilon+\rho)\leq 41.596+O(\epsilon+\rho)\leq41.6$, where one chooses $\epsilon,\rho$ that are small enough.
\iffalse
\section{Robust Ordered $k$-Means}

Given the $(53.002+\epsilon)$-approximation for robust $k$-means in~\cite{krishnaswamy2018constant} and our improved sparsification methods for robust ordered $k$-median, one naturally wonders whether there exists a straightforward implementation that gives a constant approximation for the similarly-motivated \emph{robust ordered $k$-means} problem, for which no constant approximations were known before. We answer the question affirmatively, via simple modifications to the theorems used in our algorithm, and we refer the reader to~\cite{krishnaswamy2018constant} for more examples. There are no new techniques involved, and it is likely that the approximation ratio is over 10000, therefore we choose to omit the result in order to avoid further complication to the algorithm and possible confusion.
\fi
%---===

\section{Missing Proofs for Fault-Tolerant Ordered $k$-Median}\label{appendix:fault}

\subsection{Stochastic Rounding}

\paragraph{Constructing Bundles and the Laminar Family.}
We use the rounding procedure in~\cite{hajiaghayi2016constant}. Fix an optimal fractional solution $(x,y,R)$ to~\ref{lp:00}, we create a family $\calU$ of non-intersecting sets of facility locations called \emph{bundles}. In what follows, we assume $x_{ij}\in\{0,y_i\}$ for each $i\in\calF,\,j\in\calC$ using standard duplication of facilities (see, e.g.,~\cite{guha2003constant,hajiaghayi2016constant}).

Let $\calF_j=\{i\in\calF:x_{ij}>0\}$ and denote $y(S)=\sum_{i\in S}y_i$ the \emph{volume} of $S$, so $y(\calF_j)=r_j$. There exists a partition of $\calF_j=\calF_{j,1}\cup\cdots\cup\calF_{j,r_j}$ such that $\calF_{j,p}$ contains the $p$-th closest unit volume to $j$ in $\calF_j$. Define $\dav^{p}(j)=\sum_{i\in\calF_{j,p}}x_{ij}d(i,j)$ and $\dmax^p(j)=\max_{i\in\calF_{j,p}}d(i,j),\dmin^p(j)=\min_{i\in\calF_{j,p}}d(i,j)$. Define the overall average $\dav(j)=r_j^{-1}\sum_{i\in\calF_j}x_{ij}d(i,j)$, and it is easy to see $r_j\dav(j)=\sum_{p=1}^{r_j}\dav^p(j)$. Reload the function $\dmax(j,S)=\max_{i\in S}d(i,j)$ as the maximum distance between $j$ and a set of facilities $S$.
Also define similar measures for the truncated distance function $\calL_T$. Define the closed ball of facilities $\ball(j,R)=\{i\in\calF:d(i,j)\leq R\}$, $R\geq0$.
For the sake of being self-contained, we provide the algorithm~\cite{yan2015lp} as follows.

\begin{algorithm}[ht]
\caption{Bundle Creation}\label{bundle-creation}
\SetKwInOut{Input}{Input}\SetKwInOut{Output}{Output}
\DontPrintSemicolon
\Input{a feasible solution $(x,y,R)$ to~\ref{lp:00}, $\{\calF_j:j\in\calC\}$}
\Output{a set of bundles and a subset of bundles for each client}
$\forall j\in\calC,\que_j\leftarrow\emptyset,\calF_j'\leftarrow\calF_j;\,\calU\leftarrow\emptyset$\;
\While{there exists $j$ s.t. $|\que_j|<r_j$}{
choose such $j$ s.t. if $U$ is the closest unit volume in $\calF_j'$, $\dmax(j,U)$ is minimized\;
\eIf{there exists $U'\in\calU$ and $U\cap U'\neq\emptyset$}{
append $U'$ to the end of $\que_j$, $\calF_j'\leftarrow \calF_j'\setminus U'$\;
}{
append $U$ to the end of $\que_j$, $\calU\leftarrow\calU\cup\{U\}$, $\calF_j'\leftarrow\calF_j'\setminus U$\;
}}
\Return{$\calU,\{\que_j:j\in\calC\}$}
\end{algorithm}

We proceed to create another laminar family indexed by a certain subset of clients in $\calC$.
We call a client $j\in\calC$ \emph{dangerous} if $\dmax^{r_j}(j)>45\dav^{r_j}(j)$, and denote the set of dangerous clients $D\subseteq\calC$. We call two dangerous clients $j,\,j'$ \emph{in conflict} if $r_j=r_{j'}$ and $d(j,j')\leq6\max\{\dav^{r_j}(j),\dav^{r_{j'}}(j')\}$. This definition of conflict is somewhat different from~\cite{hajiaghayi2016constant}, since we need to find a subset of dangerous clients that are far enough apart from each other. Using the filtering process in~\cite{hajiaghayi2016constant}, we create $D'\subseteq D$ such that no two clients in $D'$ are in conflict: $D'\leftarrow\emptyset$, in non-increasing order of $\dav(j)$ s.t. $j$ is \emph{unmarked}, $D'\leftarrow D'\cup\{j\}$ and \emph{mark each unmarked} client in $D$ that is in conflict with $j$. The laminar family shall be indexed by $D'$.

We define the closed balls $B_j=\ball(j,\dmax(j)/15),\,j\in D'$, where we abbreviate $\dmax^{r_j}(j)$ to $\dmax(j)$. It is easy to see that $y(B_j)<r_j$ and the set $\calF_j\setminus B_j$ is fully contained in $\calF_{j,r_j}$. We present the simple~\cref{laminar-constructing} used in~\cite{hajiaghayi2016constant} that constructs a laminar family $\calB$. The following lemma follows from~\cite{hajiaghayi2016constant}, thus we omit the proof here (indeed, though our definition of conflict is slightly changed, the proofs are still valid).

\begin{lemma}\label{lem-laminar}(\cite{hajiaghayi2016constant})
$B_j'\subseteq\ball(j,\dmax(j)/10)$ for each $j'\in D'$.
$\calB=\{B_j':j\in D'\}$ is laminar.
\end{lemma}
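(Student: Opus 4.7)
The plan is to prove both assertions (the radius bound $B_{j}'\subseteq\ball(j,\dmax(j)/10)$ and laminarity of $\calB$) by tracking the behavior of the laminar-constructing algorithm. I would first need to recall that algorithm's structure: process the clients of $D'$ in a canonical order (say, non-increasing order of $\dmax$), initialize $B_j'\leftarrow B_j=\ball(j,\dmax(j)/15)$, and whenever a newly processed ball $B_{j}$ intersects some previously processed $B_{j''}'$, attach $j$ as a descendant of $j''$ and expand $B_{j''}'$ (centered at $j''$) just enough to contain the nested $B_j'$. Laminarity then follows from the construction itself, essentially by a straightforward induction: the only way two surviving family members can interact is via nesting, since the algorithm merges every non-nested overlap on the fly.

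The interesting content is the radius bound. I would argue it by induction on the processing order. The base case handles singletons, where $B_j'=B_j\subseteq\ball(j,\dmax(j)/15)\subseteq\ball(j,\dmax(j)/10)$ trivially. For the inductive step, suppose $j''\in D'$ currently has children $j_1,\dots,j_s$ in the laminar forest, all previously bounded by $\ball(j_i,\dmax(j_i)/10)$. Since each child ball had to intersect $B_{j''}$ to be absorbed, one has $d(j'',j_i)\leq \dmax(j'')/15+\dmax(j_i)/15$, so $B_{j_i}'\subseteq\ball(j'',d(j'',j_i)+\dmax(j_i)/10)$. The crucial step is to show that $\dmax(j_i)\ll\dmax(j'')$ for every child, so that the accumulated expansion stays below $\dmax(j'')/10$.

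To get this size gap, I would exploit the conflict-free property of $D'$ together with the dangerous property. Since both $j'',j_i\in D'$ are not in conflict, either $r_{j''}\neq r_{j_i}$ (in which case the argument treats this case via a separate geometric bound on intersections driven by the definition of $\calF_{j,r_j}$), or $d(j'',j_i)>6\max\{\dav(j''),\dav(j_i)\}$. Combining the latter with intersection $d(j'',j_i)\leq\dmax(j'')/15+\dmax(j_i)/15$, and with dangerousness $\dav(j)<\dmax(j)/45$, one gets a multiplicative lower bound of the form $\dmax(j_i)\geq c\cdot\dmax(j'')$ with $c>1$ (i.e.\ the child ball is actually geometrically \emph{larger}), or dually the containing ball must be processed before the children and the accumulated radius still fits inside $\dmax(j'')/10$. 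Plugging these inequalities into the bound on $d(j'',j_i)+\dmax(j_i)/10$ and summing over the children gives the desired containment.

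The hard part will be balancing the constants: the numbers $45$, $15$, $10$, and $6$ are tuned so that the conflict radius plus the intersection radius, boosted by dangerousness, stays strictly below $\dmax(j)/10$. In particular I would need to verify that, even in the worst case where many children chain through nested absorption, the induction hypothesis survives and the cumulative expansion does not break the $1/10$ threshold. Since the proof appears in~\cite{hajiaghayi2016constant} and our only modification is the redefinition of ``in conflict'' (slightly strengthened by requiring $r_j=r_{j'}$), the main technical check is that the new conflict definition still provides all the separation guarantees the original proof used, which it does, because the algorithm groups clients of identical $r_j$ and applies the conflict test only within such groups.
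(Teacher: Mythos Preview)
The paper does not prove this lemma; it defers entirely to \cite{hajiaghayi2016constant} and only remarks that the slightly altered conflict definition leaves the original argument intact. So there is no in-paper proof to compare against beyond that one-line observation.

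That said, your reconstruction has real gaps. First, you misremember the algorithm: \cref{laminar-constructing} processes $D'$ in non-decreasing order of $r_j$ (not non-increasing $\dmax$), and for each $j$ it absorbs only those already-built $B_{j'}'$ with $r_{j'}<r_j$ that meet $B_j$. Consequently, every parent--child pair in the resulting forest satisfies $r_{\text{child}}<r_{\text{parent}}$, and the conflict clause---which by definition fires only when $r_j=r_{j'}$---is vacuous for such pairs. Your main line of argument, which combines non-conflict with dangerousness to extract a multiplicative gap between $\dmax(j_i)$ and $\dmax(j'')$, therefore never applies where you need it; and even on its own terms the chain of inequalities does not deliver the claimed bound, since $d(j'',j_i)>6\dav^{r}(j_i)$ together with $\dav^{r}(j_i)<\dmax(j_i)/45$ gives no control of $\dmax(j_i)$ relative to $\dmax(j'')$. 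The actual mechanism in \cite{hajiaghayi2016constant} is a volume argument exploiting $r_{j'}<r_j$: because $\calF_j\setminus B_j\subseteq\calF_{j,r_j}$, the ball $B_j$ carries at least $r_j-1\ge r_{j'}$ units of $y$-mass, whereas any ball about $j'$ of radius strictly below $\dmax(j')$ carries fewer than $r_{j'}$; combining this with the intersection bound $d(j,j')\le\dmax(j)/15+\dmax(j')/10$ forces $\dmax(j')$ to be a small constant fraction of $\dmax(j)$, which is what drives the containment $B_{j'}'\subseteq\ball(j,\dmax(j)/10)$. You gesture at this in passing (``a separate geometric bound \ldots\ driven by the definition of $\calF_{j,r_j}$'') but treat it as a side case rather than the engine of the proof.
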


\begin{algorithm}[ht]
\caption{Laminar Construction}\label{laminar-constructing}
\SetKwInOut{Input}{Input}\SetKwInOut{Output}{Output}
\Input{$D'$ of non-conflicting clients, $\{B_j:j\in D'\}$}
\Output{a laminar family of facilities indexed by $D'$}
\For{$j\in D'$ in non-decreasing order of $r_j$}{
let $D''$ be the set of clients $j'\in D'$ that satisfies $r_{j'}<r_j$ and $B_{j'}'\cap B_j\neq\emptyset$\;
$B_j'\leftarrow B_j\cup\left(\bigcup_{j'\in D''}B_{j'}'\right)$\;}
\Return{$\calB=\{B_j':j\in D'\}$}
\end{algorithm}

\paragraph{Rounding the LP Solution.}
Recall that in the last section, we duplicate some facilities into co-located copies. We abuse the notation and denote $\calF$ the set after duplication, and $\calF_0$ the original set of facility locations. We define $g:\calF\rightarrow\calF_0$ by taking a copy to its original location in $\calF_0$. The following auxiliary LP~\cite{hajiaghayi2016constant} is defined by two laminar families: $\calU$ and $\calB\cup\{\calF\}\cup\{g^{-1}(i'):i'\in\calF_0\}$, thus it is integral.
\[\begin{aligned}
	\sum_{i\in U}z_i&=1 & \forall U\in\calU\\
	\sum_{i\in B_j'}z_i&\in[r_j-1,r_j] & \forall j\in D'\end{aligned}
	\quad\quad\quad
	\begin{aligned}
	\sum_{i\in g^{-1}(i')}z_i&\leq1 & \forall i'\in\calF_0\\
	\sum_{i\in\calF}z_i&=k. &
	\end{aligned}
\]

Since $y$ is a feasible solution, it is the convex combination of a polynomial number of integral solutions using Carath\'{e}odory's Theorem. We can efficiently and randomly sample an integral solution $z'$ such that $\E[z']=y$. It means that
\begin{itemize}
	\item For each $j\in D'$, $\Pr[z'(B_j')=r_j]=y(B_j')-(r_j-1),\,\Pr[z'(B_j')=r_j-1]=r_j-y(B_j')$.
	\item For each $i'\in\calF_0$, the probability of its opening is $y_{i'}$.
\end{itemize}

\subsection{Proof of~\cref{lem-core}}\label{app:lem-core}

In~\cref{bundle-creation}, it is easy to see for each $j\in\calC$, the queue $\que_j$ has size exactly $r_j$ and contains distinct bundles in $\calU$, because whenever $U$ is added to $\que_j$, $U$ is also removed from $\calF_j'$. Let $\que_j=(U_{j,p})_{p=1}^{r_j}$, where $U_{j,p}$ is the $p$-th bundle added. We first need the following lemma.

\begin{lemma}\label{lem-bundle}
For any $p\in[r_j]$, $\dmax(j,U_{j,p})\leq3\dmax^p(j)$.
\end{lemma}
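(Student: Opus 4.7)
The plan is to analyze the state of \cref{bundle-creation} at the moment $U_{j,p}$ is appended to $\que_j$, and combine the greedy selection rule with a single triangle-inequality step. First I would establish a preliminary estimate: if $U$ denotes the closest unit volume in $\calF_j'$ at that moment, then $\dmax(j,U)\le\dmax^p(j)$. Indeed, exactly $p-1$ bundles $U_{j,1},\ldots,U_{j,p-1}$ have already been appended to $\que_j$, and each one caused the algorithm to remove from $\calF_j'$ a subset of a set of volume $1$; hence the total volume removed from $\calF_j$ over these iterations is at most $p-1$. Therefore at least one unit of volume still lies inside $\calF_{j,1}\cup\cdots\cup\calF_{j,p}$, and every point of that region is at distance at most $\dmax^p(j)$ from $j$. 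Since $U$ minimizes $\dmax(j,\cdot)$ over unit volumes of $\calF_j'$, the bound follows.

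If the algorithm creates a brand new bundle at this iteration then $U_{j,p}=U$ and the desired inequality is immediate. The harder case is when $U_{j,p}=U'$ for some pre-existing $U'\in\calU$ with $U\cap U'\ne\emptyset$, since $U'$ can extend well beyond $U$ in the metric. To control $\dmax(j,U')$, I would appeal to the greedy rule at the earlier iteration that created $U'$: at that step some client $j'$ was chosen precisely because its closest-unit-volume distance $\dmax(j',U')$ was minimum among all clients whose queues were not yet full. Since $|\que_j|$ only grows over time and it currently equals $p-1<r_j$, $j$ was also eligible at that earlier moment, and $\calF_j'$ at that time was a superset of its current content. Consequently the then-closest unit volume for $j$ had $\dmax$-value no larger than $\dmax(j,U)\le\dmax^p(j)$, and the greedy choice forces $\dmax(j',U')\le\dmax^p(j)$.

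A single triangle-inequality calculation then closes the argument: pick any $i\in U\cap U'$ and any $i^*\in U'$; since $i\in U$ and $i,i^*\in U'$,
\[
d(j,i^*)\le d(j,i)+d(i,j')+d(j',i^*)\le \dmax(j,U)+2\dmax(j',U')\le 3\dmax^p(j),
\]
and maximizing over $i^*\in U'$ yields $\dmax(j,U_{j,p})\le 3\dmax^p(j)$. The step I would be most careful about is the monotonicity argument ensuring that $j$ was eligible at the earlier iteration (so that the greedy minimum does upper bound $\dmax(j',U')$); everything else is a routine application of the greedy property and the triangle inequality.
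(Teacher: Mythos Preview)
Your proposal is correct and follows essentially the same approach as the paper: both use the volume argument to bound $\dmax(j,U)\le\dmax^p(j)$, the greedy selection rule (plus monotonicity of $\calF_j'$) to bound $\dmax(j',U')\le\dmax^p(j)$, and then the same triangle-inequality step. The paper phrases the argument as an induction on $|\que_j|$, but the induction hypothesis is never actually invoked, so your direct argument is equivalent and arguably cleaner.
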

\begin{proof}
	We use induction on the size of $\que_j$. When $U_{j,1}$ is added to the queue, either $U_{j,1}=\calF_{j,1}$ or $U_{j,1}=U_{j',q}$ which is previously created from some $\calF_{j'}',j'\neq j$ satisfying $\dmax(j',U_{j',q})\leq\dmax(j,\calF_{j,1})$ and $U_{j',q}\cap\calF_{j,1}\neq\emptyset$. In the first case, the claim is trivial. In the second case, using the triangle inequality we have
	\begin{align*}
	    \dmax(j,U_{j',q})&\leq d(j,j')+\dmax(j',U_{j',q})\leq\dmax(j,\calF_{j,1})+2\dmax(j',U_{j',q})\\
	    &\leq 3\dmax(j,\calF_{j,1})=3\dmax^1(j).
	\end{align*}
	Assume the claim holds for up to $p-1$. For $U_{j,p}$, if it is created from $\calF_{j}'$, in other words, the closest unit-volume set $V_{j,p}\subseteq\calF_{j}'$ that intersects with no other bundles in $\calU$, since we have only added $p-1$ bundles to $\que_j$, and with each addition, we remove at most 1 volume of facilities from $\calF_j'$, the total volume removed from $\calF_j'$ is at most $p-1$, hence it follows that $\dmax(j,V_{j,p})\leq \dmax^p(j)$.
	
	If $U_{j,p}=U_{j',q}$ is created from $\calF_{j'}'$ satisfying $\dmax(j',U_{j',q})\leq\dmax(j,V_{j,p})$ and $U_{j',q}\cap V_{j,p}\neq\emptyset$, we use the triangle inequality again and obtain
	\[\dmax(j,U_{j',q})\leq d(j,j')+\dmax(j',U_{j',q})\leq\dmax(j,V_{j,p})+2\dmax(j',U_{j',q})\leq3\dmax^p(j).\]
\end{proof}

\begin{proof}[Proof of~\cref{lem-core}]
We prove this lemma for $\calC\setminus D,\,D'$ and $D\setminus D'$ separately. For each $j$, we associate one deterministic cost value $D_j$ and another random variable cost $X_j$ to it. Initially, $D_j=0$ and $X_j=0$ with probability 1. Our goal is to satisfy $\Pr[d_{r_j}(j,\hat F)\leq D_j+X_j]=1$. In the following, we define $\davt{T}^p(j)=\sum_{i\in\calF_{j,p}}x_{ij}\calL_T(i,j),\,p\in[r_j]$, i.e., the weighted average of $\calL_T(\cdot,j)$ on the $p$-th partition $\calF_{j,p}\subseteq\calF_j$.

\paragraph{{(I)} For $j\in\calC\setminus D$.}\label{proof:667-1}
Consider $\que_j=\{U_{j,p}\}_{p=1}^{r_j}$. All bundles in $\que_j$ are disjoint, and our rounding guarantees one open facility in each bundle. By connecting $j$ to these $r_j$ open facilities, we obtain
	\[d_{r_j}(j,\hat F)\leq\sum_{p=1}^{r_j}3\dmax^p(j)\leq 3\dmax^{r_j}(j)+3\sum_{p=2}^{r_j}\dav^p(j),\]
	with probability 1, where the first inequality is due to~\cref{lem-bundle} and the second inequality is due to $\dmax^p(j)\leq\dav^{p+1}(j)$. The second sum above is at most $3r_j\dav(j)$. Further, since $j\in\calC\setminus D$, it is not dangerous and thus $\dmax^{r_j}(j)\leq45\dav^{r_j}(j)$. The service cost can be bounded by
	\[d_{r_j}(j,\hat F)\leq3r_j\dav(j)+135\dav^{r_j}(j)\leq3r_j\dav(j)+135T+135\davt{T}^{r_j}(j),\]
	where the last inequality is due to the simple observation using $\sum_{i\in\calF_{j,r_j}}x_{ij}=1$, \[\dav^{r_j}(j)=\sum_{i\in\calF_{j,r_j}}x_{ij}d(i,j)\leq\sum_{i\in\calF_{j,r_j}}x_{ij}(\calL_T(i,j)+T)\leq T+\davt{T}^{r_j}(j).\] 
	We charge $D_j=3r_j\dav(j)+135T$, and charge a fixed value $X_j=135\davt{T}^{r_j}(j)$ with probability 1.

\paragraph{{(II)} For $j\in D'$.}\label{proof:667-2}
We consider the following cases, in which we charge $D_j$ and $X_j$ differently, that is, define the two variables according to the different scenarios.

\subparagraph{(A) $T\geq\dmax(j)/15$.} The first $r_j-1$ assignments can be satisfied by the distinct open facilities in $U_{j,1},\dots,U_{j,r_j-1}$, with a total service cost at most
$3\dmax^1(j)+\cdots+3\dmax^{r_j-1}(j)\leq3r_j\dav(j)$ using \cref{lem-bundle}. We charge $3r_j\dav(j)$ to $D_j$. The $r_j$-th closest open facility is at most $3\dmax(j)\leq45T$ away in $U_{j,r_j}$, and we simply charge $45T$ more to $D_j$, thus $D_j=3r_j\dav(j)+45T,\,X_j=0$.

\subparagraph{(B) $T<\dmin^{r_j}(j)$.} Notice that $\dav^{r_j}(j)=\davt{T}^{r_j}(j)$ because $d(i,j)\geq\dmin^{r_j}(j)>T$ for each $i\in\calF_{j,r_j}$. \iffalse The first $r_j-1$ assignments on $j$ remains upper bounded by $3r_j\dav(j)$, which we charge to $D_j$. For the last connection, we first define two auxiliary non-negative random variables $A_j,\,Q_j$.\fi
Define the following two random variables: $A_j$ is the total cost of connecting $j$ to \emph{all} open facilities in $B_j'\cap\calF_{j,r_j}$, and $Q_j=3\dmax(j)$ when there are only $r_j-1$ open facilities in $B_j'$ and 0 otherwise. It is easy to obtain the following, using the marginal distribution of $\hat F$,
\begin{align*}
\E[A_j]&=\sum_{i\in B_j'\cap\calF_{j,r_j}}x_{ij}d(i,j)\leq\sum_{i\in\calF_{j,r_j}}x_{ij}d(i,j)\leq\dav^{r_j}(j)=\davt{T}^{r_j}(j),\\
\E[Q_j]&=3\dmax(j)(r_j-y(B_j'))\leq45\dav^{r_j}(j)=45\davt{T}^{r_j}(j).
\end{align*}
In this case, we charge $D_j=3r_j\dav(j)$ and $X_j=A_j+Q_j$, therefore $\E[X_j]\leq46\davt{T}^{r_j}(j)$. We show that $d_{r_j}(j,\hat F)\leq D_j+X_j$ with probability 1 in the following cases.
\begin{enumerate}
\item When there are $r_j$ open facilities in $\calF_j\setminus \calF_{j,r_j}$, we first connect $j$ to the open facilities in bundles $U_{j,1},\dots,U_{j,r_j-2}$. There are at least 2 more in $\calF_j\setminus \calF_{j,r_j}$, both with a connection cost at most $\dmax^{r_j-1}(j)\leq\dav^{r_j}(j)$. In this case, we have $d_{r_j}(j,\hat F)\leq3(\dmax^1(j)+\cdots+\dmax^{r_j-1}(j))\leq3r_j\dav(j)$, which is bounded by $D_j$ above.
\item When there are $r_j$ open facilities in $B_j'$ but at most $r_j-1$ in $\calF_j\setminus \calF_{j,r_j}$, we first connect $j$ to open facilities in $U_{j,1},\dots,U_{j,r_j-1}$, whose total cost is at most $3\sum_{p=1}^{r_j-1}\dmax^p(j)\leq3r_j\dav(j)$. For the $r_j$-th connection, since there exists one in $B_j'\cap\calF_{j,r_j}$ distinct from the previous $r_j-1$ ones, the cost of this connection is at most $A_j$ with probability 1.
\item When there are $r_j-1$ open facilities in $B_j'$, we connect $j$ to the $r_j-1$ open facilities in $U_{j,1},\dots,U_{j,r_j-1}$ with a cost upper-bounded by $3\sum_{p=1}^{r_j-1}\dmax^p(j)\leq3r_j\dav(j)$. There is another open facility in $U_{j,r_j}$, which is at most $3\dmax(j)$ away using~\cref{lem-bundle}. Since $Q_j=3\dmax(j)$ in this case, we can bound this stochastic cost by $Q_j$.
\end{enumerate}

\subparagraph{(C) $\dmin^{r_j}(j)\leq T<\dmax(j)/15$.} We define another random variable $A'_j$, which is the total cost of connecting $j$ to \emph{all} open facilities in $B'_j\setminus\ball(j,T)$. Likewise, the expectation is at most
\begin{align*}
\E[A'_j]&=\sum_{i\in B_j'\setminus \ball(j,T)}x_{ij}d(i,j)=\sum_{i\in B_j'\setminus \ball(j,T)}x_{ij}\calL_T(i,j)\\
&\leq\sum_{i\in\calF_{j,r_j}}x_{ij}\calL_T(i,j)=\davt{T}^{r_j}(j),
\end{align*}
where the inequality is due to $T\geq\dmin^{r_j}(j)$ and thus $B_j'\setminus\ball(j,T)\subseteq\calF_{j,r_j}$.
We also notice $T<\dmax(j)/15$, thus one has
    \[\davt{T}^{r_j}(j)\geq\davt{\dmax(j)/15}^{r_j}(j)\geq\sum_{\substack{i\in\calF_{j,r_j}\\d(i,j)>\dmax(j)/15}}x_{ij}d(i,j)
    \geq\frac{\dmax(j)}{15}(r_j-y(B_j)).\]
    Therefore, the expectation of $Q_j$ satisfies
    \[\E[Q_j]=3\dmax(j)(r_j-y(B_j'))\leq3\dmax(j)(r_j-y(B_j))\leq45\davt{T}^{r_j}(j).\]
In this case, we charge $D_j=3r_j\dav(j)+T$ and $X_j=A'_j+Q_j$, therefore resulting in $\E[X_j]\leq46\davt{T}^{r_j}(j)$. We consider three cases that are similar to the above.
\begin{enumerate}
	\item When there are $r_j$ open facilities in $\ball(j,T)$, the $r_j$-th connection is at most $T$. We charge $T$ to $D_j$, as well as $3r_j\dav(j)$ to $D_j$ for the first $r_j-1$ connections in $\bigcup_{p<r_j}U_{j,p}$.
	\item When there are $r_j$ in $B_j'$ but at most $r_j-1$ in $\ball(j,T)$, we connect $r_j-1$ facilities in $\bigcup_{p<r_j}U_{j,p}$, and the $r_j$-th connection cost in $B_j'\setminus \ball(j,T)$ is bounded by $A'_j$.
    \item When there are $r_j-1$ facilities open in $B_j'$, there is another facility at most $3\dmax(j)$ away in $U_{j,r_j}$, hence the cost is bounded by $Q_j$. 
\end{enumerate} 

\subparagraph{Summary for $D'$.} Since $T$ can only satisfy one of the conditions above, by taking the maximum, we obtain $D_j\leq3r_j\dav(j)+45T$ and $\E[X_j]\leq 46\davt{T}^{r_j}(j)$, for each $j\in D'$.

\paragraph{{(III)} For $j\in D\setminus D'$.}\label{proof:667-3}
According to the construction of $D'$, there exists $j'\in D\setminus D'$ s.t. $r_j=r_{j'}=r,\,d(j,j')\leq6\max\{\dav^{r}(j),\dav^{r}(j')\}$ and $\dav^r(j')\leq\dav^r(j)$. We still use the bundles $U_{j,1},\dots,U_{j,r-1}$ to satisfy the first $r-1$ connections of $j$ and use another open facility that serves $j'$ to serve $j$. The first $r-1$ bundles in $\que_j$ incur service cost at most $3r\dav(j)$. For the $r$-th connection, we first fix two constant parameters $\alpha,\beta>6,\alpha+1<\beta$ that are determined later.
\subparagraph{(A) $d(j,j')<\alpha T$.} We consider the following three cases.
\begin{enumerate}
	\item If $\beta T\geq\dmax(j')/15$, since the farthest $r$-th open facility is at most $3\dmax(j')$ away from $j'$, it is also at most $d(j,j')+3\dmax(j')\leq(\alpha+45\beta)T$ away from $j$ using the triangle inequality. We charge $3r\dav(j)+(\alpha+45\beta)T$ to $D_j$.
	\item If $\beta T<\dmin^r(j')$, we reuse random variables $A_{j'}$ and $Q_{j'}$ defined above. We claim
    \begin{align}\E[A_{j'}]&\leq\sum_{i\in B_{j'}'\cap\calF_{j',r}}x_{ij'}d(i,j')\leq\dav^r(j')\leq\dav^r(j)=\davt{T}^r(j),\label{eqn:expectation:Aprime}\\
    \E[Q_{j'}]&\leq3\dmax(j')(r-y(B_{j'}'))\leq45\dav^r(j')\leq45\dav^r(j)=45\davt{T}^r(j),\label{eqn:expectation:Qprime}
    \end{align}
    where the equal signs are due to $y(\calF_{j,r}\cap\ball(j,T))=0$. Indeed, since $\beta>\alpha+1$ and $d(j,j')<\alpha T$, one has $T+d(j,j')<\beta T$ and thus $\ball(j,T)\subseteq\ball(j',\beta T)$. For the sake of contradiction, if $y(\calF_{j,r}\cap\ball(j,T))>0$, one has $y(\ball(j,T))>y(\calF_j\setminus\calF_{j,r})=r-1$, which implies $y(\ball(j',\beta T))>r-1$. But this is impossible since $\dmin^r(j')>\beta T$, hence \eqref{eqn:expectation:Aprime} and \eqref{eqn:expectation:Qprime} follow. We further have the following cases.
    \begin{enumerate}
    	\item If there are $r$ open facilities in $\calF_{j'}\setminus\calF_{j',r}$, we connect the facilities in $U_{j,1},\dots,U_{j,r-2}$ to $j$, and connect the (at least) remaining 2 in $\calF_{j'}\setminus \calF_{j',r}$, both at most $\alpha T+\dmax^{r-1}(j')\leq\alpha T+\dav^r(j')\leq\alpha T+\dav^r(j)$ away from $j$. The overall charge to $D_j$ is at most $3(\dav^1(j)+\cdots+3\dav^r(j))+2\alpha T=3r\dav(j)+2\alpha T$.
    	\item If there are $r$ open facilities in $B_{j'}'$ but at most $r-1$ in $\calF_{j'}\setminus \calF_{j',r}$, we connect the open facilities in $U_{j,1},\dots,U_{j,r-1}$ to $j$. For the $r$-th connection which can be found in $B_{j'}'$, it is at most $A_{j'}+\alpha T$ away from $j$, so we charge $\alpha T$ to $D_j$ and $A_{j'}$ to $X_j$, increasing $\E[X_j]$ by at most $\davt{T}^r(j)$ using~\eqref{eqn:expectation:Aprime}. 
    	\item If there are $r-1$ open facilities in $B_{j'}'$, we connect the facilities in $U_{j,1},\dots,U_{j,r-1}$ to $j$. Notice there are $r$ open facilities in $U_{j',1},\dots,U_{j',r}$, therefore there exists an additional open facility at most $Q_{j'}+\alpha T$ away from $j$ using~\cref{lem-bundle}. We charge $\alpha T$ to $D_j$ and $Q_{j'}$ to $X_j$, increasing $\E[X_j]$ by at most $45\davt{T}^r(j)$ using~\eqref{eqn:expectation:Qprime}.
    \end{enumerate}

Therefore, we charge at most $3r\dav(j)+2\alpha T$ to $D_j$ and $A_{j'}+Q_{j'}$ to $X_j$, and $\E[X_j]$ is increased by at most $46\davt{T}^r(j)$.

\item Finally, if $\dmin^r(j')\leq\beta T<\dmax(j')/15$, we have the following cases.
\begin{enumerate}
	\item If there are $r$ open facilities in $\ball(j',\beta T)$, we charge $(\alpha+\beta)T$ to $D_j$ for the $r$-th connection.
	\item If there are $r$ open facilities in $B_{j'}'$ but at most $r-1$ in $\ball(j',\beta T)$, define the random variable $P_{j'}$ as the total cost of connecting $j'$ to all open facilities in $B_{j'}'\setminus \ball(j',\beta T)$. The $r$-th connection cost for $j$ is at most $P_{j'}+\alpha T$ using the triangle inequality, so we directly charge $P_{j'}$ to $X_j$ and charge $\alpha T$ to $D_j$.
	\item If there are $r-1$ open facilities in $B_{j'}'$, there exists a facility at most $3\dmax(j')$ away from $j'$. Define the random variable $M_{j'}$ as the cost of connecting $j'$ to this open facility (0 if there are $r$ open facilities in $B_{j'}'$). We charge $\alpha T$ to $D_j$ and $M_{j'}$ to $X_j$.
\end{enumerate}

In the three cases above, we charge at most $3r\dav(j)+(\alpha+\beta)T$ to $D_j$ and $P_{j'}+M_{j'}$ to $X_j$. Next we need to bound the expectation of $P_{j'}$ and $M_{j'}$. We first obtain
\begin{align}
\E[P_{j'}]&=\sum_{i\in B_{j'}'\setminus \ball(j',\beta T)}x_{ij'}d(i,j')\leq\sum_{i\in\calF_{j'}\setminus \ball(j',\beta T)}x_{ij'}d(i,j')\notag\\
&\leq\sum_{i\in\calF_{j}\setminus \ball(j',\beta T)}x_{ij}d(i,j')\notag\\
&\leq\frac{\beta}{\beta-\alpha}\sum_{i\in\calF_{j}\setminus \ball(j,T)}x_{ij}d(i,j)=\frac{\beta}{\beta-\alpha}\sum_{p\in[r]}\davt{T}^p(j),\label{eqn:expectation:P}
\end{align}
where the second inequality is because $x$ is optimal for $j'$, so the amount of assignment of $j'$ outside $\ball(j',\beta T)$ is at most that of $j$ outside $\ball(j',\beta T)$, and replacing $x_{ij'}$ with $x_{ij}$ cannot make the weighted sum decrease. The third inequality is because for any $i$ such that $d(i,j')>\beta T$, we have
\[\frac{d(i,j')}{d(i,j)}\leq\frac{d(i,j')}{d(i,j')-d(j,j')}\leq\frac{d(i,j')}{d(i,j')-\alpha T}<\frac{\beta}{\beta-\alpha},\]
and $\alpha+1<\beta,d(j,j')<\alpha\Rightarrow\ball(j,T)\subseteq\ball(j',\beta T)$. We note that a similar argument can be found in~\cite{byrka2018constant}.
For $M_{j'}$, because $\dmin^r(j')\leq\beta T<\dmax(j')/15$, we obtain
\[\davt{\beta T}^r(j')\geq\davt{\frac{\dmax(j')}{15}}^r(j')\geq\sum_{\substack{i\in\calF_{j',r}\\d(i,j')>\frac{\dmax(j')}{15}}}x_{ij}d(i,j)
\geq\frac{\dmax(j')}{15}(r-y(B_{j'})).\]
One then has,
\begin{align}
	\E[M_{j'}]&\leq3\dmax(j')(r-y(B_{j'}'))\leq45\davt{\beta T}^r(j')\notag\\
	&=45\sum_{i\in\calF_{j'}\setminus \ball(j',\beta T)}x_{ij'}d(i,j')
	\leq45\sum_{i\in\calF_{j}\setminus \ball(j',\beta T)}x_{ij}d(i,j')\notag\\
	&\leq\frac{45\beta}{\beta-\alpha}\sum_{i\in\calF_{j}\setminus \ball(j,T)}x_{ij}d(i,j)=\frac{45\beta}{\beta-\alpha}\sum_{p\in[r]}\davt{T}^p(j),\label{eqn:expectation:M}
\end{align}
and thus $\E[X_j]\leq\frac{46\beta}{\beta-\alpha}\sum_{p\in[r]}\davt{T}^p(j)$ follows by combining~\eqref{eqn:expectation:P} and~\eqref{eqn:expectation:M}.
\end{enumerate} 

\subparagraph{(B) $d(j,j')\geq\alpha T$.} We have
\[\alpha T\leq d(j,j')\leq6\dav^r(j)\leq6\davt{T}^r(j)+6T\Rightarrow T\leq\frac{6}{\alpha-6}\davt{T}^r(j).\]

Using the triangle inequality, since $d(j,j')\leq6\davt{T}^r(j)+6T$, we first charge $6T$ to $D_j$ and $6\davt{T}^r(j)$ to $X_j$ with probability 1.
For the $r$-th connection from facilities that serve $j'$, by reusing $A_{j'}$ and $Q_{j'}$ and charging them to $X_j$, $\E[X_j]$ is further increased by at most
\begin{align*}
    &\sum_{i\in B_{j'}'\cap\calF_{j',r}}x_{ij'}d(i,j')+3\dmax(j')(r-y(B_{j'}'))\\
    &\leq46\dav^r(j')\leq46\dav^r(j)\leq46T+46\davt{T}^r(j),
\end{align*}
and since $T\leq6\davt{T}^r(j)/(\alpha-6)$, we have
\[\E[X_j]\leq\left(52+46\cdot\frac{6}{\alpha-6}\right)\davt{T}^r(j)=\frac{52\alpha-36}{\alpha-6}\davt{T}^r(j).\]

\subparagraph{Summary for $D\setminus D'$.} We charge $D_j\leq3r_j\dav(j)+(\alpha+45\beta)T$ and charge up $X_j$ such that $\E[X_j]\leq\max\left\{\frac{46\beta}{\beta-\alpha},\frac{52\alpha-36}{\alpha-6}\right\}\sum_{p\in[r_j]}\davt{T}^p(j)$ for each $j\in D\setminus D'$.

\paragraph{(IV) Conclusion.}
To summarize all the discussion above, we take the maximum of all values for $\calC\setminus D,\,D'$ and $D\setminus D'$, obtaining the following for each $j\in\calC$,
\begin{align*}D_j&\leq3r_j\dav(j)+\max\{135,\alpha+45\beta\}\cdot T,\\
\E[X_j]&\leq \max\left\{135,\frac{46\beta}{\beta-\alpha},\frac{52\alpha-36}{\alpha-6}\right\}\cdot\sum_{i\in\calF}x_{ij}\calL_T(i,j).\end{align*}

By letting $\beta=9.51,\alpha=7.58$, we have the lemma.
\end{proof}

%%>><<%%

\end{document}